\newcommand{\F}{{\mathsf F}}
\newcommand{\bA}{{\boldsymbol A}}
\newcommand{\bu}{{\boldsymbol u}}
\newcommand{\bv}{{\boldsymbol v}}
\newcommand{\ba}{{\boldsymbol a}}
\newcommand{\bh}{{\boldsymbol h}}
\newcommand{\by}{{\boldsymbol y}}
\newcommand{\bY}{{\boldsymbol Y}}
\newcommand{\bU}{{\boldsymbol U}}
\newcommand{\bX}{{\boldsymbol X}}
\newcommand{\bx}{{\boldsymbol x}}
\newcommand{\bV}{{\boldsymbol V}}
\newcommand{\bw}{{\boldsymbol w}}
\newcommand{\bI}{{\boldsymbol I}}
\newcommand{\bP}{{\boldsymbol P}}
\newcommand{\bs}{{\boldsymbol s}}
\newcommand{\bS}{{\boldsymbol S}}
\newcommand{\bR}{{\boldsymbol R}}
\newcommand{\bZ}{{\boldsymbol Z}}
\newcommand{\bQ}{{\boldsymbol Q}}
\newcommand{\bH}{{\boldsymbol H}}
\newcommand{\bM}{{\boldsymbol M}}
\newcommand{\bW}{{\boldsymbol W}}
\newcommand{\bT}{{\boldsymbol T}}
\newcommand{\argmin}{\mathrm{argmin}}
\theoremstyle{plain}\newtheorem{lemma}{\textbf{Lemma}}\newtheorem{theorem}{\textbf{Theorem}}\newtheorem{corollary}{\textbf{Corollary}}
\theoremstyle{definition}
\theoremstyle{remark}\newtheorem{remark}{\textbf{Remark}}
\begin{document}

\title{Nonconvex Matrix Factorization from Rank-One Measurements}



\author{Yuanxin Li\thanks{Department of Electrical and Computer Engineering, Carnegie Mellon University, Pittsburgh, PA 15213, USA; Email: \href{mailto:yuanxinl@andrew.cmu.edu}{yuanxinl@andrew.cmu.edu}} \and Cong Ma\thanks{Department of Operations Research and Financial Engineering, Princeton University, Princeton, NJ 08544, USA; Email: \href{mailto:congm@princeton.edu}{congm@princeton.edu}} \and Yuxin Chen\thanks{Department of Electrical Engineering, Princeton University, Princeton, NJ 08544, USA; Email: \href{mailto:yuxin.chen@princeton.edu}{yuxin.chen@princeton.edu}} \and Yuejie Chi\thanks{Department of Electrical and Computer Engineering, Carnegie Mellon University, Pittsburgh, PA 15213, USA; Email: \href{mailto:yuejiechi@cmu.edu}{yuejiechi@cmu.edu}}}

\date{\today}

\maketitle

\begin{abstract}
We consider the problem of recovering low-rank matrices from random rank-one measurements, which spans numerous applications including covariance sketching, phase retrieval, quantum state tomography, and learning shallow polynomial neural networks, among others. Our approach is to directly estimate the low-rank factor by minimizing a nonconvex quadratic loss function via vanilla gradient descent, following a tailored spectral initialization. When the true rank is small, this algorithm is guaranteed to converge to the ground truth (up to global ambiguity) with near-optimal sample complexity and computational complexity. To the best of our knowledge, this is the first guarantee that achieves near-optimality in both metrics. In particular, the key enabler of near-optimal computational guarantees is an implicit regularization phenomenon: without explicit regularization, both spectral initialization and the gradient descent iterates automatically stay within a region incoherent with the measurement vectors. This feature allows one to employ much more aggressive step sizes compared with the ones suggested in prior literature, without the need of sample splitting.
\end{abstract}


\textbf{Keywords:} matrix factorization, rank-one measurements, gradient descent, nonconvex optimization

\section{Introduction}

This paper is concerned with estimating a low-rank positive semidefinite matrix $\bM^{\natural}\in \mathbb{R}^{n\times n}$ from a few {\em rank-one measurements}. Specifically, suppose that the matrix of interest can be factorized as $$\bM^{\natural}=\bX^{\natural}\bX^{\natural\top} \in \mathbb{R}^{n\times n},$$ 
where 
$\bX^{\natural}\in\mathbb{R}^{n\times r}$ ($r\ll n$) denotes the low-rank factor. We collect $m$ measurements $\{y_i\}_{i=1}^{m}$ about $\bM^{\natural}$ taking the form 
\begin{equation*}
y_i = \ba_i^{\top} \bM^{\natural} \ba_i = \big\| \ba_i^{\top}\bX^{\natural} \big\|_2^2, \quad i = 1,\cdots,m,
\end{equation*}
where $\left\{\ba_{i} \in \mathbb{R}^{n}\right\}_{i=1}^{m}$ represent the measurement vectors known {\em a priori}. 
One can think of $\{\ba_i \ba_i^{\top}\}_{i=1}^{m}$ as a set of linear sensing matrices (so that $y_i = \langle \ba_i \ba_i^{\top}, \bM^{\natural} \rangle$), which are all rank-one\footnote{Given that $y_i$ is a quadratic function with respect to both $\bX^{\natural}$ and $\ba_i$, the measurement scheme is also referred to as {\em quadratic sampling}.}.  The goal is to recover $\bM^{\natural}$, or equivalently, the low-rank factor $\bX^{\natural}$, from a limited number of rank-one measurements. 
This problem spans a variety of important practical applications, with a few examples listed below.
\begin{itemize}
	\item \textbf{Covariance sketching.}  Consider a zero-mean data stream $\{\bx_t\}_{t\in \mathcal{T}}$, whose covariance matrix  $\bM^{\natural}:=\mathbb{E}[\bx_t\bx_t^{\top}]$ is (approximately) low-rank. To estimate the covariance matrix, one can collect $m$ aggregated quadratic sketches of the form 
\begin{equation*}
y_i = \frac{1}{|\mathcal{T}|}\sum_{t\in \mathcal{T}} (\ba_i^{\top} \bx_t)^2, 
\end{equation*}
which converges to  $\mathbb{E}[(\ba_i^{\top}\bx_t)^2]= \ba_i^{\top} \bM^{\natural} \ba_i$ as the number of data instances grows. This quadratic covariance sketching scheme can be performed under minimal storage requirement and low sketching cost. See \cite{chen2013exact} for detailed descriptions.

	%

	\item \textbf{Phase retrieval and mixed linear regression.} This problem  subsumes as a special case the phase retrieval problem \cite{candes2015phase}, which aims to estimate an unknown signal $\bx^{\natural} \in \mathbb{R}^n$ from intensity measurements (which can often be modeled or approximated by quadratic measurements of the form $y_i=(\ba_i^{\top}\bx^{\natural})^2$). 
		This problem has found numerous applications in X-ray crystallography, optical imaging, astronomy, etc.  Another related problem in machine learning is mixed linear regression with two components, where the data one collects are generated from one of two unknown regressors; see \cite{chen2014convex} for precise formulation.

\item \textbf{Quantum state tomography.} Estimating the density operator of a quantum system can be formulated as a low-rank positive semidefinite matrix recovery problem using rank-one measurements, when the density operator is {\em almost pure} \cite{kueng2017low}. A problem of similar mathematical formulation occurs in phase space tomography \cite{tian2012experimental}, where the goal is to reconstruct the correlation function of a wave field.
\item \textbf{Learning shallow polynomial neural networks.} Taking $\{\ba_i,y_i\}_{i=1}^m$ as training data, our problem is equivalent to learning a one-hidden-layer, fully-connected neural network with a quadratic activation function \cite{livni2014computational,soltanolkotabi2017theoretical,soltani2018towards}, where the output of the network is expressed as $y = \sum_{i=1}^r  \sigma(\ba^{\top}\bx_i^{\natural})$ with $\bX^{\natural}=[\bx_1^{\natural}, \bx_2^{\natural}, \cdots,\bx_r^{\natural}] \in \mathbb{R}^{n\times r}$ and the activation function $\sigma(z) =z^2$. 
\end{itemize}

\subsection{Main Contributions}

Due to the quadratic nature of the measurements,  the natural least-squares empirical risk  formulation is highly nonconvex and in general challenging to solve. 
To be more specific, consider the following optimization problem:
\begin{equation}\label{eq:loss}
\text{minimize}_{\bX\in\mathbb{R}^{n\times r}} \quad f\left( \bX \right) : = \frac{1}{4m} \sum_{i=1}^{m} \left(y_{i} - \big\Vert \ba_{i}^{\top}\bX \big\Vert_{2}^{2}\right)^{2},
\end{equation}
which aims to optimize a degree-4 polynomial in $\bX$ and is NP hard in general. 
The problem, however, may become tractable under certain random designs, and may even be solvable using simple methods like gradient descent. 
Our main finding is the following: under i.i.d.~Gaussian design (i.e.~$\ba_i\sim \mathcal{N}(\bm{0},\bI_{n})$), {\em vanilla gradient descent} combined with spectral initialization  achieves appealing performance guarantees both statistically and computationally.


\begin{itemize}
	\item  Statistically, we show that gradient descent converges exactly to the true factor $\bX^{\natural}$ (modulo unrecoverable global ambiguity), as soon as the number of measurements exceeds the order of $O(nr^4\log n)$. 
	       When $r$ is fixed independent of $n$, this sample complexity is near-optimal up to some logarithmic factor.   		
		
        \item Computationally, to achieve $\epsilon$-accuracy, gradient descent requires an iteration complexity of $O(r^2  \log(1/\epsilon))$ (up to logarithmic factors), with a per-iteration cost of $O(mnr)$. 
	       When $r$ is fixed independent of $m$ and $n$, the computational complexity scales linearly with $mn$, which is proportional to the time taken to read all data.   	       


\end{itemize}

These findings significantly improve upon existing results that  require either resampling (which is not sample-efficient and is not the algorithm one actually runs in practice \cite{zhong2015efficient,lin2016non,soltani2018towards}), or high iteration complexity (which results in high computation cost \cite{sanghavi2017local}). In particular, our work is most related to \cite{sanghavi2017local} that also studied the effectiveness of gradient descent. The results in \cite{sanghavi2017local} require a sample complexity on the order of $nr^6\log^{2}{n}$,  as well as an iteration complexity of $O(n^{4}r^{2} \log(1/\epsilon))$ (up to logarithmic factors) to attain $\epsilon$-accuracy. In comparison, our theory  improves the sample complexity to $O(nr^4\log n)$ and, perhaps more importantly, establishes a much lower iteration complexity  of $O( r^{2} \log(1/\epsilon))$ (up to logarithmic factor). To the best of our knowledge, this work is the first nonconvex algorithm (without resampling) that achieves  both near-optimal  statistical and computational guarantees with respect to $n$. 


\subsection{Surprising Effectiveness of Gradient Descent}


Recently, gradient descent has been widely employed to address various nonconvex optimization problems due to its appealing efficiency from both statistical and computational perspectives. Despite the nonconvexity of \eqref{eq:loss}, \cite{sanghavi2017local} showed that within a local neighborhood of $\bX^{\natural}$, where $\bX$ satisfies
\begin{align}
	\big\Vert\bX - \bX^{\natural}\big\Vert_{\F} & \le \frac{1}{24}\frac{\sigma_{r}^{2}\left(\bX^{\natural}\right)}{\left\Vert \bX^{\natural} \right\Vert_{\F}}, \label{eq:local-region}
\end{align}
$f(\bX)$ behaves like a strongly convex function, at least along certain descending directions. However, this region itself is not enough to guarantee computational efficiency, and consequently, the smoothness parameter derived in \cite{sanghavi2017local} is as large as $n^2$ (even ignoring additional polynomial factors in $r$), leading to a step size as small as $O(1/n^4)$ and an iteration complexity of $O(n^4 \log(1/\epsilon))$. These are fairly pessimistic. 

In order to improve computational guarantees, it might be tempting to employ appropriately designed regularization operations --- such as truncation \cite{ChenCandes15solving} and projection \cite{chen2015fast}. These explicit regularization operations  are capable of stabilizing the search direction, and make sure the whole trajectory is in a basin of attraction with benign curvatures surrounding the ground truth. However, such explicit regularizations complicate algorithm implementations, as they introduce more tuning parameters.

Our work is inspired by \cite{ma2017implicit}, which uncovers the ``implicit regularization'' phenomenon of vanilla gradient descent for nonconvex estimation problems such as phase retrieval and low-rank matrix completion. In words, even without extra regularization operations,  vanilla gradient descent always follows a path within some region around the global optimum with nice geometric structure, at least along certain directions. The current paper demonstrates that a similar phenomenon persists in low-rank matrix factorization from rank-one measurements. 

To describe this phenomenon in a precise manner, we need to specify which region enjoys the desired geometric properties. To this end, 
consider a local region around $\bX^{\natural}$ where $\bX$ is ``incoherent''\footnote{This is called incoherent because if $\bX$ is aligned (and hence coherent) with the sensing vectors,  $\big\Vert \ba_{l}^{\top}\big(\bX - \bX^{\natural}\big) \big\Vert_{2}$ can be $O(\sqrt{n})$ times larger than the right-hand side of \eqref{eq:incoherent}.} with all  sensing vectors in the following sense:
\begin{equation}
	\label{eq:incoherent} 
	\max_{1\le l\le m} \big\Vert \ba_{l}^{\top}\big(\bX - \bX^{\natural}\big) \big\Vert_{2}  \le \frac{1}{24} \sqrt{\log{n}}  \cdot \frac{\sigma_{r}^{2}\left(\bX^{\natural}\right)}{\Vert \bX^{\natural} \Vert_{\F}}. 
\end{equation}
  We term the intersection of \eqref{eq:local-region} and \eqref{eq:incoherent} the {\em Region of Incoherence and Contraction} (RIC). 
     The nice feature of the RIC is this: within this region, 
  the loss function $f(\bX)$ enjoys a smoothness parameter that scales as $O(\max\{r, \log n\})$ (namely, $\|\nabla^2 f(\bx) \| \lesssim \max\{r, \log n\}$, which is much smaller than $O(n^2)$ provided in \cite{sanghavi2017local}).  As is well known,  a region enjoying a smaller smoothness parameter  enables more aggressive progression of gradient descent. 
  

A key question remains as to how to prove that the trajectory of gradient descent never leaves the RIC. This is, unfortunately, not guaranteed by standard optimization theory, which only ensures contraction of the Euclidean error. To address this issue, we resort to the leave-one-out trick \cite{ma2017implicit,zhong2017near,chen2017spectral} that produces  auxiliary trajectories of gradient descent that use all but one sample.  This allows us to establish the incoherence condition by leveraging the statistical independence of the leave-one-out trajectory w.r.t.~the corresponding sensing vector that has been left out. Our theory refines the leave-one-out argument and further establishes linear contraction in terms of the entry-wise prediction error. 

\subsection{Notations}
We use boldface lowercase (resp.~uppercase) letters to represent vectors (resp.~matrices). We denote by $\left\Vert\bx\right\Vert_{2}$ the $\ell_2$ norm of a vector $\bx$, and  $\bX^{\top}$, $\left\Vert\bX\right\Vert$ and $\left\Vert\bX\right\Vert_{\F}$  the transpose, the spectral norm and the Frobenius norm of a matrix $\bX$, respectively. The $k$th largest singular value of a matrix $\bX$ is denoted by $\sigma_{k}\left(\bX\right)$. Moreover, the inner product between two matrices $\bX$ and $\bY$ is defined as $\langle \bX, \bY\rangle = \mathrm{Tr}\left(\bY^{\top}\bX\right)$, where $\mathrm{Tr}\left(\cdot\right)$ is the trace. We also use $\mathrm{vec}(\bV)$ to denote vectorization of a matrix $\bV$. 
The notation $f(n) \lesssim g(n)$ or $f(n)=O(g(n))$ means that  there exists a universal constant $c >
0$ such that $|f(n)| \leq c|g(n)|$.
In addition, we use $c$ and $C$ with different subscripts to represent positive numerical constants, whose values may change from line to line.




\section{Algorithms and Main Results}\label{sec_problem_formulation}

To begin with, we present the formal problem setup. Suppose we are given a set of $m$ rank-one measurements as follows
\begin{equation}
y_{i} =  \big\Vert \ba_{i}^{\top}\bX^{\natural} \big\Vert_{2}^{2}, \qquad i = 1,\cdots,m,
\end{equation}
where $\ba_{i}\in\mathbb{R}^{n}$ is the $i$th sensing vector composed of i.i.d.~standard Gaussian entries, i.e.~$\ba_{i} \sim \mathcal{N}\left(\boldsymbol{0}, \bI_n\right)$, for $i=1,\cdots,m$. The underlying ground truth $\bX^{\natural}\in\mathbb{R}^{n\times r}$ is assumed to have full column rank but not necessarily having orthogonal columns. Define the condition number of $\bM^{\natural} = \bX^{\natural}\bX^{\natural\top}$ as
\begin{equation}
\kappa = \frac{\sigma_{1}^{2}\left(\bX^{\natural}\right)}{\sigma_{r}^{2}\left(\bX^{\natural}\right)}.
\end{equation} 
Throughout this paper, we assume the condition number is bounded by some constant independent of $n$ and $r$, i.e.~$\kappa = O(1)$. Our goal is to recover $\bX^{\natural}$, up to (unrecoverable) orthonormal transformation, from the measurements $\by=\left\{y_{i}\right\}_{i=1}^{m}$ in a  statistically and computationally efficient manner.

\subsection{Vanilla Gradient Descent}
 
The algorithm studied herein  is a combination of vanilla gradient descent and a judiciously designed spectral initialization.  Specifically, consider minimizing the squared loss:
\begin{equation}
 f\left( \bX \right) : = \frac{1}{4m} \sum_{i=1}^{m} \left(y_{i} - \big\Vert \ba_{i}^{\top}\bX\big\Vert_{2}^{2}\right)^{2},
\end{equation}
which is a nonconvex function. We attempt to optimize this function iteratively via gradient descent
\begin{equation}
\bX_{t+1} = \bX_{t} - \mu_{t} \nabla f\left(\bX_{t}\right), \qquad t = 0,1,\cdots,
\end{equation} 
where $\bX_{t}$ denotes the estimate in the $t$th iteration, $\mu_{t}$ is the step size/learning rate, and the gradient $\nabla f(\bX)$ is given by
 \begin{equation}
\nabla f\left(\bX\right) =  \frac{1}{m} \sum_{i=1}^{m } \left( \big\Vert \ba_{i}^{\top}\bX \big\Vert_{2}^{2} - y_{i} \right) \ba_{i}\ba_{i}^{\top}\bX.
\end{equation}

For initialization, similar to \cite{sanghavi2017local},\footnote{Compared with \cite{sanghavi2017local}, when setting the eigenvalues in \eqref{equ_initial_eigenvalue}, we use the sample mean $\lambda$ rather than $\lambda_{r+1}\left(\bY\right)$ to estimate $\frac{1}{2} \Vert \bX^{\natural} \Vert_{\F}^{2}$.} we apply the spectral method, which sets the columns of $\bX_0$ as the top-$r$ eigenvectors --- properly scaled --- of a matrix $\bY$ as defined in \eqref{def_Y}. The rationale is this: the mean of $\bY$ is given by
$$\mathbb{E}\left[\bY\right] =  \frac{1}{2} \big\Vert \bX^{\natural}\big\Vert_{\F}^{2}\,\bI_{n} + \bX^{\natural}\bX^{\natural \top}, $$
and hence the principal components of $\bY$ form a reasonable estimate of $\bX^{\natural}$, provided that there are sufficiently many samples.  The full algorithm is described in Algorithm~\ref{alg:ncvx_sketching}.



\begin{algorithm}[t]

\caption{Gradient Descent with Spectral Initialization}
\label{alg:ncvx_sketching}
\noindent \textbf{Input:} measurements $\by = \left\{y_{i}\right\}_{i=1}^{m}$, and sensing vectors $\left\{\ba_{i}\right\}_{i=1}^{m}$.

\noindent \textbf{Parameters:} step size $\mu_{t}$, rank $r$, and number of iterations $T$.

\noindent \textbf{Initialization:} set $\bX_{0} = \bZ_{0}\boldsymbol{\Lambda}_{0}^{1/2}$, where the columns of $\bZ_{0} \in\mathbb{R}^{n\times r}$ contain the normalized eigenvectors corresponding to the $r$ largest eigenvalues of the matrix 
\begin{equation}\label{def_Y}
\bY = \frac{1}{2m} \sum_{i=1}^{m} y_{i}\ba_{i}\ba_{i}^{\top},
\end{equation}
and $\boldsymbol{\Lambda}_{0}$ is an $r\times r$ diagonal matrix, with the entries on the diagonal given as 
\begin{equation}\label{equ_initial_eigenvalue}
\left[\boldsymbol{\Lambda}_{0}\right]_{i} = \lambda_{i}\big(\bY\big) - \lambda, \quad  \ i=1,\cdots,r,
\end{equation}
where $\lambda = \frac{1}{2m} \sum_{i=1}^{m} y_{i}$ and $\lambda_{i}\left(\bY\right)$ is the $i$th largest eigenvalue of $\bY$.

\noindent \textbf{Gradient loop:} for $t = 0:1:T-1$, do 
\begin{equation}
\bX_{t+1} = \bX_{t} - \mu_{t} \cdot \frac{1}{m} \sum_{i=1}^{m } \left( \big\Vert \ba_{i}^{\top}\bX_{t} \big\Vert_{2}^{2} - y_{i} \right) \ba_{i}\ba_{i}^{\top}\bX_{t}.
\end{equation}

\noindent \textbf{Output:} $\bX_{T}$.

\end{algorithm}


\subsection{Performance Guarantees}

Before proceeding to our main results, we specify the metric used to assess the estimation error of the running iterates. Since $\left(\bX^{\natural} \bP\right)\left(\bX^{\natural} \bP\right)^{\top} = \bX^{\natural}\bX^{\natural \top}$ for any orthonormal matrix $\bP\in\mathbb{R}^{r\times r}$, $\bX^{\natural}$ is recoverable up to orthonormal transforms. Hence, we define the error of the $t$th iterate $\bX_t$ as
\begin{equation}
\mathrm{dist} \big(\bX_{t}, \bX^{\natural}\big) = \big\Vert \bX_{t}\bQ_{t} - \bX^{\natural}\big\Vert_{\F},
\end{equation}
where $\bQ_t$ is given by
\begin{equation}
\bQ_{t} := \argmin_{\bP\in\mathcal{O}^{r\times r}} \big\Vert \bX_{t}\bP - \bX^{\natural}\big\Vert_{\F}
	\label{eq:defn-Qt}
\end{equation}
with $\mathcal{O}^{r\times r}$ denoting the set of all $r \times r$ orthonormal matrices. Accordingly, we have the following theoretical performance guarantees of Algorithm~\ref{alg:ncvx_sketching}.

\begin{theorem}
\label{main_theorem}
Suppose that $m\ge c  n r^{3} (r + \sqrt{\kappa}) \kappa^{3} \log{ n}$ with some large enough constant $c>0$, and that the step size obeys $0<\mu_{t}:= \mu =  \frac{c_{4}}{( r\kappa + \log{n})^{2} \sigma_r^2(\bX^{\natural})}$. Then with probability at least $1 - O(mn^{-7})$, the iterates satisfy
\begin{equation}\label{equ_main_theorem_estbound}
\mathrm{dist} \big(\bX_{t}, \bX^{\natural}\big) \leq c_1 \left( 1-  0.5\mu \sigma_r^2(\bX^{\natural})  \right)^t    \frac{\sigma_{r}^{2}\left(\bX^{\natural}\right)}{\left\Vert\bX^{\natural}\right\Vert_{\F}}, 
\end{equation}
for all $t \ge 0$. In addition, 
\begin{equation}\label{equ_main_theorem_incoherence}
\max_{1\le l\le m} \left\Vert \ba_{l}^{\top} \big( \bX_{t}\bQ_{t} - \bX^{\natural} \big) \right\Vert_{2} \le c_{2} \left( 1-  0.5\mu \sigma_r^2(\bX^{\natural})  \right)^t \sqrt{\log{n}} \cdot \frac{\sigma_{r}^{2}\left(\bX^{\natural}\right)}{\left\Vert\bX^{\natural}\right\Vert_{\F}}, 
\end{equation}
for all $0 \le t \le c_{3}n^{5}$. Here, $c_{1},\cdots,c_{4}$ are some universal positive constants.
\end{theorem}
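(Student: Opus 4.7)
The plan is to prove both conclusions \eqref{equ_main_theorem_estbound} and \eqref{equ_main_theorem_incoherence} simultaneously by induction on $t$, using the leave-one-out strategy to decouple the statistical dependence between the iterates $\bX_{t}$ and the sensing vectors $\{\ba_{l}\}$. The central idea is that the loss $f$ is locally well-conditioned within the RIC defined by \eqref{eq:local-region}--\eqref{eq:incoherent}, so once we show the entire trajectory stays in this region, standard arguments for strongly convex and smooth optimization will yield linear contraction at the advertised step size $\mu \asymp 1/((r\kappa+\log n)^{2}\sigma_{r}^{2}(\bX^{\natural}))$.

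For the base case $t=0$, I would analyze the spectral initialization. Since $\mathbb{E}[\bY] = \tfrac{1}{2}\|\bX^{\natural}\|_{\F}^{2}\bI_{n} + \bX^{\natural}\bX^{\natural\top}$ and $\lambda$ concentrates around $\tfrac{1}{2}\|\bX^{\natural}\|_{\F}^{2}$, matrix concentration (e.g., truncated Bernstein applied to $y_{i}\ba_{i}\ba_{i}^{\top}$ whose heavy tails must be handled carefully since $y_{i}\|\ba_{i}\|_{2}^{2}$ has only a few finite moments) gives $\|\bY - \mathbb{E}[\bY]\| \ll \sigma_{r}^{2}(\bX^{\natural})$ under the stated sample complexity. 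Combined with Davis--Kahan, this places $\bX_{0}$ in the Frobenius-contractive region. For the incoherence of $\bX_{0}$, I would either (i) use a leave-one-out spectral initialization $\bX_{0}^{(l)}$ independent of $\ba_{l}$ and a perturbation bound of the type $\|\bX_{0}\bR_{0}^{(l)} - \bX_{0}^{(l)}\|_{\F}$ is small, or (ii) argue directly using that $\bY - y_{l}\ba_{l}\ba_{l}^{\top}/(2m)$ is nearly independent of $\ba_{l}$ up to a rank-one perturbation whose eigenvector effect is negligible.

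For the inductive step, I would maintain four coupled hypotheses at iteration $t$: (H1) Frobenius contraction $\|\bX_{t}\bQ_{t}-\bX^{\natural}\|_{\F}$ as in \eqref{equ_main_theorem_estbound}; (H2) incoherence as in \eqref{equ_main_theorem_incoherence}; (H3) for every $l\in\{1,\dots,m\}$, a leave-one-out iterate $\bX_{t}^{(l)}$ obtained by running the same gradient descent on the loss with the $l$-th term replaced by its expectation (so $\bX_{t}^{(l)}$ is independent of $\ba_{l}$) satisfies $\|\bX_{t}\bR_{t}^{(l)} - \bX_{t}^{(l)}\|_{\F}$ small, where $\bR_{t}^{(l)}$ is the best orthonormal alignment; (H4) each $\bX_{t}^{(l)}$ itself satisfies a Frobenius error bound. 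Propagating (H1) uses the restricted strong convexity and the bound $\|\nabla^{2} f\| \lesssim \max\{r,\log n\}\cdot\|\bX^{\natural}\|_{\F}^{2}$ that holds throughout the RIC; a standard one-step contraction $\|\bX_{t+1}\bQ_{t+1}-\bX^{\natural}\|_{\F}^{2} \le (1 - \mu\sigma_{r}^{2}(\bX^{\natural}))\|\bX_{t}\bQ_{t}-\bX^{\natural}\|_{\F}^{2}$ then follows from the co-coercivity inequality. Propagating (H2) uses the decomposition
\begin{equation*}
\ba_{l}^{\top}\bigl(\bX_{t+1}\bQ_{t+1}-\bX^{\natural}\bigr) = \ba_{l}^{\top}\bigl(\bX_{t+1}\bQ_{t+1} - \bX_{t+1}^{(l)}\bR_{t+1}^{(l)}\bigr) + \ba_{l}^{\top}\bigl(\bX_{t+1}^{(l)}\bR_{t+1}^{(l)} - \bX^{\natural}\bigr),
\end{equation*}
where the first term is controlled by Cauchy--Schwarz with $\|\ba_{l}\|_{2}\lesssim\sqrt{n}$ together with (H3), while the second term, by independence, satisfies a sub-Gaussian tail giving $\lesssim \sqrt{\log n}\cdot\|\bX_{t+1}^{(l)}\bR_{t+1}^{(l)}-\bX^{\natural}\|_{\F}$, which plugs into (H4).

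The main obstacle, as usual in this style of argument, is controlling the leave-one-out discrepancy (H3). Expanding one gradient step gives
\begin{equation*}
\bX_{t+1} - \bX_{t+1}^{(l)} \bigl(\bR_{t+1}^{(l)}\bigr)^{\top} = \bigl(\bX_{t} - \bX_{t}^{(l)}\bigl(\bR_{t+1}^{(l)}\bigr)^{\top}\bigr) - \mu\bigl(\nabla f(\bX_{t}) - \nabla f^{(l)}(\bX_{t}^{(l)})\bigr)\bigl(\bR_{t+1}^{(l)}\bigr)^{\top},
\end{equation*}
and the difference of gradients further splits into a contractive piece driven by the shared data plus an extra ``leakage'' term proportional to $\mu\bigl(\|\ba_{l}^{\top}\bX_{t}^{(l)}\|_{2}^{2} - \|\ba_{l}^{\top}\bX^{\natural}\|_{2}^{2}\bigr)\ba_{l}\ba_{l}^{\top}\bX_{t}^{(l)}/m$ arising from the $l$-th summand being present in $\nabla f$ but not in $\nabla f^{(l)}$. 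The crucial point is that this leakage term has magnitude $\lesssim (\mu/m)\cdot\|\ba_{l}\|_{2}\cdot\|\ba_{l}^{\top}\bX_{t}^{(l)}\|_{2}\cdot\|\ba_{l}^{\top}(\bX_{t}^{(l)}\bQ_{t}^{(l)}-\bX^{\natural})\|_{2}\cdot(\|\ba_{l}^{\top}\bX_{t}^{(l)}\|_{2}+\|\ba_{l}^{\top}\bX^{\natural}\|_{2})$, and by independence all factors involving $\ba_{l}$ concentrate at the benign $\sqrt{\log n}$ scale rather than the worst-case $\sqrt{n}$ scale; this is precisely where the independence purchased by the leave-one-out construction pays off. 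Closing the induction requires the contraction factor for the shared part to beat the leakage contribution by a comfortable margin, which is where the sample complexity $m \gtrsim nr^{3}(r+\sqrt{\kappa})\kappa^{3}\log n$ enters, and where delicate bookkeeping of the $r$ and $\kappa$ factors across the four hypotheses constitutes the bulk of the technical work. A final subtlety is the interplay between the rotations $\bQ_{t}$, $\bQ_{t}^{(l)}$, and $\bR_{t}^{(l)}$: since different quantities are optimally aligned to different references, auxiliary lemmas relating these rotations (based on Wedin-type bounds for the alignment matrices) will be needed to ensure the decomposition above is legitimate.
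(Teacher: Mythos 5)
Your approach mirrors the paper's strategy at the macro level: establish restricted strong convexity and smoothness inside the RIC, introduce leave-one-out iterates to break the statistical dependence between $\bX_t$ and the sensing vectors, and close a multi-hypothesis induction. The hypotheses you label (H1)--(H4) essentially correspond to the paper's \eqref{eq:induction_hyp}: (H1) is \eqref{equ_inductive_att_c1}, (H3) is \eqref{equ_inductive_att_c3}, (H2) is \eqref{equ_inductive_att_c2}, and (H4) is a consequence of (H1)$+$(H3) via a triangle inequality. Your leakage-term identification and its order-of-magnitude bound match the paper's $\bS^{(l)}_{t,2}$ analysis, and your one-step Frobenius contraction via restricted strong convexity matches Lemma~\ref{lemma:lemma_induction}.

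There are two genuine gaps. First, in the decomposition you use to propagate (H2),
\begin{equation*}
\ba_{l}^{\top}\bigl(\bX_{t+1}\bQ_{t+1}-\bX^{\natural}\bigr) = \ba_{l}^{\top}\bigl(\bX_{t+1}\bQ_{t+1} - \bX_{t+1}^{(l)}\bR_{t+1}^{(l)}\bigr) + \ba_{l}^{\top}\bigl(\bX_{t+1}^{(l)}\bR_{t+1}^{(l)} - \bX^{\natural}\bigr),
\end{equation*}
the claim that the second term enjoys a sub-Gaussian tail ``by independence'' is not right as stated: $\bR_{t+1}^{(l)}$ aligns $\bX_{t+1}^{(l)}$ to $\bX_{t+1}\bQ_{t+1}$, and $\bX_{t+1}$ depends on $\ba_l$, so $\bX_{t+1}^{(l)}\bR_{t+1}^{(l)}-\bX^{\natural}$ is \emph{not} independent of $\ba_l$. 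The correct move, as in the paper's Appendix~\ref{proof_incoherence_induction}, is to write the second term using $\bQ_{t+1}^{(l)}$ (the alignment of $\bX_{t+1}^{(l)}$ to $\bX^{\natural}$, which is $\ba_l$-measurable-free), apply Lemma~\ref{lemma_a_log_tight_concen} to the fixed matrix $\bX_{t+1}^{(l)}\bQ_{t+1}^{(l)}-\bX^{\natural}$, and then translate back from $\bQ_{t+1}^{(l)}$ to $\bR_{t+1}^{(l)}$ via the rotation-comparison lemma (\cite[Lemma~37]{ma2017implicit}). You flag the rotation interplay as a ``final subtlety'' but the decomposition as written loses exactly the independence your argument relies on.

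Second, your proposal does not explain why \eqref{equ_main_theorem_incoherence} is asserted only for $0\le t\le c_3 n^5$, nor does it contain the two-phase structure the paper needs. The leave-one-out induction is run only for the first $T_c = O(\max\{\log^2 n, \|\bX^\natural\|_\F^4/\sigma_r^4\}\log n)$ iterations; beyond $T_c$ the Frobenius error has contracted enough that the crude bound $\max_l\|\ba_l^\top(\bX_t\bQ_t-\bX^\natural)\|_2 \le \sqrt{6n}\,\|\bX_t\bQ_t-\bX^\natural\|$ already implies incoherence. The $c_3 n^5$ horizon accounts for the union bound over iterations (and over $l$) of the per-iteration concentration events such as Lemma~\ref{lemma_a_log_tight_concen}, each failing with probability $O(rn^{-13})$. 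Without this phase split and probability budgeting, the inductive argument either cannot terminate or gives no control on the total failure probability. (A minor point: $\|\ba_l\|_2\asymp\sqrt{n}$, not $\sqrt{\log n}$, so only three of the four factors in the leakage bound are at the $\sqrt{\log n}$ scale; the resulting $\sqrt{n}(\log n)^{3/2}/m$ rate still matches the paper's \eqref{equ_induc_leaveapp_s2}.)
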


\begin{remark}
The precise expression of required sample complexity in Theorem~\ref{main_theorem} can be written as $m\ge c   \max\left\{ \frac{\left\Vert \bX^{\natural} \right\Vert_{\F}}{\sigma_{r}\left( \bX^{\natural} \right) } \sqrt{r},  \kappa  \right\} \frac{\left\Vert \bX^{\natural} \right\Vert_{\F}^{5}}{\sigma_{r}^{5}\left( \bX^{\natural} \right) }  n \sqrt{r}  \log{\left(\kappa n\right)}$ with some large enough constant $c>0$. By adjusting constants, with probability at least $1 - O(mn^{-7})$,  \eqref{equ_main_theorem_incoherence} holds for $0\le t \le O(n^{c_{5}})$ in any power $c_{5}\geq 1$.
\end{remark}

Theorem~\ref{main_theorem} has the following implications.
\begin{itemize}
	\item {\em Near-optimal sample complexity when $r$ is fixed:} Theorem~\ref{main_theorem} suggests that spectrally-initialized vanilla gradient descent succeeds as soon as $m=O(nr^4\log n)$. When $r=O(1)$,  this leads to near-optimal sample complexity up to logarithmic factor. In fact, once the spectral initialization is finished, a sample complexity at $m=O(nr^3\log n)$ can guarantee the linear convergence to the global optima. To the best of our knowledge, this outperforms all performance guarantees in the literature obtained for any nonconvex method without requiring {\em resampling}. 

	\item {\em Near-optimal computational complexity:} In order to achieve $\epsilon$-accuracy, i.e.~$\mathrm{dist} \left(\bX_{t}, \bX^{\natural}\right) \leq \epsilon \|\bX\|_{\F}$, it suffices to run gradient descent for $T= O\left(  r^{2} \mathrm{poly}\log(n) \log(1/\epsilon) \right)$ iterations. This results in a total computational complexity of $O\left(mnr^3  \mathrm{poly}\log(n) \log(1/\epsilon) \right)$. 

\item {\em Implicit regularization:} Theorem~\ref{main_theorem} demonstrates that both the spectral initialization and the gradient descent updates provably control the entry-wise error $\max_{1\le l\le m} \left\Vert \ba_{l}^{\top} \big( \bX_{t}\bQ_{t} - \bX^{\natural} \big) \right\Vert_{2}$, and the iterates remain incoherent with respect to all the sensing vectors. In fact, the entry-wise error decreases linearly as well, which is not characterized in \cite{ma2017implicit}.  
 
\end{itemize}

Theorem~\ref{main_theorem} is established using a fixed step size. According to our theoretical analysis, the incoherence condition \eqref{equ_main_theorem_incoherence} has a significant impact on the convergence rate. After a few iterations, the incoherence condition can be bounded independent of $\log n$, which leads to a larger step size and faster convergence. Specifically, we have the following corollary.

\begin{corollary}
Under the same setting of Theorem~\ref{main_theorem}, after $T_{a} = c_{6} \max\{\kappa^{2}r^{2}\log{n}, \log^{3}{n}\}$ iterations, the step size can be relaxed as $0 < \mu_{t}:=\mu =  \frac{c_{7}}{ r^{2} \kappa^{2}  \sigma_r^2(\bX^{\natural})} $, with some universal constant $c_{6}, c_{7}>0$, then the iterates satisfy
\begin{equation}
\mathrm{dist} \big(\bX_{t}, \bX^{\natural}\big) \leq c_{1} \left( 1-  0.5 \mu \sigma_r^2(\bX^{\natural})  \right)^t \frac{\sigma_{r}^{2}\left(\bX^{\natural}\right)}{\left\Vert\bX^{\natural}\right\Vert_{\F}}, 
\end{equation}
for all $t \ge T_{a}$, with probability at least $1 - O(mn^{-7})$.
\end{corollary}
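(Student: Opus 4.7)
The plan is to exploit the entry-wise error bound from Theorem~\ref{main_theorem}: after enough iterations with the theorem's conservative step size, the local geometry tightens to the point where a larger step size becomes admissible, and from there I would restart the same inductive argument with refined parameters.

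\emph{Warm-up phase and refined incoherence.} First, I would run Algorithm~\ref{alg:ncvx_sketching} with the conservative step size $\mu_0 = c_4/((r\kappa + \log n)^2 \sigma_r^2(\bX^\natural))$ for the first $T_a$ iterations; by Theorem~\ref{main_theorem}, both \eqref{equ_main_theorem_estbound} and \eqref{equ_main_theorem_incoherence} hold throughout this phase. Using the inequality $(1-x)^t \le e^{-xt}$ together with the choice $T_a = c_6 \max\{\kappa^2 r^2 \log n, \log^3 n\}$ for a sufficiently large $c_6$, a direct computation gives $(1 - 0.5 \mu_0 \sigma_r^2(\bX^\natural))^{T_a} \le 1/\sqrt{\log n}$, so that \eqref{equ_main_theorem_incoherence} at $t = T_a$ upgrades to
$$\max_{1 \le l \le m} \left\|\ba_l^\top (\bX_{T_a} \bQ_{T_a} - \bX^\natural)\right\|_2 \;\le\; c_2 \cdot \frac{\sigma_r^2(\bX^\natural)}{\|\bX^\natural\|_F}.$$
In other words, the $\sqrt{\log n}$ blow-up in the entry-wise bound has been absorbed by the geometric decay during the warm-up.

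\emph{Refined local smoothness.} With this strengthened entry-wise bound in hand, I would re-examine the Hessian of $f$ inside the (now tighter) RIC. The Hessian quadratic form splits into a term quadratic in $(\bX^\top \ba_i)^\top (\bV^\top \ba_i)$ (controlled via fourth-moment concentration) and a residual term of the form $(\|\bX^\top \ba_i\|_2^2 - y_i)\|\bV^\top \ba_i\|_2^2 = (\ba_i^\top (\bX\bX^\top - \bX^\natural \bX^{\natural \top}) \ba_i)\|\bV^\top \ba_i\|_2^2$, whose magnitude was previously dictated by the $\sqrt{\log n}$ factor on the right-hand side of \eqref{equ_main_theorem_incoherence}. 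Removing that factor shrinks the residual term by a full power of $\log n$, so the effective smoothness parameter governing the local contraction argument of Theorem~\ref{main_theorem} loses its explicit $\log n$ dependence. This justifies enlarging the admissible step size to $\mu = c_7/(r^2 \kappa^2 \sigma_r^2(\bX^\natural))$ while leaving the directional curvature lower bound established in the proof of Theorem~\ref{main_theorem} intact.

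\emph{Continuation with the larger step size.} For $t \ge T_a$, I would substitute $\mu$ in place of $\mu_0$ and rerun the same inductive proof of Theorem~\ref{main_theorem}, including the leave-one-out construction used to uphold the incoherence invariant. Linear contraction of $\mathrm{dist}(\bX_t, \bX^\natural)$ at the faster rate $(1 - 0.5\mu \sigma_r^2(\bX^\natural))^t$ then follows from the refined smoothness paired with the unchanged curvature lower bound, while the tightened entry-wise bound is maintained by the usual leave-one-out concentration argument. The main technical obstacle will be verifying that the enlarged step does not push each iterate too far from its leave-one-out counterpart relative to the concentration threshold on the left-out Gaussian vector; because the Frobenius error continues to shrink geometrically at the faster rate, this ultimately reduces to re-auditing the numerical constants in the proof of Theorem~\ref{main_theorem} after the substitution $\mu_0 \to \mu$ and checking that every invariant in the induction carries through unchanged.
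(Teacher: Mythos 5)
Your high-level strategy — warm up with the conservative step size, upgrade the incoherence bound, tighten the local smoothness, and then continue with a larger step — is the right idea. But the quantitative chain breaks at two places.

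\emph{The residual shrinks by $\sqrt{\log n}$, not a full $\log n$.} In the bound on the Hessian operator norm (the term $B_1$ in the proof of Lemma~\ref{lemma_restrict_concen_hessian_neighbor}), the factor $\big\Vert \ba_i^{\top}\bX^{\natural}\big\Vert_2 \lesssim \sqrt{\log n}\,\big\Vert\bX^{\natural}\big\Vert_{\F}$ from Lemma~\ref{lemma_a_log_tight_concen} is intrinsic to the random design and does not improve as the iterates converge. If you only upgrade the incoherence from $O(\sqrt{\log n})\,\sigma_r^2(\bX^{\natural})/\Vert\bX^{\natural}\Vert_{\F}$ to $O(1)\,\sigma_r^2(\bX^{\natural})/\Vert\bX^{\natural}\Vert_{\F}$, then $B_1$ drops only from $O(\sigma_r^2(\bX^{\natural})\log n)$ to $O(\sigma_r^2(\bX^{\natural})\sqrt{\log n})$, and the effective smoothness is $O\big(\sqrt{\log n}\,\sigma_r^2(\bX^{\natural}) + \Vert\bX^{\natural}\Vert_{\F}^2\big)$, which still carries a $\log n$. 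In the regime $r^2\kappa^2 < \log n$ the asserted step size $\mu = c_7/(r^2\kappa^2\sigma_r^2(\bX^{\natural}))$ then exceeds the admissible one $\lesssim \sigma_r^2(\bX^{\natural}) / \big(\sqrt{\log n}\,\sigma_r^2(\bX^{\natural}) + \Vert\bX^{\natural}\Vert_{\F}^2\big)^2 \lesssim 1/(\log n\,\sigma_r^2(\bX^{\natural}))$, so the conclusion does not follow. Your intermediate claim that the residual ``shrinks by a full power of $\log n$'' is where this slips.

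\emph{You are undershooting the warm-up, and the fix makes the leave-one-out re-run unnecessary.} The estimate $(1-0.5\mu_0\sigma_r^2(\bX^{\natural}))^{T_a} \le 1/\sqrt{\log n}$ massively understates what $T_a = c_6\max\{\kappa^2 r^2\log n, \log^3 n\}$ buys: since $(r\kappa+\log n)^2 \le 4\max\{r^2\kappa^2,\log^2 n\}$ and $T_a = c_6\log n \cdot \max\{r^2\kappa^2,\log^2 n\}$, a direct computation gives $(1-0.5\mu_0\sigma_r^2(\bX^{\natural}))^{T_a} \le n^{-c_4 c_6/8}$, i.e.\ polynomial decay in $n$. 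Plugging this into the Frobenius bound \eqref{equ_main_theorem_estbound} and then applying Cauchy--Schwarz with Lemma~\ref{lemma_a_sqrtroot_concen},
\begin{equation*}
\max_{1\le l\le m}\big\Vert \ba_l^{\top}(\bX_t\bQ_t - \bX^{\natural})\big\Vert_2 \;\le\; \sqrt{6n}\,\big\Vert\bX_t\bQ_t - \bX^{\natural}\big\Vert_{\F} \;\lesssim\; n^{\frac12 - c}\,\frac{\sigma_r^2(\bX^{\natural})}{\Vert\bX^{\natural}\Vert_{\F}},
\end{equation*}
which for $c_6$ large is $o(1)\cdot\sigma_r^2(\bX^{\natural})/\Vert\bX^{\natural}\Vert_{\F}$. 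This makes $B_1 \lesssim \sqrt{\log n}\cdot n^{\frac12-c}\sigma_r^2(\bX^{\natural}) \ll \sigma_r^2(\bX^{\natural})$, so the smoothness becomes $\approx 6\Vert\bX^{\natural}\Vert_{\F}^2 \le 6r\kappa\sigma_r^2(\bX^{\natural})$ and Lemma~\ref{lemma:lemma_induction} then admits $\mu = c_7/(r^2\kappa^2\sigma_r^2(\bX^{\natural}))$. Crucially, this Cauchy--Schwarz route is exactly how the paper handles iterates past $T_c$ in the proof sketch preceding \eqref{equ_t_alog_asquare}, and it also shows that re-running the leave-one-out machinery after $T_a$ — the step you flag as the main technical obstacle — is not needed at all: the incoherence invariant is propagated directly by the contracting Frobenius error.
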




\section{Related Work}
 
Instead of directly estimating $\bX^{\natural}$, the problem of interest can be also solved by estimating $\bM^{\natural} = \bX^{\natural}\bX^{\natural\top}$ in higher dimension via nuclear norm minimization, which requires $O(nr)$ measurements for exact recovery \cite{chen2013exact,cai2015rop,kueng2017low,li2017low}. See also \cite{candes2013phaselift,candes2014solving, demanet2014stable,waldspurger2015phase} for the phase retrieval problem. However, nuclear norm minimization, often cast as the semidefinite programming, is in general computationally expensive to deal with large-scale data.

On the other hand, nonconvex approaches have drawn intense attention in the past decade due to their ability to achieve computational and statistical efficiency all at once. Specifically, for the phase retrieval problem, Wirtinger Flow (WF) and its variants \cite{candes2015phase, ChenCandes15solving, cai2016optimal, ma2017implicit, zhang2017reshaped, soltanolkotabi2017structured, wang2017solving} have been proposed. As a two-stage algorithm, it consists of spectral initialization and iterative gradient updates. This strategy has found enormous success in solving other problems such as low-rank matrix recovery and completion \cite{chen2015fast,tu2016low}, blind deconvolution \cite{DBLP:journals/corr/LiLSW16}, and spectral compressed sensing \cite{cai2017spectral}. We follow a similar route but analyze a more general problem that includes phase retrieval as a special case.

The paper \cite{sanghavi2017local} is most close to our work, which studied the local convexity of the same loss function and developed performance guarantees for gradient descent using a similar, but different spectral initialization scheme. As discussed earlier, due to the pessimistic estimate of the smoothness parameter, they only allow a diminishing learning rate (or step size) of $O(1/n^4)$, leading to a high iteration complexity. We not only provide stronger computational guarantees, but also improve the sample complexity, compared with \cite{sanghavi2017local}. 
\begin{table}[ht]
\begin{center}
\begin{tabular}{|c|c|c|}  
\hline 
Algorithms with resampling & Sample complexity & Computational complexity   \tabularnewline
\hline \hline
AltMin-LRROM  \cite{zhong2015efficient}  & $O(nr^4\log^2{n}\log{(\frac{1}{\epsilon})})$ & $O(mnr\log{(\frac{1}{\epsilon})})$    \tabularnewline
\hline
gFM  \cite{lin2016non}  & $O(nr^3\log{(\frac{1}{\epsilon})})$ & $O(mnr\log{(\frac{1}{\epsilon})})$    \tabularnewline
\hline 
EP-ROM  \cite{soltani2018towards}& $O(nr^2\log^4{n}\log{(\frac{1}{\epsilon})})$ & $O(mn^{2}\log{(\frac{1}{\epsilon})})$  \tabularnewline
\hline 
AP-ROM  \cite{soltani2018towards}& $O(nr^3\log^4{n}\log{(\frac{1}{\epsilon})})$ & $O(mnr\log{n}\log{(\frac{1}{\epsilon})})$  \tabularnewline
  \hline \hline \hline 
  Algorithms without resampling & Sample complexity & Computational complexity   \tabularnewline
\hline \hline
Convex \cite{chen2013exact} &  $O(nr)$ & $O(mn^{2}\frac{1}{\sqrt{\epsilon}})$ \tabularnewline \hline
GD \cite{sanghavi2017local}  & $O(nr^{6}\log^{2}{n})$ & $O(mn^{5}r^{3}\log^{4}{n}\log{(\frac{1}{\epsilon})})$   \tabularnewline
\hline 
GD (Algorithm~\ref{alg:ncvx_sketching}, Ours) & $O(nr^{4}\log{n})$ & $O(mnr \max\{\log^{2}{n}, r^{2}\} \log{(\frac{1}{\epsilon})})$   \tabularnewline  \hline
\end{tabular}
\end{center}
\caption{Comparisons with existing results in terms of sample complexity and computational complexity to reach $\epsilon$-accuracy. The top half of the  table is concerned with algorithms that require resampling, while the bottom half of the table covers algorithms without resampling. } \label{table:existing}
\end{table}


Several other existing works have suggested different approaches for low-rank matrix factorization from rank-one measurements, of which the statistical and computational guarantees to reach $\epsilon$-accuracy are summarized in Table~\ref{table:existing}. We note our guarantee is the only one that achieves simultaneous near-optimal sample complexity and computational complexity. Iterative algorithms based on alternating minimization or noisy power iterations \cite{zhong2015efficient,lin2016non,soltani2018towards} require a {\em fresh} set of samples at every iteration, which is never executed in practice, and the sample complexity grows unbounded for {\em exact} recovery.

Many nonconvex methods have been proposed and analyzed recently to solve the phase retrieval problem, including the Kaczmarz method \cite{wei2015solving,tan2017phase,jeong2017convergence} and approximate message passing \cite{ma2018optimization}. In \cite{chi2016kaczmarz}, the Kaczmarz method is generalized to solve the problem studied in this paper, but no theoretical performance guarantees are provided.

The local geometry studied in our paper is in contrast to \cite{sun2016geometric}, which studied the global landscape of phase retrieval, and showed that there are no spurious local minima as soon as the sample complexity is above $O(n\log^3 n)$. It will be interesting to study the landscape property of the generalized model in our paper.


Our model is  also related to learning shallow neural networks. \cite{zhong17a} studied the performance of gradient descent with resampling and an initialization provided by the tensor method for various activation functions, however their analysis did not cover quadratic activations. For quadratic activations, \cite{livni2014computational} adopts a greedy learning strategy, and can only guarantee sublinear convergence rate. Moreover,   \cite{soltanolkotabi2017theoretical} studied the optimization landscape for an over-parameterized shallow neural network with quadratic activation, where $r$ is larger than $n$.

\section{Outline of Theoretical Analysis}

This section provides the proof sketch of the main results, with the details  deferred to the appendix. Our theoretical analysis is inspired by the work of \cite{ma2017implicit} for phase retrieval and follows  the general recipe outlined in \cite{ma2017implicit}, while significant changes and elaborate derivations are needed. We refine the analysis to show that both the signal reconstruction error and the entry-wise error contract linearly, where the latter is not revealed by \cite{ma2017implicit}. In below, we first characterize a region of incoherence and contraction that enjoys both strong convexity and smoothness along certain directions. We then  demonstrate --- via an induction argument --- that the iterates always stay within this nice region. Finally, the proof is complete by validating the desired properties of spectral initialization.

\subsection{Local Geometry and Error Contraction}
We start with characterizing a local region around $\bX^{\natural}$, within which the loss function enjoys desired restricted strong convexity and smoothness properties. This requires exploring the property of the Hessian of $f(\bX)$, which is given by
\begin{equation}\label{eq:hessian}
\nabla^{2}f(\bX)
= \frac{1}{m}\sum_{i=1}^{m} \left[ \left(\big\Vert \ba_{i}^{\top}\bX\big\Vert_{2}^{2} - y_{i}\right)\bI_{r} + 2\bX^{\top}\ba_{i}\ba_{i}^{\top}\bX \right] \otimes \big( \ba_{i}\ba_{i}^{\top} \big).
\end{equation} 
Here, we use $\otimes$ to denote the Kronecker product and hence $\nabla^{2}f(\bX) \in \mathbb{R}^{nr \times nr}$. 
Now we are ready to state the following lemma regarding this local region, which will be referred to as the region of incoherence and contraction (RIC) throughout this paper. The proof is given in Appendix~\ref{proof_lemma_restrict_concen_hessian_neighbor}.

\begin{lemma}
\label{lemma_restrict_concen_hessian_neighbor}
Suppose the sample size obeys $m\ge c \frac{ \Vert \bX^{\natural} \Vert_{\F}^{4}}{\sigma_{r}^{4}\left(\bX^{\natural}\right) } n r \log{\left(n \kappa\right)}$ for some sufficiently large  constant $c>0$. Then with probability at least $1 - c_{1} n^{-12} - m e^{-1.5n} - m n^{-12} $, we have 
\begin{equation}
\label{eq:lower_bound}
\mathrm{vec}\left(\bV\right)^{\top} \nabla^{2}f(\bX) \mathrm{vec}\left(\bV\right) \ge 1.026  \sigma_{r}^{2} (\bX^{\natural} ) \left\Vert \bV \right\Vert_{\F}^{2},
\end{equation}
and 
\begin{equation}
\label{eq:upper_bound}
\left\Vert  \nabla^{2}f(\bX)\right\Vert \le 1.5 \sigma_{r}^{2} (\bX^{\natural} ) \log{n}    + 6\big\Vert\bX^{\natural}\big\Vert_{\F}^{2}
\end{equation}
hold simultaneously for all matrices $\bX$ and $\bV$ satisfying the following constraints: 
\begin{subequations}
\label{eq:def_RIC}
\begin{align}
\big\Vert\bX - \bX^{\natural}\big\Vert_{\F} & \le \frac{1}{24}\frac{\sigma_{r}^{2}\left(\bX^{\natural}\right)}{\left\Vert \bX^{\natural} \right\Vert_{\F}}, \label{eq:def_local} \\
\max_{1\le l\le m} \Big\Vert \ba_{l}^{\top}\big(\bX - \bX^{\natural}\big) \Big\Vert_{2} & \le \frac{1}{24} \sqrt{\log{n}}  \cdot \frac{\sigma_{r}^{2}\left(\bX^{\natural}\right)}{\Vert \bX^{\natural} \Vert_{\F}},
	\label{eq:def_incoherence} 
\end{align}
\end{subequations}
and $\bV = \bT_{1}\bQ_{\bT} - \bT_{2}$ satisfying 
\begin{equation}\label{eq:cond_direction}
\big\Vert \bT_{2} - \bX^{\natural}\big\Vert \le  \frac{1}{24}\frac{\sigma_{r}^{2}\left(\bX^{\natural}\right)}{\left\Vert \bX^{\natural} \right\Vert} , 
\end{equation}
where $\bQ_{\bT} := \argmin_{\bP\in\mathcal{O}^{r\times r} } \left\Vert \bT_{1}\bP - \bT_{2}\right\Vert_{\F}$. Here, $c_1$ is some absolute positive constant. 
\end{lemma}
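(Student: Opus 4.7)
The plan is to first reduce the Hessian's quadratic form to a convenient scalar expression. Using the Kronecker identity $\mathrm{vec}(\bV)^{\top}(A\otimes B)\mathrm{vec}(\bV) = \mathrm{tr}(A\bV^{\top}B\bV)$ together with $y_i = \|\ba_i^{\top}\bX^{\natural}\|_2^2$, formula \eqref{eq:hessian} rewrites as
$$
\mathrm{vec}(\bV)^{\top}\nabla^{2}f(\bX)\mathrm{vec}(\bV) = \frac{1}{m}\sum_{i=1}^{m}\Big[\bigl(\|\ba_i^{\top}\bX\|_2^2 - \|\ba_i^{\top}\bX^{\natural}\|_2^2\bigr)\|\ba_i^{\top}\bV\|_2^2 + 2\bigl(\ba_i^{\top}\bX\bV^{\top}\ba_i\bigr)^2\Big].
$$
The manifestly nonnegative second sum will furnish the main lower bound; the first sum is a perturbation that vanishes at $\bX=\bX^{\natural}$ and must be controlled using the RIC hypotheses \eqref{eq:def_local}--\eqref{eq:def_incoherence}.

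For the lower bound \eqref{eq:lower_bound}, I would split $\bX = \bX^{\natural} + \bH$, expand $\ba_i^{\top}\bX\bV^{\top}\ba_i = \ba_i^{\top}\bX^{\natural}\bV^{\top}\ba_i + \ba_i^{\top}\bH\bV^{\top}\ba_i$, and isolate the dominant degree-four Gaussian chaos $\frac{2}{m}\sum_i(\ba_i^{\top}\bX^{\natural}\bV^{\top}\ba_i)^2$. Its expectation equals $2\|\bX^{\natural}\bV^{\top}\|_F^2 + 2\langle\bX^{\natural},\bV\rangle^2 + 2\,\mathrm{tr}\bigl((\bV^{\top}\bX^{\natural})^2\bigr)$ by the identity $\mathbb{E}[\ba^{\top}A\ba\cdot\ba^{\top}B\ba]=\mathrm{tr}(A)\mathrm{tr}(B)+\mathrm{tr}(AB^{\top})+\mathrm{tr}(AB)$. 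The pivotal observation is a Procrustes-induced symmetry of $\bV = \bT_1\bQ_{\bT}-\bT_2$: optimality of $\bQ_{\bT}$ forces $\bT_2^{\top}\bT_1\bQ_{\bT}$ to be symmetric PSD, so $\bT_2^{\top}\bV$ is symmetric, and by \eqref{eq:cond_direction} this transfers to $\bX^{\natural\top}\bV$ being nearly symmetric, whence $\mathrm{tr}((\bV^{\top}\bX^{\natural})^2)\approx\|\bV^{\top}\bX^{\natural}\|_F^2 \ge 0$. Using $\mathrm{tr}(\bX^{\natural\top}\bX^{\natural}\cdot\bV^{\top}\bV)\ge\sigma_r^2(\bX^{\natural})\|\bV\|_F^2$ the expectation is at least (essentially) $2\sigma_r^2(\bX^{\natural})\|\bV\|_F^2$. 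Uniform concentration of this chaos over $\bV$ follows from an $\epsilon$-net argument on rank-$2r$ matrices $\bX^{\natural}\bV^{\top}$ plus Hanson--Wright, under $m\gtrsim nr\log(n\kappa)$. The residual pieces --- the $\bH$-perturbation of the main term and the first summand --- factor through $\max_i\|\ba_i^{\top}\bH\|_2$, which by \eqref{eq:def_incoherence} is $O(\sqrt{\log n}\,\sigma_r^2(\bX^{\natural})/\|\bX^{\natural}\|_F)$ rather than the naive $O(\sqrt n)$; combined with $\|\bH\|_F\le \sigma_r^2(\bX^{\natural})/(24\|\bX^{\natural}\|_F)$ and sub-exponential concentration of $\frac{1}{m}\sum_i\|\ba_i^{\top}\bV\|_2^2$ around $\|\bV\|_F^2$, the remainder is comfortably below $0.026\sigma_r^2(\bX^{\natural})\|\bV\|_F^2$, yielding the stated constant $1.026$.

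For the upper bound \eqref{eq:upper_bound}, I would handle the two Hessian summands separately in operator norm. The PSD piece $\frac{2}{m}\sum_i(\bX^{\top}\ba_i\ba_i^{\top}\bX)\otimes(\ba_i\ba_i^{\top})$ has norm at most $\sup_{\|\bV\|_F=1}\frac{2}{m}\sum_i\|\ba_i^{\top}\bX\|_2^2\|\ba_i^{\top}\bV\|_2^2 \le 2\bigl(\max_i\|\ba_i^{\top}\bX\|_2^2\bigr)\cdot\bigl\|m^{-1}\sum_i\ba_i\ba_i^{\top}\bigr\|$; a standard Gaussian maximum bound yields $\max_i\|\ba_i^{\top}\bX^{\natural}\|_2\lesssim\sqrt{\log n}\,\|\bX^{\natural}\|$ (accounting for the $mn^{-12}$ failure probability), which combined with \eqref{eq:def_incoherence} on $\max_i\|\ba_i^{\top}\bH\|_2$ and $\kappa=O(1)$ produces the $1.5\sigma_r^2(\bX^{\natural})\log n$ summand. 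The indefinite piece $\frac{1}{m}\sum_i(\|\ba_i^{\top}\bX\|_2^2 - y_i)\bI_r\otimes\ba_i\ba_i^{\top}$ is bounded by $\max_i\bigl|\|\ba_i^{\top}\bX\|_2^2 - y_i\bigr|\cdot\|m^{-1}\sum_i\ba_i\ba_i^{\top}\|$, and a second appeal to \eqref{eq:def_incoherence} contributes the $6\|\bX^{\natural}\|_F^2$ summand.

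\textbf{The main obstacle} is the uniform-in-$(\bX,\bV)$ concentration of the fourth-order Gaussian chaos $\frac{1}{m}\sum_i(\ba_i^{\top}\bX\bV^{\top}\ba_i)^2$ around its expectation with deviation strictly below $\sigma_r^2(\bX^{\natural})\|\bV\|_F^2$. Heavy Gaussian tails force a truncation to the event $\{\|\ba_i\|_2\le c\sqrt n\}$ (accounting for the $me^{-1.5n}$ probability loss stated in the lemma), followed by an $\epsilon$-net argument on the low-rank factor $\bX\bV^{\top}$ and Hanson--Wright at each net point. The genuine leverage of the incoherence hypothesis \eqref{eq:def_incoherence} is precisely here: it replaces what would otherwise be a $\max_i\|\ba_i^{\top}\bH\|_2 = O(\sqrt n)$ factor in the cross terms by $O(\sqrt{\log n})$, thereby removing the $n^2$ blow-up of the smoothness parameter that plagued prior analyses and enabling the aggressive step size $\mu\asymp 1/[(r\kappa+\log n)^2\sigma_r^2(\bX^{\natural})]$.
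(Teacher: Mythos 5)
There is a genuine gap in your plan for the lower bound \eqref{eq:lower_bound}, and a smaller but nontrivial imprecision in the upper bound.

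\textbf{Lower bound.} You propose to expand around $\bX^{\natural}$, isolate the chaos $\frac{2}{m}\sum_i(\ba_i^{\top}\bX^{\natural}\bV^{\top}\ba_i)^2$, and control the residual perturbation terms by factoring through $\max_i\|\ba_i^{\top}\bH\|_2 \lesssim \sqrt{\log n}\,\sigma_r^2/\|\bX^{\natural}\|_F$ together with $\frac{1}{m}\sum_i\|\ba_i^{\top}\bV\|_2^2 \approx \|\bV\|_F^2$. This does not close. For the first Hessian summand, $|\,\|\ba_i^{\top}\bX\|_2^2-y_i\,| = |\ba_i^{\top}(\bX+\bX^{\natural})\bH^{\top}\ba_i| \le \|\ba_i^{\top}(\bX+\bX^{\natural})\|_2 \cdot \|\ba_i^{\top}\bH\|_2$, and taking maxima over $i$ yields $\lesssim \sqrt{\log n}\,\|\bX^{\natural}\|_F \cdot \sqrt{\log n}\,\sigma_r^2/\|\bX^{\natural}\|_F = (\log n)\,\sigma_r^2$, so the residual is of order $(\log n)\,\sigma_r^2\|\bV\|_F^2$ --- which swamps the target threshold $0.026\,\sigma_r^2\|\bV\|_F^2$ for any reasonable $n$. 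Even pulling out only $\max_i\|\ba_i^{\top}\bH\|_2$ and controlling the remaining sum by concentration still pays the $\sqrt{\log n}$ factor by which $\max_i\|\ba_i^{\top}\bH\|_2$ exceeds the typical value $\approx\|\bH\|_F$. The paper avoids this entirely: its auxiliary Lemma~\ref{lemma_restrict_concen_hessian_lower_looseconst} establishes the lower bound using \emph{only} the Frobenius-ball constraint \eqref{eq:def_local} --- the incoherence hypothesis \eqref{eq:def_incoherence} is never invoked for the lower bound. It parametrizes $\bX = \bX^{\natural} + t\frac{\sigma_r^2}{\|\bX^{\natural}\|_F}\bH$ and treats $q(\bV,\bH,t) = \frac{1}{m}\sum_i G_i$ with $G_i \ge 0$ directly by expectation (Lemma~\ref{lemma_expectation_gaussian}) plus a one-sided Bennett/Bernstein bound (Lemma~\ref{lemma_oneside_tail_bound}) plus an $\epsilon$-net over $(\bV,\bH,t)$. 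The one-sided trick is the correct replacement for your proposed Hanson--Wright step: the quantity $\frac{1}{m}\sum_i(\ba_i^{\top}\bX\bV^{\top}\ba_i)^2$ is a degree-four chaos, for which Hanson--Wright does not apply, and truncating $\|\ba_i\|_2 \le c\sqrt n$ only yields a $O(n^2)$ uniform bound that Bernstein cannot absorb at the required scale. The paper instead exploits $G_i \ge 0 \Rightarrow \mathbb{E}[G_i] - G_i \le \mathbb{E}[G_i] = O(\|\bX^{\natural}\|_F^2)$, giving a $O(1)$ uniform bound in the direction of the needed one-sided deviation. Your Procrustes observation (that $\bT_2^{\top}\bV$ is symmetric, so $\mathrm{Tr}(\bX^{\natural\top}\bV\bX^{\natural\top}\bV)$ is nearly $\ge 0$) is correct and matches the paper's deduction of \eqref{eq:lower_bound} from Lemma~\ref{lemma_restrict_concen_hessian_lower_looseconst}.

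\textbf{Upper bound.} Your decomposition is close in spirit, but the claimed bound $\max_i\|\ba_i^{\top}\bX^{\natural}\|_2 \lesssim \sqrt{\log n}\,\|\bX^{\natural}\|$ is incorrect; a Gaussian/Laurent--Massart max bound gives $\max_i\|\ba_i^{\top}\bX^{\natural}\|_2^2 \lesssim \|\bX^{\natural}\|_F^2 + (\log n)\|\bX^{\natural}\|^2$, which does contain an extra $\|\bX^{\natural}\|_F^2$ term (dominant when $\log n \lesssim r$), and the cruder Lemma~\ref{lemma_a_log_tight_concen} gives $(\log n)\|\bX^{\natural}\|_F^2$, which is a factor of $\min\{r,\log n\}$ too large for the statement. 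The paper circumvents this by decomposing $\|\ba_i^{\top}\bX\|_2^2 = \big(\|\ba_i^{\top}\bX\|_2^2 - \|\ba_i^{\top}\bX^{\natural}\|_2^2\big) + \|\ba_i^{\top}\bX^{\natural}\|_2^2$: only the difference is controlled via the incoherence $\max_i$ bound (yielding $B_1$), while the matrix $\frac{1}{m}\sum_i\|\ba_i^{\top}\bX^{\natural}\|_2^2\ba_i\ba_i^{\top}$ is handled by concentration around its mean $\|\bX^{\natural}\|_F^2\bI + 2\bX^{\natural}\bX^{\natural\top}$ (Lemma~\ref{lemma_intial_weighted_mat_concen}), which supplies the $6\|\bX^{\natural}\|_F^2$ term without paying $\max_i$ at all. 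To make your upper-bound route airtight you would need to plug in the sharper Gaussian max bound and check the constants; the paper's split is cleaner and avoids that entirely.
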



The condition \eqref{eq:def_RIC} on $\bX$ formally characterizes the RIC, which enjoys the claimed restricted strong convexity (see (\ref{eq:lower_bound})) and smoothness (see (\ref{eq:upper_bound})). With Lemma~\ref{lemma_restrict_concen_hessian_neighbor} in mind, it is easy to see that if $\bX_t$ lies within the RIC, the estimation error shrinks in the presence of a properly chosen step size. This is given in the lemma below whose proof can be found in Appendix~\ref{proof_lemma_induction}.



\begin{lemma}\label{lemma:lemma_induction}
Suppose the sample size obeys $m\ge c \frac{ \Vert \bX^{\natural} \Vert_{\F}^{4}}{\sigma_{r}^{4}\left(\bX^{\natural}\right) } n r \log{\left(n \kappa\right)}$ for some sufficiently large constant $c>0$. Then with probability at least $1 - c_{1} n^{-12} - m e^{-1.5n} - m n^{-12} $, if $\bX_t$ falls within the RIC as described in  \eqref{eq:def_RIC}, we have 
\begin{equation*}
\mathrm{dist} \big(\bX_{t+1}, \bX^{\natural}\big) \leq \left( 1-  0.513\mu \sigma_r^2(\bX^{\natural})  \right)  \mathrm{dist}\big(\bX_{t}, \bX^{\natural}\big) ,
\end{equation*} 
provided that the step size obeys $0<\mu_{t} \equiv \mu \le \frac{1.026 \sigma_{r}^{2}\left(\bX^{\natural}\right)}{\big(1.5 \sigma_{r}^{2} (\bX^{\natural}) \log{n}    + 6\Vert\bX^{\natural} \Vert_{\F}^{2}\big)^{2}}$. Here, $c_1>0$ is some universal constant. 
\end{lemma}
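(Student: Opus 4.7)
\textbf{Proof proposal for Lemma~\ref{lemma:lemma_induction}.}

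The plan is to view one gradient step as a contraction operator acting on the error $\bV_t := \bX_t\bQ_t - \bX^{\natural}$, and then invoke Lemma~\ref{lemma_restrict_concen_hessian_neighbor} along the segment between $\bX^{\natural}$ and $\bX_t\bQ_t$ to control the linear map $\bI - \mu \nabla^2 f$ from both sides. First, observe the rotation equivariance of the loss: for any $\bQ\in\mathcal{O}^{r\times r}$, $\|\ba_i^\top(\bX\bQ)\|_2^2 = \|\ba_i^\top\bX\|_2^2$, so $\nabla f(\bX\bQ) = \nabla f(\bX)\,\bQ$. Consequently
\begin{equation*}
\bX_{t+1}\bQ_t \;=\; \bX_t\bQ_t - \mu\,\nabla f(\bX_t\bQ_t),
\end{equation*}
and since $\bQ_{t+1}$ is the minimizer in \eqref{eq:defn-Qt} (while $\bQ_t$ is merely feasible), $\mathrm{dist}(\bX_{t+1},\bX^{\natural}) \le \|\bX_{t+1}\bQ_t - \bX^{\natural}\|_{\F}$. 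It therefore suffices to bound this latter quantity by $(1-0.513\mu\sigma_r^2(\bX^{\natural}))\,\mathrm{dist}(\bX_t,\bX^{\natural})$.

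Next, since the noiseless model gives $\nabla f(\bX^{\natural}) = \bm 0$, the fundamental theorem of calculus yields
\begin{equation*}
\mathrm{vec}(\nabla f(\bX_t\bQ_t)) \;=\; \Big(\int_0^1 \nabla^2 f\bigl(\bX(\tau)\bigr)\,d\tau\Big)\mathrm{vec}(\bV_t), \qquad \bX(\tau) := \bX^{\natural} + \tau\bV_t.
\end{equation*}
Let $\bar{\bH} := \int_0^1 \nabla^2 f(\bX(\tau))\,d\tau$. Then $\mathrm{vec}(\bX_{t+1}\bQ_t - \bX^{\natural}) = (\bI - \mu\bar{\bH})\mathrm{vec}(\bV_t)$, so
\begin{equation*}
\|\bX_{t+1}\bQ_t - \bX^{\natural}\|_{\F}^2 \;=\; \mathrm{vec}(\bV_t)^{\top}\bigl(\bI - 2\mu\bar{\bH} + \mu^2\bar{\bH}^2\bigr)\mathrm{vec}(\bV_t).
\end{equation*}
The goal is then to lower bound the quadratic form of $\bar{\bH}$ and upper bound that of $\bar{\bH}^2$ in the direction $\mathrm{vec}(\bV_t)$.

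The key step is applying Lemma~\ref{lemma_restrict_concen_hessian_neighbor} at each $\bX(\tau)$ with the direction $\bV_t$. The RIC hypothesis on $\bX_t$ (after identifying $\bX_t\bQ_t - \bX^{\natural}$ with the natural error) gives $\|\bV_t\|_{\F}$ and $\max_l \|\ba_l^\top \bV_t\|_2$ within the thresholds of \eqref{eq:def_RIC}; since $\bX(\tau)-\bX^{\natural} = \tau \bV_t$ with $\tau\in[0,1]$, the same bounds hold for $\bX(\tau)$, so $\bX(\tau)$ lies in the RIC for every $\tau$. For the direction, take $\bT_1=\bX_t$, $\bT_2=\bX^{\natural}$, $\bQ_{\bT}=\bQ_t$; then $\bV_t = \bT_1\bQ_{\bT}-\bT_2$ and $\|\bT_2-\bX^{\natural}\|=0$ trivially satisfies \eqref{eq:cond_direction}. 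Integrating the bounds \eqref{eq:lower_bound}--\eqref{eq:upper_bound} over $\tau\in[0,1]$ gives
\begin{equation*}
\mathrm{vec}(\bV_t)^{\top}\bar{\bH}\,\mathrm{vec}(\bV_t) \;\ge\; 1.026\,\sigma_r^2(\bX^{\natural})\|\bV_t\|_{\F}^2, \qquad \|\bar{\bH}\| \;\le\; 1.5\,\sigma_r^2(\bX^{\natural})\log n + 6\|\bX^{\natural}\|_{\F}^2 \;=:\; \beta.
\end{equation*}

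Combining these with the expansion above,
\begin{equation*}
\|\bX_{t+1}\bQ_t - \bX^{\natural}\|_{\F}^2 \;\le\; \bigl(1 - 2(1.026)\mu\sigma_r^2(\bX^{\natural}) + \mu^2\beta^2\bigr)\|\bV_t\|_{\F}^2.
\end{equation*}
The prescribed step size $\mu \le 1.026\,\sigma_r^2(\bX^{\natural})/\beta^2$ ensures $\mu^2\beta^2 \le 1.026\,\mu\sigma_r^2(\bX^{\natural})$, so the bracket is at most $1 - 1.026\,\mu\sigma_r^2(\bX^{\natural}) \le (1-0.513\,\mu\sigma_r^2(\bX^{\natural}))^2$. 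Taking square roots and using $\mathrm{dist}(\bX_{t+1},\bX^{\natural}) \le \|\bX_{t+1}\bQ_t-\bX^{\natural}\|_{\F}$ concludes the proof. The only real subtlety is verifying that the entire convex segment $\{\bX(\tau)\}$ remains in the RIC and that the direction $\bV_t$ matches the structural form required by \eqref{eq:cond_direction}; the rest is a standard one-step GD contraction argument under restricted strong convexity and smoothness.
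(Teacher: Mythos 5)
Your proposal is correct and follows the same approach as the paper: rotation equivariance of the gradient, the fundamental theorem of calculus between $\bX^{\natural}$ and $\bX_t\bQ_t$, verifying the whole segment stays in the RIC, invoking Lemma~\ref{lemma_restrict_concen_hessian_neighbor} with $\bT_1=\bX_t$, $\bT_2=\bX^{\natural}$ for both the restricted strong convexity and smoothness bounds, and then the standard one-step contraction computation under the prescribed step size. You are slightly more explicit than the paper in checking that the direction $\bV_t$ satisfies the structural condition \eqref{eq:cond_direction} (which holds trivially since $\bT_2-\bX^{\natural}=\bm 0$), but otherwise the argument is identical.
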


Assuming that the iterates $\{\bX_{t}\}$, stay within the RIC (see \eqref{eq:def_RIC}) for the first $T_{c}$ iterations, according to Lemma~\ref{lemma:lemma_induction}, we have, by induction, that 
\begin{equation*}
\mathrm{dist} \big(\bX_{T_{c}+1}, \bX^{\natural}\big) \leq \left( 1-  0.513\mu \sigma_r^2(\bX^{\natural})  \right)^{T_{c}+1}  \mathrm{dist}\big(\bX_{0}, \bX^{\natural}\big) \leq   \frac{1}{24\sqrt{6}} \cdot \frac{\sqrt{\log{n}}}{\sqrt{n}} \cdot  \frac{\sigma_{r}^{2}\left(\bX^{\natural}\right)}{\left\Vert\bX^{\natural}\right\Vert_{\F}}  
\end{equation*} 
as soon as 
\begin{equation}\label{eq:iterations}
T_{c} \ge c \max{\left\{\log^{2}{n}, \frac{ \left\Vert\bX^{\natural}\right\Vert_{\F}^{4}} {  \sigma_{r}^{4}\left(\bX^{\natural}\right)} \right\}} \log n,
\end{equation}
for some large enough constant $c$. The iterates when $t\geq T_{c}$ are easier to deal with; in fact, it is easily seen that $\bX_{t+1}$ stays in the RIC since
\begin{align}
\max_{1\le l\le m} \left\Vert \ba_{l}^{\top} \big( \bX_{t+1}\bQ_{t+1} - \bX^{\natural} \big) \right\Vert_{2}
& \le \max_{1\le l\le m} \big\Vert \ba_{l} \big\Vert_{2} \left\Vert \bX_{t+1}\bQ_{t+1} - \bX^{\natural} \right\Vert \nonumber \\
& \le   \sqrt{6n} \cdot  \frac{1}{24\sqrt{6}} \cdot \frac{\sqrt{\log{n}}}{\sqrt{n}} \cdot  \frac{\sigma_{r}^{2}\left(\bX^{\natural}\right)}{\left\Vert\bX^{\natural}\right\Vert_{\F}} \label{equ_t_alog_asquare}\\
&  = \frac{1}{24} \sqrt{\log{n}} \cdot  \frac{\sigma_{r}^{2}\left(\bX^{\natural}\right)}{\left\Vert\bX^{\natural}\right\Vert_{\F}}, \nonumber
\end{align}
where \eqref{equ_t_alog_asquare} follows from Lemma~\ref{lemma_a_sqrtroot_concen} for all $t\geq T_{c}$. Consequently,  contraction of the estimation error $\mathrm{dist}\big(\bX_{t}, \bX^{\natural}\big) $ can be guaranteed by Lemma~\ref{lemma_restrict_concen_hessian_neighbor} for all $t\geq T_{c}$ with probability at least $1 - c_{1} n^{-12} - m e^{-1.5n} - m n^{-12} $.


\subsection{Introducing Leave-One-Out Sequences}

It has now become clear that the key remaining step is to  verify that the iterates $\{\bX_{t}\}$ satisfy \eqref{eq:def_RIC} for the first $T_{c}$ iterations, where $T_{c}$ is on the order of \eqref{eq:iterations}. Verifying (\ref{eq:def_incoherence}) is conceptually hard since the iterates $\{\bX_t\}$ are statistically dependent with all the sensing vectors $\{\ba_{i}\}_{i=1}^{m}$. To tackle this problem, for each $1 \leq l \leq m$, we introduce an auxiliary leave-one-out sequence $\{\bX^{(l)}_{t}\}$, which discards a single measurement from consideration. Specifically, the sequence $\{\bX^{(l)}_{t}\}$ is the gradient iterates operating on the following leave-one-out function
\begin{equation}
f^{(l)}\left(\bX\right) := \frac{1}{4m} \sum_{i: i\neq l} \left(y_{i} - \big\Vert \ba_{i}^{\top}\bX\big\Vert_{2}^{2}\right)^{2}.
\end{equation} 
See Algorithm~\ref{alg:loo} for a formal definition of the leave-one-out sequences. Again, we want to emphasize that Algorithm~\ref{alg:loo} is just an auxiliary procedure useful for the theoretical analysis, and it does not need to be implemented in practice.

\begin{algorithm}[t]
\caption{Leave-One-Out Versions}
\label{alg:loo}
\noindent \textbf{Input:} Measurements $\left\{y_{i}\right\}_{i:i\neq l}$, and sensing vectors $\left\{\ba_{i}\right\}_{i:i\neq l}$.

\noindent \textbf{Parameters:} Step size $\mu_{t}$, rank $r$, and number of iterations $T$.

\noindent \textbf{Initialization:} $\bX_{0}^{(l)} = \bZ_{0}^{(l)}\boldsymbol{\Lambda}_{0}^{(l) 1/2}$, where the columns of $\bZ_{0}^{(l)} \in\mathbb{R}^{n\times r}$ contain the normalized eigenvectors corresponding to the $r$ largest eigenvalues of the matrix 
\begin{equation}
\bY^{(l)} = \frac{1}{2m} \sum_{i: i\neq l} y_{i}\ba_{i}\ba_{i}^{\top},
\end{equation}
and $\boldsymbol{\Lambda}_{0}^{(l)}$ is an $r\times r$ diagonal matrix, with the entries on the diagonal given as 
\begin{equation}
\left[\boldsymbol{\Lambda}_{0}^{(l)}\right]_{i} = \lambda_{i}\big(\bY^{(l)}\big) - \lambda^{(l)}, \quad   \ i=1,\cdots,r,
\end{equation}
where $\lambda^{(l)} = \frac{1}{2m} \sum_{i: i\neq l} y_{i}$ and $\lambda_{i}\left(\bY^{(l)}\right)$ is the $i$th largest eigenvalue of $\bY^{(l)}$.

\noindent \textbf{Gradient loop:} For $t = 0:1:T-1$, do 
\begin{equation}
\bX_{t+1}^{(l)} = \bX_{t}^{(l)} - \mu_{t} \cdot \frac{1}{m} \sum_{i: i\neq l } \left( \big\Vert \ba_{i}^{\top}\bX_{t}^{(l)} \big\Vert_{2}^{2} - y_{i} \right) \ba_{i}\ba_{i}^{\top}\bX_{t}^{(l)}.
\end{equation}

\noindent \textbf{Output:} $\bX_{T}^{(l)}$.

\end{algorithm}


\subsection{Establishing Incoherence via Induction}\label{sec_inductive_update}

Our proof is inductive in nature with the following induction hypotheses:
\begin{subequations}\label{eq:induction_hyp}
\begin{align}
\left\Vert \bX_{t}\bQ_{t} - \bX^{\natural} \right\Vert_{\F} 
& \le C_{1} \left(1 - 0.5 \sigma_{r}^{2}\left(\bX^{\natural}\right) \mu \right)^{t} \frac{\sigma_{r}^{2}\left(\bX^{\natural}\right)}{\left\Vert \bX^{\natural} \right\Vert_{\F}}, \label{equ_inductive_att_c1}\\
 \max_{1\le l\le m} \left\Vert \bX_{t}\bQ_{t} - \bX_{t}^{(l)}\bR_{t}^{(l)} \right\Vert_{\F} 
 & \le C_{3} \left(1 - 0.5 \sigma_{r}^{2}\left(\bX^{\natural}\right) \mu \right)^{t} \sqrt{\frac{\log{n}}{n}} \cdot  \frac{\sigma_{r}^{2}\left(\bX^{\natural}\right)}{\kappa \left\Vert \bX^{\natural} \right\Vert_{\F}}, \label{equ_inductive_att_c3}\\
  \max_{1\le l\le m} \left\Vert \ba_{l}^{\top}\left(\bX_{t}\bQ_{t}-\bX^{\natural}\right) \right\Vert_{2} 
 & \le C_{2} \left(1 - 0.5 \sigma_{r}^{2}\left(\bX^{\natural}\right) \mu \right)^{t} \sqrt{\log{n}}  \cdot \frac{\sigma_{r}^{2}\left(\bX^{\natural}\right)}{\left\Vert \bX^{\natural} \right\Vert_{\F}}, \label{equ_inductive_att_c2}
\end{align}
\end{subequations}
where $\bR_{t}^{(l)}  = \argmin_{\bP\in\mathcal{O}^{r\times r}} \big\Vert \bX_{t}\bQ_{t} - \bX_{t}^{(l)}\bP \big\Vert_{\F} $, and the positive constants $C_1$, $C_2$ and $C_3$ satisfy
\begin{equation}\label{eq:constant_cond}
C_{1} + C_{3} \le \frac{1}{24}, \quad C_{2} + \sqrt{6}C_{3} \le \frac{1}{24},\quad   5.86 C_{1} +  29.3 C_{3} + 5\sqrt{6}C_{3}  \le  C_{2}.
\end{equation}
Furthermore, the step size $\mu$ is chosen as 
\begin{equation}\label{equ_induc_stepsize}
\mu = \frac{c_{0}\sigma_{r}^{2}\left(\bX^{\natural}\right)}{\big( \sigma_{r}^{2} (\bX^{\natural}) \log{n}    +  \Vert\bX^{\natural} \Vert_{\F}^{2}\big)^{2}}
\end{equation}
with appropriate universal constant $c_{0}>0$.

Our goal is to show that if the $t$th iteration $\bX_t$ satisfies the induction hypotheses \eqref{eq:induction_hyp}, then the $(t+1)$th iteration $\bX_{t+1}$ also satisfies \eqref{eq:induction_hyp}. It is straightforward to see that the hypothesis \eqref{equ_inductive_att_c1} has already been established by Lemma~\ref{lemma:lemma_induction}, and we are left with \eqref{equ_inductive_att_c3} and \eqref{equ_inductive_att_c2}. We first establish \eqref{equ_inductive_att_c3} in the following lemma, which measures the proximity between $\bX_t$ and the leave-one-out versions $\bX_t^{(l)}$, whose proof is provided in Appendix~\ref{proof:lemma_proximity}. 


\begin{lemma}
\label{lemma:proximity}
Suppose the sample size obeys $m\ge c \frac{ \Vert \bX^{\natural} \Vert_{\F}^{4}}{\sigma_{r}^{4}\left(\bX^{\natural}\right) } n r \log{\left(n \kappa\right)}$ for some sufficiently large constant $c>0$. If the induction hypotheses \eqref{eq:induction_hyp} hold for the $t$th iteration, with probability at least $1 - c_{1} n^{-12} - me^{-1.5n} - mn^{-12} $, we have 
\begin{equation*}
\max_{1\le l\le m}\left\Vert \bX_{t+1}\bQ_{t+1} - \bX_{t+1}^{(l)}\bR_{t+1}^{(l)}\right\Vert_{\F}
 \le C_{3} \left(1 - 0.5 \sigma_{r}^{2}\left(\bX^{\natural}\right) \mu \right)^{t+1} \sqrt{\frac{\log{n}}{n}} \cdot \frac{\sigma_{r}^{2}\left(\bX^{\natural}\right)}{\kappa \left\Vert\bX^{\natural} \right\Vert_{\F}},
\end{equation*}
as long as the step size obeys \eqref{equ_induc_stepsize}. Here, $c_1>0$ is some absolute constant.
\end{lemma}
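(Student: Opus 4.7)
The plan is to exploit the gauge freedom of the alignment rotation, recast the difference as a perturbation of a single gradient step, and then decouple the contribution of the one dropped measurement from the rest. Since $\bR_{t+1}^{(l)}$ is a minimizer over orthonormal matrices,
$$
\bigl\|\bX_{t+1}\bQ_{t+1} - \bX_{t+1}^{(l)}\bR_{t+1}^{(l)}\bigr\|_{\F} \;\le\; \bigl\|\bX_{t+1} - \bX_{t+1}^{(l)}\bR_{t}^{(l)}\bigr\|_{\F},
$$
and substituting the gradient updates on both sides yields
$$
\bX_{t+1} - \bX_{t+1}^{(l)}\bR_{t}^{(l)} \;=\; \bigl(\bX_t - \bX_t^{(l)}\bR_t^{(l)}\bigr) - \mu\bigl[\nabla f(\bX_t) - \nabla f^{(l)}\bigl(\bX_t^{(l)}\bigr)\bR_t^{(l)}\bigr].
$$
Using the rotation invariance $\nabla f^{(l)}(\bX)\bR = \nabla f^{(l)}(\bX\bR)$ for any orthonormal $\bR$, I would split the bracketed term as
$$
\bigl[\nabla f^{(l)}(\bX_t) - \nabla f^{(l)}\bigl(\bX_t^{(l)}\bR_t^{(l)}\bigr)\bigr] \;+\; \bigl[\nabla f(\bX_t) - \nabla f^{(l)}(\bX_t)\bigr],
$$
which separates a ``Hessian contraction'' term from a single-measurement residual.

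For the contraction term, I would invoke the fundamental theorem of calculus to express the gradient difference as $\int_0^1\nabla^2 f^{(l)}(\bX_\tau)\,d\tau$ applied to $\mathrm{vec}\bigl(\bX_t - \bX_t^{(l)}\bR_t^{(l)}\bigr)$, with $\bX_\tau := \tau\bX_t + (1-\tau)\bX_t^{(l)}\bR_t^{(l)}$. Combined with the identity piece and the step size \eqref{equ_induc_stepsize}, the restricted strong convexity~\eqref{eq:lower_bound} and smoothness~\eqref{eq:upper_bound} from Lemma~\ref{lemma_restrict_concen_hessian_neighbor}, applied to $f^{(l)}$ (whose concentration events follow from the same arguments as $f$ by a union bound, since removing a single one of $m$ i.i.d.\ rank-one terms does not change the constants), yield a contraction factor $\bigl(1 - 0.5\mu\sigma_r^2(\bX^{\natural})\bigr)$ in Frobenius norm. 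The cost of this step is verifying that every $\bX_\tau$ lies in the RIC~\eqref{eq:def_RIC} and that $\bX_t - \bX_t^{(l)}\bR_t^{(l)}$ has the admissible form~\eqref{eq:cond_direction}: hypothesis \eqref{equ_inductive_att_c1} places $\bX_t$ near $\bX^{\natural}$, combined with \eqref{equ_inductive_att_c3} and the triangle inequality does the same for $\bX_t^{(l)}\bR_t^{(l)}$, hypothesis \eqref{equ_inductive_att_c2} gives incoherence \eqref{eq:def_incoherence} for $\bX_t$, and extending this to $\bX_\tau$ costs an extra $\|\ba_l\|_2\cdot \|\bX_t - \bX_t^{(l)}\bR_t^{(l)}\|_{\F} \lesssim \sqrt{n}\cdot C_3\sqrt{\log n/n}\,\sigma_r^2/(\kappa\|\bX^{\natural}\|_{\F})$ via \eqref{equ_inductive_att_c3}, which by \eqref{eq:constant_cond} stays within the tolerance of \eqref{eq:def_incoherence}. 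The direction form \eqref{eq:cond_direction} holds by the very choice of $\bR_t^{(l)}$ as the Procrustes minimizer.

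For the single-measurement residual $\frac{1}{m}(\|\ba_l^{\top}\bX_t\|_2^2 - y_l)\ba_l\ba_l^{\top}\bX_t$, I would use the factorization
$$
\|\ba_l^{\top}\bX_t\|_2^2 - y_l \;=\; \bigl(\ba_l^{\top}(\bX_t\bQ_t-\bX^{\natural})\bigr)^{\top}\bigl(\ba_l^{\top}(\bX_t\bQ_t+\bX^{\natural})\bigr),
$$
and bound the first factor by the incoherence hypothesis \eqref{equ_inductive_att_c2} and the second by Gaussian tails $\|\ba_l^{\top}\bX^{\natural}\|_2 \lesssim \sqrt{\log n}\|\bX^{\natural}\|+\|\bX^{\natural}\|_{\F}$. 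The remaining $\|\ba_l\ba_l^{\top}\bX_t\|_{\F} = \|\ba_l\|_2\|\ba_l^{\top}\bX_t\|_2$ is controlled by $\|\ba_l\|_2 \lesssim \sqrt{n}$ and a similar Gaussian bound on $\|\ba_l^{\top}\bX_t\|_2 \le \|\ba_l^{\top}\bX^{\natural}\|_2 + \|\ba_l^{\top}(\bX_t\bQ_t - \bX^{\natural})\|_2$. The resulting Frobenius norm of the residual scales like $\frac{\mu\sqrt{n}\log n}{m}\cdot(1-0.5\mu\sigma_r^2)^t\cdot\sigma_r^2\|\bX^{\natural}\|/\|\bX^{\natural}\|_{\F}$, and the sample-complexity requirement $m \gtrsim nr^4\log n$ makes it dominated by a small fraction of the target $C_3(1-0.5\mu\sigma_r^2)^{t+1}\sqrt{\log n/n}\,\sigma_r^2/(\kappa\|\bX^{\natural}\|_{\F})$.

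Summing the two contributions and applying \eqref{equ_inductive_att_c3} to the contractive piece closes the induction once the constants satisfy \eqref{eq:constant_cond}. The main technical obstacle, in my view, is the bookkeeping needed to guarantee that every midpoint $\bX_\tau$ satisfies the full RIC---in particular the entry-wise incoherence \eqref{eq:def_incoherence}---which requires chaining all three induction hypotheses under the high-probability event that supports Lemma~\ref{lemma_restrict_concen_hessian_neighbor} for every $l$ simultaneously. A secondary subtlety is the numerical constant tracking: the residual bound has to be absorbed into the $C_3$ slot without upsetting the constraints \eqref{eq:constant_cond} that also govern the companion incoherence induction \eqref{equ_inductive_att_c2}, which is why $C_3$ carries the extra factor $1/\kappa$ relative to $C_1$ and $C_2$.
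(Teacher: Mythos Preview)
Your proposal is correct and follows essentially the same two-term decomposition as the paper, with one small regrouping: you put the leave-one-out function $f^{(l)}$ in the contraction term and the full iterate $\bX_t$ in the single-measurement residual, whereas the paper does the opposite---its contraction term is $\bX_t\bQ_t - \bX_t^{(l)}\bR_t^{(l)} - \mu\bigl[\nabla f(\bX_t\bQ_t) - \nabla f(\bX_t^{(l)}\bR_t^{(l)})\bigr]$ using the \emph{full} $f$, and the residual is $\frac{\mu}{m}\bigl(\|\ba_l^{\top}\bX_t^{(l)}\|_2^2 - y_l\bigr)\ba_l\ba_l^{\top}\bX_t^{(l)}\bR_t^{(l)}$ evaluated at the leave-one-out iterate.

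The practical advantage of the paper's regrouping is that Lemma~\ref{lemma_restrict_concen_hessian_neighbor} is stated for $\nabla^2 f$, so the contraction step requires no additional concentration work and no further union bound over $l$; your version would need either to re-derive the restricted convexity/smoothness bounds for each $\nabla^2 f^{(l)}$ or to control the single Hessian term $\nabla^2 f - \nabla^2 f^{(l)}$ uniformly in the RIC. Conversely, your residual bound is marginally cleaner because hypothesis~\eqref{equ_inductive_att_c2} applies directly to $\bX_t\bQ_t$, while the paper has to route through $\bX_t^{(l)}$ via the triangle inequality (its display~\eqref{eq:incoherence_diff}). Either way the arithmetic closes under the same constant constraints~\eqref{eq:constant_cond} and the same sample-size condition; note only that your residual actually carries an extra $\sqrt{\log n}$ from the $\|\ba_l^{\top}\bX_t\|_2$ factor, matching the paper's $(\log n)^{3/2}$ rather than the $\log n$ you wrote.
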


In addition, the  incoherence property of $\bX_{t+1}^{(l)}$ with respect to the $l$th sensing vector $\ba_l$ is  relatively easier to establish, due to their statistical independence. Combined with the proximity bound from Lemma~\ref{lemma:proximity}, this allows us to justify the incoherence property of the original iterates $\bX_{t+1}$, as summarized in the lemma below, whose proof is given in Appendix~\ref{proof_incoherence_induction}. 
\begin{lemma}\label{lemma:incoherence_induction}
Suppose the sample size obeys $m\ge c \frac{ \Vert \bX^{\natural} \Vert_{\F}^{4}}{\sigma_{r}^{4}\left(\bX^{\natural}\right) } n r \log{\left(n \kappa\right)}$ for some sufficiently large constant $c>0$. If the induction hypotheses \eqref{eq:induction_hyp} hold for the $t$th iteration, with probability exceeding $1 - c_{1} n^{-12}  - me^{-1.5n} - 2mn^{-12}$,
\begin{equation*}
\max_{1\le l\le m} \left\Vert \ba_{l}^{\top} \left( \bX_{t+1}\bQ_{t+1} - \bX^{\natural} \right) \right\Vert_{2} 
\le C_{2}  \left(1 - 0.5 \sigma_{r}^{2}\left(\bX^{\natural}\right) \mu \right)^{t+1} \sqrt{\log{n}} \cdot \frac{\sigma_{r}^{2}\left(\bX^{\natural}\right)}{\left\Vert \bX^{\natural} \right\Vert_{\F}}
\end{equation*}
holds as long as the step size satisfies \eqref{equ_induc_stepsize}. Here, $c_1>0$ is some universal constant.
\end{lemma}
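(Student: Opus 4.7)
My plan is to bound $\|\ba_l^{\top}(\bX_{t+1}\bQ_{t+1} - \bX^{\natural})\|_2$ by comparing the iterate $\bX_{t+1}$ with its leave-one-out counterpart $\bX_{t+1}^{(l)}$, which by construction is independent of $\ba_l$. Lemmas~\ref{lemma:lemma_induction} and~\ref{lemma:proximity} already establish \eqref{equ_inductive_att_c1} and \eqref{equ_inductive_att_c3} at iteration $t+1$, and these will be my main inputs (I write $\rho := 1 - 0.5\sigma_r^{2}(\bX^{\natural})\mu$ as shorthand). The triangle inequality splits the target into
\begin{equation*}
\|\ba_l^{\top}(\bX_{t+1}\bQ_{t+1} - \bX^{\natural})\|_2 \le \underbrace{\|\ba_l^{\top}(\bX_{t+1}\bQ_{t+1} - \bX_{t+1}^{(l)}\bR_{t+1}^{(l)})\|_2}_{=:\,\alpha_l} + \underbrace{\|\ba_l^{\top}(\bX_{t+1}^{(l)}\bR_{t+1}^{(l)} - \bX^{\natural})\|_2}_{=:\,\beta_l}.
\end{equation*}

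For $\alpha_l$, Cauchy--Schwarz gives $\alpha_l \le \|\ba_l\|_2\cdot\|\bX_{t+1}\bQ_{t+1} - \bX_{t+1}^{(l)}\bR_{t+1}^{(l)}\|_{\F}$. Standard Gaussian norm concentration yields $\max_l\|\ba_l\|_2 \le \sqrt{6n}$ on an event of probability at least $1 - me^{-1.5n}$, and combining this with \eqref{equ_inductive_att_c3} at iteration $t+1$ produces a contribution of order $\sqrt{6}\,C_3\,\rho^{t+1}\sqrt{\log n}\,\sigma_r^{2}(\bX^{\natural})/(\kappa\|\bX^{\natural}\|_{\F})$, matching the $\sqrt{6}\,C_3$ piece in \eqref{eq:constant_cond} (after absorbing the $1/\kappa$ factor using $\kappa \ge 1$).

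For $\beta_l$, the difficulty is that $\bR_{t+1}^{(l)}$ depends on $\ba_l$ through $\bQ_{t+1}$, so $\bX_{t+1}^{(l)}\bR_{t+1}^{(l)}$ is not independent of $\ba_l$. I decouple by introducing the auxiliary orthogonal matrix $\bH_{t+1}^{(l)} := \argmin_{\bP\in\mathcal{O}^{r\times r}}\|\bX_{t+1}^{(l)}\bP - \bX^{\natural}\|_{\F}$, which depends only on $\{\ba_i\}_{i\neq l}$, and write
\begin{equation*}
\beta_l \le \|\ba_l^{\top}(\bX_{t+1}^{(l)}\bH_{t+1}^{(l)} - \bX^{\natural})\|_2 + \|\ba_l^{\top}\bX_{t+1}^{(l)}\|_2\cdot\|\bR_{t+1}^{(l)} - \bH_{t+1}^{(l)}\|.
\end{equation*}
The first summand has its matrix independent of $\ba_l$, so a Gaussian tail bound for $\|\bV^{\top}\ba_l\|_2$ (applied to $\bV = \bX_{t+1}^{(l)}\bH_{t+1}^{(l)} - \bX^{\natural}$) gives an estimate of order $\sqrt{\log n}\,\|\bV\|_{\F}$ on an event of probability at least $1 - n^{-12}$. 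Since $\bH_{t+1}^{(l)}$ minimizes Frobenius distance to $\bX^{\natural}$, the triangle inequality yields $\|\bV\|_{\F} \le \|\bX_{t+1}\bQ_{t+1} - \bX^{\natural}\|_{\F} + \|\bX_{t+1}\bQ_{t+1} - \bX_{t+1}^{(l)}\bR_{t+1}^{(l)}\|_{\F}$, and the induction hypotheses at $t+1$ contribute the $C_1$- and extra $C_3$-pieces in \eqref{eq:constant_cond}. For the second summand, an orthogonal-Procrustes perturbation estimate (using that $\bX_{t+1}^{(l)}$ is well-conditioned because it is close to $\bX^{\natural}$) gives $\|\bR_{t+1}^{(l)} - \bH_{t+1}^{(l)}\| \lesssim \|\bX_{t+1}\bQ_{t+1} - \bX^{\natural}\|_{\F}/\sigma_r(\bX^{\natural})$; meanwhile $\|\ba_l^{\top}\bX_{t+1}^{(l)}\|_2 = \|\ba_l^{\top}\bX_{t+1}^{(l)}\bH_{t+1}^{(l)}\|_2 \le \|\ba_l^{\top}(\bX_{t+1}^{(l)}\bH_{t+1}^{(l)} - \bX^{\natural})\|_2 + \|\ba_l^{\top}\bX^{\natural}\|_2$ is handled by the same Gaussian tail estimates applied to $\bV = \bX^{\natural}$. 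Summing the three pieces and invoking \eqref{eq:constant_cond}, together with a union bound over $1\le l\le m$, yields the claimed bound with failure probability $1 - c_1n^{-12} - me^{-1.5n} - 2mn^{-12}$.

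The main obstacle, I expect, will be the fine bookkeeping of the three orthogonal matrices $\bQ_{t+1}$, $\bR_{t+1}^{(l)}$, and $\bH_{t+1}^{(l)}$: each is defined as a minimizer against a different target, and matching the explicit coefficients $5.86$, $29.3$, and $5\sqrt{6}$ in \eqref{eq:constant_cond} requires an orthogonal-Procrustes perturbation estimate for $\bR_{t+1}^{(l)} - \bH_{t+1}^{(l)}$ that is sharp in the condition number $\kappa$, combined with two-sided use of the induction hypotheses \eqref{equ_inductive_att_c1} and \eqref{equ_inductive_att_c3}. The Gaussian-concentration steps and triangle-inequality manipulations are comparatively routine.
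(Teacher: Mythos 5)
Your overall decomposition is close to the paper's, but there is a genuine $\sqrt{r}$-gap in the way you bound the second summand of $\beta_l$. The problematic step is
\begin{equation*}
\left\Vert \ba_l^{\top}\bX_{t+1}^{(l)}\left(\bR_{t+1}^{(l)} - \bH_{t+1}^{(l)}\right)\right\Vert_2 \;\le\; \left\Vert \ba_l^{\top}\bX_{t+1}^{(l)}\right\Vert_2 \cdot \left\Vert \bR_{t+1}^{(l)} - \bH_{t+1}^{(l)}\right\Vert.
\end{equation*}
By Lemma~\ref{lemma_a_log_tight_concen} you have $\Vert \ba_l^{\top}\bX_{t+1}^{(l)}\Vert_2 \lesssim \sqrt{\log n}\,\Vert\bX^{\natural}\Vert_{\F}$, and your Procrustes perturbation bound gives $\Vert \bR_{t+1}^{(l)} - \bH_{t+1}^{(l)}\Vert \lesssim \Vert\bX_{t+1}\bQ_{t+1}-\bX^{\natural}\Vert_{\F}/\sigma_r(\bX^{\natural}) \lesssim C_1\rho^{t+1}\sigma_r(\bX^{\natural})/\Vert\bX^{\natural}\Vert_{\F}$, so the product is of order $C_1\rho^{t+1}\sqrt{\log n}\,\sigma_r(\bX^{\natural})$. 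The target is $C_2\rho^{t+1}\sqrt{\log n}\,\sigma_r^2(\bX^{\natural})/\Vert\bX^{\natural}\Vert_{\F}$, so your bound overshoots by a factor of $\Vert\bX^{\natural}\Vert_{\F}/\sigma_r(\bX^{\natural}) \gtrsim \sqrt{r/\kappa}$. This cannot be repaired by tuning the absolute constants $C_1,C_2,C_3$ in \eqref{eq:constant_cond}, since the discrepancy grows with $r$.

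The loss comes from splitting the matrix $\bX_{t+1}^{(l)}(\bR_{t+1}^{(l)} - \bH_{t+1}^{(l)})$ into a product of a ``big'' factor ($\Vert\bX_{t+1}^{(l)}\Vert$, which is size $\Vert\bX^{\natural}\Vert$) and a ``small'' factor ($\Vert\bR_{t+1}^{(l)}-\bH_{t+1}^{(l)}\Vert$). The paper avoids this by never separating the two: it uses the pivot $\bX_{t+1}^{(l)}\bQ_{t+1}^{(l)}$ (your $\bX_{t+1}^{(l)}\bH_{t+1}^{(l)}$) from the outset and then invokes Lemma~37 of \cite{ma2017implicit}, which directly controls the \emph{coupled} quantity $\Vert\bX_{t+1}\bQ_{t+1} - \bX_{t+1}^{(l)}\bQ_{t+1}^{(l)}\Vert_{\F} \le 5\kappa\Vert\bX_{t+1}\bQ_{t+1} - \bX_{t+1}^{(l)}\bR_{t+1}^{(l)}\Vert_{\F}$ and hence picks up only the tight $\Vert\cdot\Vert_{\F}$ that is already $\sqrt{\log n /n}$-small by the leave-one-out induction hypothesis. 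You can rescue your derivation by replacing the bound above with
\begin{equation*}
\left\Vert \ba_l^{\top}\bX_{t+1}^{(l)}\left(\bR_{t+1}^{(l)} - \bH_{t+1}^{(l)}\right)\right\Vert_2 \;\le\; \Vert\ba_l\Vert_2\cdot\left\Vert \bX_{t+1}^{(l)}\bR_{t+1}^{(l)} - \bX_{t+1}^{(l)}\bH_{t+1}^{(l)}\right\Vert_{\F},
\end{equation*}
and then bounding the Frobenius factor via the triangle inequality through $\bX_{t+1}\bQ_{t+1}$ together with the Procrustes transitivity lemma, exactly as the paper does. That substitution eliminates the extra $\Vert\bX^{\natural}\Vert_{\F}/\sigma_r(\bX^{\natural})$ factor and recovers the asserted constants.
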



\subsection{Spectral Initialization}
Finally, it remains to verify that the induction hypotheses hold for the initialization, i.e.~the base case when $t=0$. This is supplied by the following lemma, whose proof is given in Appendix~\ref{proof_lemma_initialization}.
\begin{lemma}\label{lemma:initialization}
Suppose that the sample size exceeds $m\ge c \max\left\{ \frac{\left\Vert \bX^{\natural} \right\Vert_{\F}}{\sigma_{r}\left( \bX^{\natural} \right) } \sqrt{r},  \kappa  \right\} \frac{\left\Vert \bX^{\natural} \right\Vert_{\F}^{5}}{\sigma_{r}^{5}\left( \bX^{\natural} \right) }  n \sqrt{r} \log{n}$ for some sufficiently large constant $c>0$. Then $\bX_0$ satisfies \eqref{eq:induction_hyp} with probability at least $1 - c_{1}n^{-12} -  me^{-1.5n} - 3mn^{-12}$, where $c_1$ is some absolute positive constant. 
\end{lemma}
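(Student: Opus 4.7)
\medskip

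The plan is to establish sharp spectral concentration of the sample operator $\bY$ around its expectation $\mathbb{E}[\bY] = \tfrac{1}{2}\|\bX^{\natural}\|_{\F}^{2}\bI_n + \bX^{\natural}\bX^{\natural\top}$, and of the scalar $\lambda$ around $\tfrac{1}{2}\|\bX^{\natural}\|_{\F}^{2}$, so that the de-biased matrix $\bY - \lambda\bI_n$ is close (in spectral norm) to the rank-$r$ PSD matrix $\bX^{\natural}\bX^{\natural\top}$. Since each summand $y_i\ba_i\ba_i^{\top}$ is a heavy-tailed fourth-order Gaussian form, I would bound $\|\bY-\mathbb{E}[\bY]\|$ via a truncation-based matrix Bernstein-type inequality (as standard in the Wirtinger-flow and related literature), and handle the scalar $\lambda$ by Hanson--Wright. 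Under the sample-size condition of the lemma, this yields $\|\bY-\mathbb{E}[\bY]\|\le \delta\,\sigma_r^{2}(\bX^{\natural})$ and $|\lambda - \tfrac{1}{2}\|\bX^{\natural}\|_{\F}^{2}|\le \delta\,\sigma_r^{2}(\bX^{\natural})$ for a tiny constant $\delta$, so that $\|\bY - \lambda\bI_n - \bX^{\natural}\bX^{\natural\top}\|\le 2\delta\,\sigma_r^{2}(\bX^{\natural})$.

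Next, I would translate this into the Frobenius-error bound \eqref{equ_inductive_att_c1}. Using Weyl's inequality one recovers $|[\boldsymbol{\Lambda}_0]_i - \sigma_i^{2}(\bX^{\natural})|\lesssim \delta\,\sigma_r^{2}(\bX^{\natural})$; using Davis--Kahan one bounds $\|\bZ_0\bZ_0^{\top}-\bU^{\natural}\bU^{\natural\top}\|\lesssim \delta$, where $\bU^{\natural}$ holds the top-$r$ eigenvectors of $\bX^{\natural}\bX^{\natural\top}$. Combining the eigenvalue and eigenvector bounds, a Procrustes-type argument (of the form $\|\bX_0\bQ_0-\bX^{\natural}\|_{\F}\lesssim \|\bX_0\bX_0^{\top}-\bX^{\natural}\bX^{\natural\top}\|_{\F}/\sigma_r(\bX^{\natural})$) then delivers \eqref{equ_inductive_att_c1} with constant $C_1$ provided $\delta$ is small enough, which the stated sample complexity guarantees.

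For the leave-one-out proximity bound \eqref{equ_inductive_att_c3}, the key observation is that $\bY - \bY^{(l)} = \tfrac{1}{2m}\,y_l\,\ba_l\ba_l^{\top}$ is rank-one with spectral norm $\lesssim \|\bX^{\natural}\|_{\F}^{2}\,\tfrac{n}{m}\log n$ (using $y_l\le \|\ba_l^\top\bX^\natural\|_2^2 \lesssim \|\bX^\natural\|_F^2 \log n$ and $\|\ba_l\|_2^2\lesssim n$ via Lemma~\ref{lemma_a_sqrtroot_concen}-type bounds, and similarly $|\lambda-\lambda^{(l)}|$ is tiny). A second application of Davis--Kahan/Wedin to $\bY-\lambda\bI_n$ versus $\bY^{(l)}-\lambda^{(l)}\bI_n$ (whose eigengap is $\gtrsim \sigma_r^{2}(\bX^{\natural})$) yields $\|\bZ_0\bQ-\bZ_0^{(l)}\|\lesssim \|\bX^{\natural}\|_{\F}^{2}\sqrt{n\log n}/(m\,\sigma_r^{2}(\bX^{\natural}))$; combining with the scaling factor $\boldsymbol{\Lambda}_0^{1/2}$ and taking a union bound over $l$ gives \eqref{equ_inductive_att_c3} with the desired factor $\sqrt{\log n/n}$ provided $m$ exceeds the stated threshold.

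Finally, for the incoherence bound \eqref{equ_inductive_att_c2}, I would invoke the triangle inequality
\[
\big\|\ba_l^{\top}(\bX_0\bQ_0-\bX^{\natural})\big\|_{2}\le \big\|\ba_l^{\top}(\bX_0\bQ_0-\bX_0^{(l)}\bR_0^{(l)})\big\|_{2}+\big\|\ba_l^{\top}(\bX_0^{(l)}\bR_0^{(l)}-\bX^{\natural})\big\|_{2}.
\]
The first term is handled by Cauchy--Schwarz, the bound $\|\ba_l\|_2\lesssim\sqrt{n}$, and the proximity estimate from the previous step, which together give something of order $\sqrt{\log n}\cdot \sigma_r^{2}(\bX^{\natural})/\|\bX^{\natural}\|_{\F}$. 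The second term is where the leave-one-out construction pays off: $\bX_0^{(l)}$ is statistically independent of $\ba_l$, so conditional on $\bX_0^{(l)}$, $\ba_l^{\top}(\bX_0^{(l)}\bR_0^{(l)}-\bX^{\natural})$ is a Gaussian vector with covariance $\|\bX_0^{(l)}\bR_0^{(l)}-\bX^{\natural}\|\cdot(\cdot)$, and standard Gaussian concentration plus a union bound over $l$ produces the required $\sqrt{\log n}$ factor with the constant $C_2$ of \eqref{eq:constant_cond}.

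The main obstacle I expect is the first step: obtaining a spectral-norm concentration bound for $\bY-\mathbb{E}[\bY]$ that is tight enough to meet the specific sample-complexity target $m\gtrsim \max\{\sqrt{r}\|\bX^{\natural}\|_{\F}/\sigma_r(\bX^{\natural}),\kappa\}\cdot\|\bX^{\natural}\|_{\F}^{5}/\sigma_r^{5}(\bX^{\natural})\cdot n\sqrt{r}\log n$. Because the summands $y_i\ba_i\ba_i^{\top}$ are fourth-order Gaussian forms with no bounded moment-generating function, naive matrix Bernstein is too loose; instead one must truncate the sensing vectors at radius $O(\sqrt{n})$ (controlled by Lemma~\ref{lemma_a_sqrtroot_concen}) and bound the truncated and tail parts separately, carefully tracking the dependence on $\|\bX^{\natural}\|_{\F}$ versus $\sigma_r(\bX^{\natural})$ to avoid spurious $\kappa$ factors. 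The rest of the argument is driven by clean perturbation theory, but the final constants in \eqref{eq:constant_cond} also need to be verified by tracing through the Davis--Kahan and Procrustes steps with explicit numerical factors.
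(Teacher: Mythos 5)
Your overall strategy---spectral concentration of $\bY$, Weyl/Davis--Kahan perturbation, leave-one-out proximity, and Gaussian incoherence via independence---matches the paper's route, and the first part (proof of \eqref{equ_inductive_att_c1}) is essentially what the paper does, via Lemma~\ref{lemma_fronorm_lowbound} and Lemma~\ref{lemma_intial_weighted_mat_concen}. But there is a genuine gap in your treatment of \eqref{equ_inductive_att_c2}.

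\textbf{Gap in the incoherence step.} In your triangle inequality you split against $\bX_0^{(l)}\bR_0^{(l)}$ and then claim that, conditionally on $\bX_0^{(l)}$, the vector $\ba_l^{\top}(\bX_0^{(l)}\bR_0^{(l)} - \bX^{\natural})$ is Gaussian so standard concentration applies. This is false: by definition $\bR_0^{(l)} = \argmin_{\bP\in\mathcal{O}^{r\times r}}\|\bX_0\bQ_0 - \bX_0^{(l)}\bP\|_{\F}$ depends on $\bX_0$, hence on $\ba_l$; so $\bX_0^{(l)}\bR_0^{(l)}$ is \emph{not} independent of $\ba_l$, and you cannot invoke Gaussian concentration on this term. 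The paper fixes this by replacing $\bR_0^{(l)}$ with $\bQ_0^{(l)} := \argmin_{\bP}\|\bX_0^{(l)}\bP - \bX^{\natural}\|_{\F}$, which is a function of $\bX_0^{(l)}$ and $\bX^{\natural}$ alone and is therefore independent of $\ba_l$. The price is a rotation-transfer bound (Lemma 37 of \cite{ma2017implicit}), which gives
\begin{equation*}
\big\|\bX_0\bQ_0 - \bX_0^{(l)}\bQ_0^{(l)}\big\|_{\F} \le 5\kappa\big\|\bX_0\bQ_0 - \bX_0^{(l)}\bR_0^{(l)}\big\|_{\F},
\end{equation*}
and this extra factor of $\kappa$ is precisely why the proximity hypothesis \eqref{equ_inductive_att_c3} is stated with $\kappa$ in the denominator. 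Without this maneuver, your argument as written breaks.

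\textbf{Secondary concern on the leave-one-out bound.} For \eqref{equ_inductive_att_c3}, you must use the projected form of Davis--Kahan---bounding $\|E\bZ_0^{(l)}\|_{\F}$ where $E = \tfrac{1}{2m}y_l\ba_l\ba_l^{\top}$, rather than $\|E\|$---so that the factor $\|\ba_l^{\top}\bZ_0^{(l)}\|_2 \lesssim \sqrt{r\log n}$ (independence of $\ba_l$ and $\bZ_0^{(l)}$, Lemma~\ref{lemma_a_log_tight_concen}) replaces a naive $\|\ba_l\|_2 \lesssim \sqrt{n}$. Your stated bound $\sqrt{n\log n}\,\|\bX^{\natural}\|_{\F}^2/(m\,\sigma_r^2)$ is consistent with the projected form in the $\sqrt{n}$ exponent, but it omits the $\sqrt{r}$, an extra $\sqrt{\log n}$, and a $\|\bX^{\natural}\|_{\F}^2/\sigma_r$ factor relative to the correct estimate $\sqrt{n}(\log n)^{3/2}\sqrt{r}\,\|\bX^{\natural}\|_{\F}^4/(m\,\sigma_r^3)$. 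Since the target sample complexity already carries $n\sqrt{r}\log n$ and a fifth power of $\|\bX^{\natural}\|_{\F}/\sigma_r$, these factors are not negligible and must be tracked to avoid a silent loss of a $\sqrt{r}$ or $\kappa$ in the final condition on $m$.
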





\section{Conclusions}

In this paper, we have shown that low-rank positive semidefinite matrices can be recovered from a near-minimal number of random rank-one measurements, via the vanilla gradient descent algorithm following spectral initialization. Our results significantly improve upon existing results in several ways, both computationally and statistically. In particular, our algorithm does not require resampling at every iteration (and hence requires fewer samples). 
The gradient iteration can provably employ a much more aggressive step size than what was suggested in prior literature (e.g.~\cite{sanghavi2017local}), thus resulting in much smaller iteration complexity and hence lower computational cost. 
All of this is enabled by establishing the implicit regularization feature of gradient descent for nonconvex statistical estimation, where the iterates remain incoherent with the sensing vectors throughout the execution of the whole algorithm.

There are several problems that are worth exploring in future investigation.  For example, our theory reveals the typical size of the  fitting  error of $\bX_t$ (i.e.~$y_i - \| \ba_i^{\top} \bX^{\natural} \|_2$) in the presence of noiseless data,  which would serve as a helpful benchmark when 
  separating sparse outliers in the more realistic scenario.   Another direction is to  explore whether implicit regularization remains valid for learning shallow neural networks \cite{zhong17a}. Since the current work can be viewed as learning a one-hidden-layer fully-connected network with a quadratic activation function $\sigma(z)=z^2$, it would be of great interest to study if the techniques utilized herein can be used to develop strong guarantees when the activation function takes other forms.

\section*{Acknowledgements}
The work of Y. Li and Y. Chi is supported in part by AFOSR under the grant FA9550-15-1-0205, by ONR under the grant N00014-18-1-2142, and by NSF under the grants CAREER ECCS-1818571 and CCF-1704245. 



\appendix
\section*{Appendices}

\section{Technical Lemmas}
In this section, we document a few useful lemmas that are used throughout the proof.
\begin{lemma}\cite[Lemma 5.4]{tu2016low}\label{lemma_fronorm_lowbound}
For any matrices $\bX$, $\bU \in \mathbb{R}^{n \times r}$, we have 
\begin{equation*}
\big\Vert \bX\bX^{\top} - \bU\bU^{\top} \big\Vert_{\F} \ge \sqrt{2(\sqrt{2}-1)} \sigma_{r} \left( \bX \right) \mathrm{dist}(\bX,\bU).
\end{equation*}
\end{lemma}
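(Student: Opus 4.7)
The plan is to follow a direct algebraic argument for this classical inequality. First, I would use the orthogonal invariance $\bX\bX^{\top} = (\bX\bP)(\bX\bP)^{\top}$ for any $\bP \in \mathcal{O}^{r\times r}$ to replace $\bX$ by $\bX\bQ^{\star}$, where $\bQ^{\star} \in \argmin_{\bP \in \mathcal{O}^{r\times r}} \|\bX\bP - \bU\|_{\F}$. This reduces us to the case $\mathrm{dist}(\bX,\bU) = \|\bX - \bU\|_{\F}$ while leaving both $\bX\bX^{\top}$ and $\sigma_r(\bX)$ unchanged. Define $\bH := \bX - \bU$, so the target reads $\|\bX\bX^{\top} - \bU\bU^{\top}\|_{\F} \ge \sqrt{2(\sqrt{2}-1)}\,\sigma_r(\bX)\,\|\bH\|_{\F}$.

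Next, I use the algebraic identity
\[
\bX\bX^{\top} - \bU\bU^{\top} = \bX\bH^{\top} + \bH\bX^{\top} - \bH\bH^{\top}
\]
to decompose the difference into a symmetric linear part $M_1 := \bX\bH^{\top} + \bH\bX^{\top}$ and a symmetric quadratic remainder $M_2 := -\bH\bH^{\top}$. The key enabling fact is the first-order optimality condition for $\bQ^{\star}$, which (via the Procrustes problem) forces $\bU^{\top}\bX$ to be symmetric PSD and therefore makes $S := \bX^{\top}\bH = \bX^{\top}\bX - \bX^{\top}\bU$ a symmetric $r\times r$ matrix.

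The third step is to lower bound $\|M_1\|_{\F}^2$. A direct expansion, together with symmetry of $S$, gives
\[
\|M_1\|_{\F}^2 = 2\|\bX\bH^{\top}\|_{\F}^{2} + 2\mathrm{tr}(S^{2}) \;\geq\; 2\sigma_r^{2}(\bX)\|\bH\|_{\F}^{2} + 2\|S\|_{\F}^{2},
\]
where I use $\|\bX\bH^{\top}\|_{\F}^{2} = \mathrm{tr}(\bX^{\top}\bX\,\bH^{\top}\bH) \geq \sigma_r^{2}(\bX)\|\bH\|_{\F}^{2}$ from the variational characterization of singular values, plus $\mathrm{tr}(S^{2}) = \|S\|_{\F}^{2}$.

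The last step is to combine the pieces via $\|M_1 + M_2\|_{\F}^{2} = \|M_1\|_{\F}^{2} + 2\langle M_1, M_2\rangle + \|M_2\|_{\F}^{2}$. A short calculation using symmetry of $S$ reduces the cross term to $\langle M_1, M_2\rangle = -2\mathrm{tr}(S\,\bH^{\top}\bH)$, which need not have a definite sign. Controlling this via Cauchy--Schwarz and Young's inequality with a tunable parameter $\alpha>0$, together with the trivial bound $\|\bH\bH^{\top}\|_{\F} \leq \|\bH\|_{\F}^{2}$, yields a lower bound on $\|\bX\bX^{\top}-\bU\bU^{\top}\|_{\F}^{2}$ that is quadratic in the ratio $\rho := \|\bH\|_{\F}^{2}/\sigma_r^{2}(\bX)$. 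The main obstacle is the final optimization: one must carefully pick $\alpha$ (and, if needed, split into the regimes $\rho \leq 2-\sqrt{2}$ and $\rho > 2-\sqrt{2}$ to decide whether the quadratic remainder helps or hurts) so that the competing contributions combine to the tight constant $2(\sqrt{2}-1)$, rather than a looser universal constant. Taking square roots then delivers the claimed inequality.
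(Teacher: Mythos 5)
The reduction via orthogonal Procrustes, the identity $\bX\bX^{\top}-\bU\bU^{\top} = \bX\bH^{\top}+\bH\bX^{\top}-\bH\bH^{\top}$, the symmetry of $S=\bX^{\top}\bH$, and the formula $\|M_1\|_{\F}^2 = 2\|\bX\bH^{\top}\|_{\F}^2 + 2\|S\|_{\F}^2$ are all correct. (The paper itself simply cites \cite[Lemma 5.4]{tu2016low} without reproducing a proof, so I am judging your plan on its own merits.) The trouble is the final step you describe as a ``short calculation.'' The extremal configuration for this inequality is (up to rotation) $\bX=(x,0)^{\top}$, $\bU=(0,u_2)^{\top}$ with $u_2^{2}=(\sqrt{2}-1)x^{2}$, for which $\rho := \|\bH\|_{\F}^{2}/\sigma_r^{2}(\bX) = \sqrt{2}$ and the inequality is an equality. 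With the toolkit you list, Cauchy--Schwarz gives $|\mathrm{tr}(S\bH^{\top}\bH)|\le \|S\|_{\F}\|\bH^{\top}\bH\|_{\F}$, and then minimizing freely over $\|S\|_{\F}$ (the content of the Young-with-parameter-$\alpha$ step) yields, even after \emph{retaining} $\|M_2\|_{\F}^{2}$ and bounding $\|\bH^{\top}\bH\|_{\F}\le\|\bH\|_{\F}^{2}$, only $\|\bD\|_{\F}^{2}\ge (2-\rho)\sigma_r^{2}(\bX)\|\bH\|_{\F}^{2}$, which covers $\rho \le 4-2\sqrt{2}\approx 1.17$ and hence misses the extremal case $\rho=\sqrt{2}\approx 1.41$. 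Dropping $\|M_2\|_{\F}^{2}$, as your Regime~1 does, covers only $\rho\le 2-\sqrt{2}$, which is even shorter. And the ``trivial bound'' $\|\bH\bH^{\top}\|_{\F}\le\|\bH\|_{\F}^{2}$ is an \emph{upper} bound on a quantity that appears with a \emph{positive} sign in $\|M_1+M_2\|_{\F}^{2}$, so it can only weaken a lower bound.

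What is missing is a way to exploit the Procrustes condition beyond the bare symmetry of $S$. Completing the square exactly gives the identity
\begin{equation*}
\|\bD\|_{\F}^{2} \;=\; 2\|\bX\bH^{\top}\|_{\F}^{2} + 2\|S-\bH^{\top}\bH\|_{\F}^{2} - \|\bH^{\top}\bH\|_{\F}^{2}
\;=\; 2\|\bX\bH^{\top}\|_{\F}^{2} + 2\|\bU^{\top}\bH\|_{\F}^{2} - \|\bH^{\top}\bH\|_{\F}^{2},
\end{equation*}
since $S-\bH^{\top}\bH=(\bX-\bH)^{\top}\bH=\bU^{\top}\bH$. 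The extremal case shows that $\|\bU^{\top}\bH\|_{\F}^{2}$ cannot be discarded: it contributes exactly the amount needed to reach $2(\sqrt{2}-1)$. So the real obstacle is a nontrivial \emph{lower} bound on $\|\bU^{\top}\bH\|_{\F}^{2}$ (equivalently, a constraint ruling out the unfavorable range of $\|S\|_{\F}$ that your free minimization passes through), using that $\bU^{\top}\bX\succeq 0$. None of the tools you name supplies this. As written, your plan would at best establish the inequality with a strictly weaker constant on a restricted range of $\rho$, and would not close at $\rho=\sqrt{2}$ where the stated constant is tight. To make the argument go through, you need to replace ``Cauchy--Schwarz, Young, and the trivial bound'' by a genuine use of the positive-semidefiniteness of $\bU^{\top}\bX$ to control $\|\bU^{\top}\bH\|_{\F}$; this is the heart of the proof in \cite{tu2016low} and is what your sketch leaves out.
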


\begin{lemma}[Covering number for low-rank matrices]\cite[Lemma 3.1]{candes2011tight}\label{lemma_covering_net} 
	Let $\mathcal{S}_r = \{\bX\in\mathbb{R}^{n_1\times n_2},\mathrm{rank}(\boldsymbol{X})\leq r, \| \boldsymbol{X}\|_{\F}=1\}$. Then there exists an $\epsilon$-net $\bar{\mathcal{S}}_r\subset \mathcal{S}_r$ with respect to the Frobenius norm obeying $ \left\vert \bar{\mathcal{S}}_{r} \right\vert \leq (9/\epsilon)^{(n_1+n_2+1)r}$. 
\end{lemma}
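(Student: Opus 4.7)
The plan is to parametrize $\mathcal{S}_{r}$ via the singular value decomposition and to construct $\epsilon/3$-nets separately for each of the three SVD factors. First I would write any $\bX \in \mathcal{S}_{r}$ as $\bX = \bU \bSigma \bV^{\top}$, where $\bU \in \mathbb{R}^{n_{1}\times r}$ and $\bV \in \mathbb{R}^{n_{2}\times r}$ have orthonormal columns and $\bSigma \in \mathbb{R}^{r \times r}$ is diagonal with nonnegative entries satisfying $\|\bSigma\|_{\F} = \|\bX\|_{\F} = 1$. In particular, $\bU$ and $\bV$ lie in the unit spectral-norm ball of their respective matrix spaces, while the vector of singular values lies in the unit $\ell_{2}$-ball of $\mathbb{R}^{r}$.

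Next I would exploit the product structure of the approximation error. Given candidate approximants $\widetilde{\bU}, \widetilde{\bSigma}, \widetilde{\bV}$ to $\bU, \bSigma, \bV$, I would telescope $\bX - \widetilde{\bU}\widetilde{\bSigma}\widetilde{\bV}^{\top} = (\bU - \widetilde{\bU})\bSigma\bV^{\top} + \widetilde{\bU}(\bSigma - \widetilde{\bSigma})\bV^{\top} + \widetilde{\bU}\widetilde{\bSigma}(\bV - \widetilde{\bV})^{\top}$ and, using the submultiplicative-type bound $\|ABC\|_{\F} \leq \|A\|\,\|B\|_{\F}\,\|C\|$ together with $\|\bU\| = \|\bV\| = \|\widetilde{\bU}\| = \|\widetilde{\bV}\| = 1$ and $\|\bSigma\|_{\F} = \|\widetilde{\bSigma}\|_{\F} = 1$, conclude $\|\bX - \widetilde{\bU}\widetilde{\bSigma}\widetilde{\bV}^{\top}\|_{\F} \leq \|\bU - \widetilde{\bU}\| + \|\bSigma - \widetilde{\bSigma}\|_{\F} + \|\bV - \widetilde{\bV}\|$. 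This decomposition signals that the nets for $\bU$ and $\bV$ need only be tight in the spectral norm (not Frobenius), which is the crux of the whole argument.

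Then I would invoke the standard volume-comparison argument to count each piece. The unit spectral-norm ball of $\mathbb{R}^{n_{i}\times r}$, viewed as a subset of the $n_{i}r$-dimensional ambient space, admits an $(\epsilon/3)$-net in spectral norm of cardinality at most $(9/\epsilon)^{n_{i}r}$ — the factor $9/\epsilon$ comes from the inflation-to-shrinkage ratio $(1 + \epsilon/6)/(\epsilon/6) \leq 9/\epsilon$ for $\epsilon \leq 1$. Likewise the unit $\ell_{2}$-ball of $\mathbb{R}^{r}$ has an $(\epsilon/3)$-net of cardinality at most $(9/\epsilon)^{r}$. Assembling the three nets into triples $(\widetilde{\bU}, \widetilde{\bSigma}, \widetilde{\bV})$ and mapping each triple to $\widetilde{\bU}\widetilde{\bSigma}\widetilde{\bV}^{\top}$ produces a subset of $\mathbb{R}^{n_{1}\times n_{2}}$ of cardinality at most $(9/\epsilon)^{(n_{1}+n_{2}+1)r}$ that $\epsilon$-covers $\mathcal{S}_{r}$ in Frobenius norm. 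Finally, to guarantee that the net lives inside $\mathcal{S}_{r}$ rather than in ambient space, I would replace each external candidate by its nearest point in $\mathcal{S}_{r}$; by the triangle inequality this at most doubles the precision, which can then be absorbed into constants (or paid for upfront by starting with an $\epsilon/6$-net in the three factors).

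The main obstacle is really a matter of bookkeeping: one must pick the right norm for each factor so that the telescope stays bounded by $1$ on each factor. Using the Frobenius norm for $\bU$ and $\bV$ would pay an extra $\sqrt{r}$ because $\|\bU\|_{\F} = \sqrt{r}$, whereas the spectral-norm bound $\|\bU\| = 1$ gives precisely the scaling one needs. Once the norm choices are made, the volume bound on the spectral unit ball delivers the stated count $(9/\epsilon)^{(n_{1}+n_{2}+1)r}$ with no $r$-dependent overhead inside the base.
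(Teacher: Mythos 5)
Your proof is correct and follows essentially the same route as the cited source (Cand\`es--Plan, Lemma~3.1): SVD parametrization, product nets for the two column-orthonormal factors and for the singular-value vector, and a telescoping estimate. The only difference is the norm used to cover the Stiefel factors: you use the spectral norm via $\|ABC\|_{\F} \le \|A\|\,\|B\|_{\F}\,\|C\|$, while Cand\`es--Plan cover in the max-column-$\ell_2$ norm $\|\cdot\|_{1,2}$ and use $\|(\bU-\widetilde{\bU})\bSigma\|_{\F}^2 = \sum_i \sigma_i^2\|\bu_i - \widetilde{\bu}_i\|_2^2 \le \|\bU-\widetilde{\bU}\|_{1,2}^2$; both norms equal one on the Stiefel manifold and thus dodge the $\sqrt{r}$ inflation you correctly flag. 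One small wrinkle: the closing re-projection step is unnecessary and, taken literally with $\epsilon/6$-nets, would degrade the base from $9$ to $13$ for small $\epsilon$. The maximal-$\epsilon$-separated-set argument already delivers nets \emph{inside} each Stiefel manifold and inside the nonnegative unit sphere (one bounds the number of disjoint $\epsilon/2$-balls centered in the set by a volume comparison), and since $\|\widetilde{\bU}\widetilde{\bSigma}\widetilde{\bV}^{\top}\|_{\F}=\|\widetilde{\bSigma}\|_{\F}=1$ whenever $\widetilde{\bU},\widetilde{\bV}$ have orthonormal columns, the assembled product net automatically sits in $\mathcal{S}_r$.
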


\begin{lemma}\cite{bentkus2003inequality, candes2015phase}
\label{lemma_oneside_tail_bound}
Suppose $x_{1}, \cdots, x_{m}$ are i.i.d.~real-valued random variables obeying $x_{i} \le  b$ for some deterministic number $b >0$, $\mathbb{E}\left[x_{i}\right] = 0$, and $\mathbb{E}\left[x_{i}^{2}\right] = d^{2}$. Setting $\sigma^{2} = m \cdot \max\{b^{2}, d^{2}\}$, we have 
\begin{equation*}
\mathbb{P}\left(\sum_{i=1}^{m} x_{i} \ge t\right) \le \min\left\{\exp{\left(-\frac{t^{2}}{2\sigma^{2}}\right)}, 25\left(1 - \Phi\left(\frac{t}{\sigma}\right) \right)\right\},
\end{equation*}
where $\Phi(\cdot)$ is the cumulative distribution function of a standard Gaussian variable.
\end{lemma}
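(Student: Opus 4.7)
The plan is to establish the two bounds inside the minimum separately and then take their minimum; both are classical results (Bentkus 2003; the sub-Gaussian half also appears in the supporting material of \cite{candes2015phase}). The sub-Gaussian estimate is a routine Chernoff/Bennett argument, while the Gaussian-tail estimate requires the sharper Bentkus comparison.

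For the bound $\exp(-t^2/(2\sigma^2))$, I would run a standard moment generating function argument. Fix $\lambda>0$ and use the Taylor identity $e^{\lambda x} = 1 + \lambda x + \lambda^2 x^2 g(\lambda x)$, where $g(u):=(e^u-1-u)/u^2$ (with $g(0)=1/2$) is a nondecreasing function on $\mathbb{R}$. Since $x_i \le b$ and $\lambda>0$, monotonicity of $g$ gives $x_i^2 g(\lambda x_i) \le x_i^2 g(\lambda b)$, so taking expectations and using $\mathbb{E}[x_i]=0$, $\mathbb{E}[x_i^2]=d^2$ yields $\mathbb{E}[e^{\lambda x_i}] \le 1+\lambda^2 d^2 g(\lambda b)$. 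With $\log(1+u)\le u$ and independence this gives the Bennett MGF inequality $\log \mathbb{E}[e^{\lambda \sum_i x_i}] \le m d^2 (e^{\lambda b}-1-\lambda b)/b^2$. Applying Markov's inequality to $e^{\lambda \sum_i x_i}$ and optimizing over $\lambda>0$ gives a Bennett tail, which I would then upper-bound by $\exp(-t^2/(2\sigma^2))$ with $\sigma^2 = m\max\{b^2,d^2\}$ by splitting into two regimes: when $d^2 \ge b^2$, choose $\lambda = t/(md^2)$ of suitable size and use $e^u-1-u\le u^2/2$ on the relevant range; when $b^2 > d^2$, absorb the variance into $b^2$ and apply the Hoeffding-style optimization $\lambda = t/(mb^2)$. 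A small-case analysis shows the sub-Gaussian form holds uniformly in $t>0$.

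The bound $25(1-\Phi(t/\sigma))$ cannot come from the MGF alone, since in the very-large-$t$ regime the Gaussian tail $1-\Phi(t/\sigma) \asymp (\sigma/t)\exp(-t^2/(2\sigma^2))$ is strictly smaller than $\exp(-t^2/(2\sigma^2))$ by a polynomial factor. Here I would invoke the Bentkus inequality as a black box. Its proof proceeds in three steps: (i) a Tchebycheff-type extremal reduction, showing that among all laws with $\mathbb{E}[x_i]=0$, $\mathbb{E}[x_i^2]=d^2$, $x_i\le b$, the tail $\mathbb{P}(\sum_i x_i \ge t)$ is maximized by a two-point distribution supported on $\{-d^2/b,\,b\}$; (ii) explicit analysis of the tail of the resulting lattice sum via binomial/Laplace asymptotics; (iii) a uniform comparison of the resulting expression with the Gaussian tail $1-\Phi(t/\sigma)$, extracting a universal multiplicative constant no larger than $25$ (the value $25$ is a convenient, not tight, choice). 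Taking the pointwise minimum of the two tail bounds yields the claim.

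The main obstacle is step (iii) of the Gaussian-tail half: matching a discrete lattice-sum tail to a continuous Gaussian tail within a universal constant (rather than a constant depending on $t/\sigma$) is subtle, because one must simultaneously handle the moderate-deviation regime (where a CLT-style argument is nearly tight) and the large-deviation regime (where lattice/boundary effects dominate). This is precisely the technical content of Bentkus' refinement, and is the reason the statement is imported rather than rederived here; the sub-Gaussian half, by contrast, is a direct MGF computation.
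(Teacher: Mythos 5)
This lemma is stated in the paper as an imported result (cited to Bentkus 2003 and Cand\`es et al.\ 2015) with no proof given, so there is no in-paper argument to compare against. Your decision to treat it as a black-box import is therefore the same as what the authors do, and your description of the overall structure --- a sub-Gaussian MGF bound plus the Bentkus Gaussian-tail refinement, taken together via a pointwise minimum --- is accurate.

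The one substantive issue is your sketch of the sub-Gaussian half. The Bennett MGF inequality $\log\mathbb{E}[e^{\lambda\sum_i x_i}]\le m d^2\bigl(e^{\lambda b}-1-\lambda b\bigr)/b^2$ is \emph{not} strong enough to yield $\mathbb{P}(\sum_i x_i\ge t)\le\exp\bigl(-t^2/(2\sigma^2)\bigr)$ with the claimed proxy variance $\sigma^2=m\max\{b^2,d^2\}$. The inequality $e^u-1-u\le u^2/2$ is false for $u>0$ (the opposite inequality holds), and optimizing the Bennett/Bernstein exponent only gives an effective variance of roughly $e\cdot\max\{b^2,d^2\}$ rather than $\max\{b^2,d^2\}$; equivalently, Bernstein's form $\exp\bigl(-t^2/(2md^2+2bt/3)\bigr)$ does not dominate $\exp\bigl(-t^2/(2m\max\{b^2,d^2\})\bigr)$ when $d\ge b$. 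The clean proxy variance $\max\{b^2,d^2\}$ is itself a consequence of the Tchebycheff/extremal reduction to the two-point law on $\{-d^2/b,\,b\}$ --- the same ingredient (i) you ascribe to the Gaussian-tail half --- which yields the Gaussian MGF domination $\mathbb{E}[e^{\lambda x_i}]\le e^{\lambda^2\max\{b^2,d^2\}/2}$ for all $\lambda>0$, from which the sub-Gaussian tail follows by Chernoff. So both halves of the stated bound in fact rely on the extremal reduction, and only the matching-of-constants step (iii) is specific to the $25(1-\Phi)$ bound; the sub-Gaussian half is not a ``direct MGF computation'' via Bennett as your sketch suggests. Since you are importing the lemma anyway this is not a fatal gap, but the stated route would not recover the advertised constants.
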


\begin{lemma}\cite[Theorem 5.39]{Vershynin2012}\label{lemma_aaT_operator_concentration}
Suppose the $\ba_{i}$'s are i.i.d.~random vectors following $\ba_{i}\sim \mathcal{N}\left(\boldsymbol{0}, \bI_n\right)$,  $i=1,\cdots,m$. Then for every $t\ge0$ and $0< \delta\le 1$, 
\begin{equation*}
\left\Vert \bI_{n} - \frac{1}{m} \sum_{i=1}^{m} \ba_{i}\ba_{i}^{\top} \right\Vert \le \delta
\end{equation*}
holds with probability at least $1 - 2e^{-ct^{2}}$, where $\delta = C\sqrt{\frac{n}{m}} + \frac{t}{\sqrt{m}}$. On this event,  for all $\bW\in\mathbb{R}^{n\times r}$, there exists
\begin{equation*}
 \left\vert  \frac{1}{m} \sum_{i=1}^{m} \big\Vert \ba_{i}^{\top}\bW\big\Vert_{2}^{2} - \left\Vert \bW \right\Vert_{\F}^{2}  \right\vert \le  \delta \left\Vert \bW \right\Vert_{\F}^{2}.
\end{equation*} 
\end{lemma}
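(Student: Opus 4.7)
The plan is to treat the two displayed inequalities separately. The first one is not really something I would re-prove: it is a verbatim application of the sub-Gaussian covariance concentration bound (Vershynin's Theorem 5.39), and the paper explicitly attributes it as such. Since $\ba_i \sim \mathcal{N}(\bm{0},\bI_n)$ are sub-Gaussian with unit parameter, the theorem directly yields
$$\left\Vert \bI_n - \frac{1}{m}\sum_{i=1}^m \ba_i \ba_i^\top \right\Vert \le C\sqrt{n/m} + t/\sqrt{m}$$
with probability at least $1-2e^{-ct^2}$, for absolute constants $c,C$. I would just invoke this.

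The second bound is a deterministic consequence of the first, and this is the only part that actually requires a short calculation. I would rewrite the left-hand side as a quadratic form in the deviation matrix. Specifically, using the trace identity,
$$\frac{1}{m}\sum_{i=1}^m \big\Vert \ba_i^\top \bW \big\Vert_2^2 = \mathrm{Tr}\!\left(\bW^\top\!\left(\frac{1}{m}\sum_{i=1}^m \ba_i\ba_i^\top\right)\!\bW\right), \qquad \Vert \bW\Vert_{\F}^2 = \mathrm{Tr}(\bW^\top \bI_n \bW),$$
so that the quantity inside the absolute value equals $\mathrm{Tr}\!\big(\bW^\top \bDelta\, \bW\big)$, where $\bDelta := \tfrac{1}{m}\sum_i \ba_i\ba_i^\top - \bI_n$ is symmetric. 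Then the standard inequality $\lvert \mathrm{Tr}(\bW^\top \bDelta \bW)\rvert \le \Vert \bDelta\Vert \, \Vert \bW\Vert_{\F}^2$ (which follows from the variational characterization of the operator norm applied columnwise to $\bW$) yields
$$\left\vert \frac{1}{m}\sum_{i=1}^m \big\Vert \ba_i^\top \bW \big\Vert_2^2 - \Vert \bW\Vert_{\F}^2 \right\vert \le \Vert \bDelta\Vert\, \Vert \bW\Vert_{\F}^2 \le \delta \Vert \bW\Vert_{\F}^2,$$
using the first inequality on the event it holds. Note this bound is uniform in $\bW\in\mathbb{R}^{n\times r}$ \emph{without any union bound}, since the randomness has been isolated in $\bDelta$.

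There is no genuine obstacle here; the lemma is a packaging result. The only place one might slip is forgetting that the second inequality is uniform in $\bW$ only because it is derived from an operator-norm bound (rather than proving it directly via concentration, which would need a covering-net argument analogous to Lemma~\ref{lemma_covering_net} and would yield a weaker event).
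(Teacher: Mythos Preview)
Your proposal is correct. The paper does not supply its own proof of this lemma; it simply cites Vershynin's Theorem~5.39 and states the two inequalities as a technical fact, so your plan (invoke the cited theorem for the operator-norm bound, then derive the Frobenius-norm consequence deterministically via $\mathrm{Tr}(\bW^\top\bDelta\bW)$) is exactly the intended reading and the natural derivation of the second part.
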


\begin{lemma}\cite{candes2015phase}\label{lemma_a_sqrtroot_concen}
Suppose the $\ba_{i}$'s are i.i.d.~random vectors following $\ba_{i}\sim \mathcal{N}\left(\boldsymbol{0}, \bI_n\right)$, $i=1,\cdots,m$. Then with probability at least $1-me^{-1.5n}$, we have 
\begin{equation*}
\max_{1\le i\le m} \left\Vert \ba_{i}\right\Vert_{2} \le \sqrt{6n}.
\end{equation*}

\end{lemma}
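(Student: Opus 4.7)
The plan is to reduce the uniform bound $\max_i \|\ba_i\|_2 \le \sqrt{6n}$ to a pointwise chi-squared tail bound, and then apply a union bound. Since $\ba_i \sim \mathcal{N}(\boldsymbol{0}, \bI_n)$, the squared norm $\|\ba_i\|_2^2 = \sum_{j=1}^n a_{i,j}^2$ is a chi-squared random variable with $n$ degrees of freedom, with mean $n$. The target event $\|\ba_i\|_2 \le \sqrt{6n}$ corresponds to a deviation by a factor of $6$ above the mean of $\|\ba_i\|_2^2$, so this is a moderate-deviations regime where one cannot rely on the usual sub-Gaussian approximation but must use the full chi-squared moment generating function.

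First I would compute the standard MGF: for any $0 < \lambda < 1/2$,
\begin{equation*}
\mathbb{E}\bigl[\exp(\lambda \|\ba_i\|_2^2)\bigr] = (1-2\lambda)^{-n/2}.
\end{equation*}
Applying Markov's inequality to $\exp(\lambda \|\ba_i\|_2^2)$ with threshold $\exp(6\lambda n)$ yields, for every admissible $\lambda$,
\begin{equation*}
\mathbb{P}\bigl(\|\ba_i\|_2^2 \ge 6n\bigr) \le (1-2\lambda)^{-n/2} \exp(-6\lambda n).
\end{equation*}
Next I would optimize in $\lambda$: setting $\tfrac{d}{d\lambda}\bigl[-6\lambda n - \tfrac{n}{2}\ln(1-2\lambda)\bigr] = 0$ gives $\lambda = 5/12$, at which the bound becomes $6^{n/2} \exp(-5n/2) = \exp\bigl(n(\tfrac{1}{2}\ln 6 - \tfrac{5}{2})\bigr)$. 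Using $\tfrac{1}{2}\ln 6 < 0.9$, the exponent is at most $-1.6 n$, so $\mathbb{P}(\|\ba_i\|_2^2 \ge 6n) \le e^{-1.6 n} \le e^{-1.5n}$.

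Finally, a union bound over $i = 1, \ldots, m$ gives
\begin{equation*}
\mathbb{P}\Bigl(\max_{1\le i\le m} \|\ba_i\|_2 > \sqrt{6n}\Bigr)
= \mathbb{P}\Bigl(\max_{1\le i\le m} \|\ba_i\|_2^2 > 6n\Bigr)
\le m e^{-1.5n},
\end{equation*}
which is exactly the claim. There is no real obstacle here; the only step that requires slight care is making sure the optimized Chernoff exponent is strictly smaller than $-1.5$ so that the stated probability bound $m e^{-1.5n}$ holds with room to spare, which the choice $\lambda = 5/12$ delivers comfortably. Alternatively, one could cite a standard Laurent--Massart-type deviation inequality for $\chi^2_n$ to obtain essentially the same conclusion, but the direct Chernoff calculation above is self-contained and transparent.
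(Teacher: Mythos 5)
Your proof is correct, and since the paper states this lemma by citation to \cite{candes2015phase} without reproducing a proof, the direct Chernoff computation you give is exactly the standard argument one expects behind it. The optimization at $\lambda = 5/12$ yields the exponent $\tfrac{1}{2}\ln 6 - \tfrac{5}{2} \approx -1.60 < -1.5$, so the constant clears with margin, and the union bound over $m$ vectors finishes it.
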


\begin{lemma}\label{lemma_a_log_tight_concen}
Fix $\bW \in\mathbb{R}^{n\times r}$. Suppose the $\ba_{i}$'s are i.i.d.~random vectors following $\ba_{i}\sim \mathcal{N}\left(\boldsymbol{0}, \bI_n\right)$,  $i=1,\cdots,m$. Then with probability at least $1  - m r n^{-13}$, we have
\begin{equation*}
\max_{1\le i\le m} \big\Vert \ba_{i}^{\top} \bW \big\Vert_{2} \le 5.86 \sqrt{\log{n}} \left\Vert \bW \right\Vert_{\F}.
\end{equation*}
\end{lemma}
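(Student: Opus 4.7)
\textbf{Proof plan for Lemma~\ref{lemma_a_log_tight_concen}.} The plan is to reduce the problem to a scalar Gaussian tail bound applied column-by-column and then union-bounded over all $(i,j)$ pairs, $i \in [m]$ and $j \in [r]$. Write $\bW = [\bw_1, \ldots, \bw_r]$ with $\bw_j \in \mathbb{R}^n$. Since $\ba_i \sim \mathcal{N}(\mathbf{0}, \bI_n)$, for each fixed pair $(i,j)$ the inner product $\ba_i^{\top}\bw_j$ is a scalar Gaussian with mean $0$ and variance $\|\bw_j\|_2^2$. The classical sub-Gaussian tail bound gives
\[
\mathbb{P}\bigl(|\ba_i^{\top}\bw_j| > t \|\bw_j\|_2\bigr) \le 2 \exp(-t^2/2)
\]
for every $t > 0$.

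Next, I would set $t = 5.86\sqrt{\log n}$ so that $2\exp(-t^2/2) = 2 n^{-5.86^2/2}$; since $5.86^2/2 > 17$, this is comfortably smaller than $n^{-13}$ for all $n \ge 2$ (the slack absorbs the factor of $2$). Taking a union bound over the $mr$ pairs $(i,j)$ yields
\[
\max_{1 \le i \le m} \max_{1 \le j \le r} \frac{|\ba_i^{\top}\bw_j|}{\|\bw_j\|_2} \le 5.86\sqrt{\log n}
\]
with probability at least $1 - mr n^{-13}$ (treating the case $\bw_j = \mathbf{0}$ trivially).

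Finally, I would convert the columnwise bound into the stated bound on $\|\ba_i^{\top}\bW\|_2$ using the identity $\|\ba_i^{\top}\bW\|_2^2 = \sum_{j=1}^{r}(\ba_i^{\top}\bw_j)^2$. On the event above,
\[
\|\ba_i^{\top}\bW\|_2^2 \le 5.86^2 \log n \cdot \sum_{j=1}^{r}\|\bw_j\|_2^2 = 5.86^2 \log n \cdot \|\bW\|_{\F}^2,
\]
simultaneously for all $i$, which gives the desired bound after taking square roots. There is no genuine obstacle here; the only minor issue is choosing the numerical constant so that $2n^{-c^2/2} \le n^{-13}$ survives all $n \ge 2$, which is why the constant $5.86$ (rather than the bare $\sqrt{26}$) appears in the statement.
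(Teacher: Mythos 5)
Your proof is correct and uses essentially the same approach as the paper: a column-by-column reduction to a scalar tail bound, a union bound over all $mr$ pairs $(i,j)$, and then summing the per-column bounds to recover the Frobenius-norm estimate. The only difference is the choice of scalar tail inequality: the paper invokes the Laurent--Massart $\chi^2_1$ deviation bound with $t = 13\log n$ (which is where $\sqrt{34.3}\approx 5.86$ comes from), while you use the elementary Gaussian tail $\mathbb{P}(|Z|>t)\le 2e^{-t^2/2}$ directly --- for a $\chi^2_1$ variable these are essentially interchangeable, and both land on the stated constant and the $1 - mrn^{-13}$ probability.
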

\begin{proof}
Define $\bW = [\bw_{1}, \bw_{2}, \cdots, \bw_{r} ]$, then we can write $\left\Vert \ba_{i}^{\top} \bW \right\Vert_{2}^{2} = \sum_{k=1}^{r} \left( \ba_{i}^{\top} \bw_{k} \right)^{2}$. Recognize that $\left( \ba_{i}^{\top} \frac{\bw_{k}}{\left\Vert\bw_{k}\right\Vert_{2}} \right)^{2}$ follows the $\chi^2$ distribution with $1$ degree of freedom.  It then follows from \cite[Lemma 1]{laurent2000adaptive} that  
\begin{equation*}
\mathbb{P}\left( \left( \ba_{i}^{\top} \frac{\bw_{k}}{\left\Vert\bw_{k}\right\Vert_{2}} \right)^{2} \ge 1 + 2\sqrt{t} + 2t \right) \le \exp{\left(-t\right)},
\end{equation*}
for any $t > 0$. Taking $t = 13 \log{n}$ yields 
\begin{equation*}
\mathbb{P}\left( \left( \ba_{i}^{\top} \bw_{k} \right)^{2} \le 34.3 \left\Vert\bw_{k}\right\Vert_{2}^{2} \log{n}  \right) \ge 1-  n^{-13}.
\end{equation*}
Finally, taking the union bound, we obtain 
\begin{equation*}
\max_{1\le i\le m} \big\Vert \ba_{i}^{\top} \bW \big\Vert_{2}^{2} \le \sum_{k=1}^{r} 34.3 \left\Vert\bw_{k}\right\Vert_{2}^{2} \log{n}  = 34.3 \left\Vert \bW\right\Vert_{\F}^{2} \log{n} 
\end{equation*}
with probability at least $1 - m r n^{-13}$.
\end{proof}


\begin{lemma}\label{lemma_expectation_gaussian}
Suppose  $\ba\sim \mathcal{N}\left(\boldsymbol{0}, \bI_{n}\right)$. Then for any fixed matrices $\bX$, $\bH\in\mathbb{R}^{n\times r}$, we have
\begin{align*}
\mathbb{E}\left[ \big\Vert \ba^{\top}\bH\big\Vert_{2}^{2}  \big\Vert \ba^{\top}\bX \big\Vert_{2}^{2}\right]
&  = \big\Vert\bH\big\Vert_{\F}^{2} \big\Vert\bX\big\Vert_{\F}^{2} + 2 \big\Vert\bH^{\top}\bX\big\Vert_{\F}^{2};\\
\mathbb{E}\left[\big(\ba^{\top}\bH\bX^{\top}\ba\big)^{2}\right]
& =  \left( \mathrm{Tr}\big(\bH^{\top}\bX\big)\right)^{2} + \mathrm{Tr}\big(\bH^{\top}\bX\bH^{\top}\bX\big) + \big\Vert \bH\bX^{\top}\big\Vert_{\F}^{2}.
\end{align*}
Moreover, for any order $k\geq 1$, we have $\mathbb{E}\big[ \Vert \ba^{\top}\bH \Vert_{2}^{2k}\big] 
 \le c_k \left\Vert \bH \right\Vert_{\F}^{2k}$,
where $c_k>0$ is a numerical constant that  depends only on $k$.
\end{lemma}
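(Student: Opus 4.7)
The three claims are all second- or higher-moment calculations for a standard Gaussian vector, so the plan is to reduce each one to Isserlis' (Wick's) formula or to a convexity argument. Let me outline the three parts separately.

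For the first identity, the plan is to decompose by columns. Writing $\bH=[\bh_1,\ldots,\bh_r]$ and $\bX=[\bx_1,\ldots,\bx_r]$, one has $\|\ba^{\top}\bH\|_2^2=\sum_j(\ba^{\top}\bh_j)^2$ and similarly for $\bX$, so the expectation becomes $\sum_{j,k}\mathbb{E}[(\ba^{\top}\bh_j)^2(\ba^{\top}\bx_k)^2]$. For fixed deterministic vectors $\bu,\bv$ the standard Gaussian fourth-moment identity (a two-line Isserlis computation in the basis where $\bu$ and $\bv$ span a plane) gives $\mathbb{E}[(\ba^{\top}\bu)^2(\ba^{\top}\bv)^2]=\|\bu\|_2^2\|\bv\|_2^2+2(\bu^{\top}\bv)^2$. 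Summing over $j,k$ and recognizing $\sum_{j,k}\|\bh_j\|_2^2\|\bx_k\|_2^2=\|\bH\|_{\F}^2\|\bX\|_{\F}^2$ and $\sum_{j,k}(\bh_j^{\top}\bx_k)^2=\|\bH^{\top}\bX\|_{\F}^2$ finishes this part.

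For the second identity, let $\bA=\bH\bX^{\top}\in\mathbb{R}^{n\times n}$, so $\ba^{\top}\bH\bX^{\top}\ba=\ba^{\top}\bA\ba=\sum_{i,j}a_ia_jA_{ij}$. Then $\mathbb{E}[(\ba^{\top}\bA\ba)^2]=\sum_{i,j,k,l}\mathbb{E}[a_ia_ja_ka_l]A_{ij}A_{kl}$, and Isserlis' theorem gives $\mathbb{E}[a_ia_ja_ka_l]=\delta_{ij}\delta_{kl}+\delta_{ik}\delta_{jl}+\delta_{il}\delta_{jk}$. Summing the three resulting contractions yields $(\mathrm{Tr}\bA)^2+\mathrm{Tr}(\bA\bA^{\top})+\mathrm{Tr}(\bA^2)$. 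Finally, I would translate back to $\bH,\bX$: $\mathrm{Tr}(\bA)=\mathrm{Tr}(\bH\bX^{\top})=\mathrm{Tr}(\bH^{\top}\bX)$, $\mathrm{Tr}(\bA\bA^{\top})=\|\bH\bX^{\top}\|_{\F}^2$, and $\mathrm{Tr}(\bA^2)=\mathrm{Tr}(\bH\bX^{\top}\bH\bX^{\top})=\mathrm{Tr}(\bH^{\top}\bX\bH^{\top}\bX)$ (using cyclicity and $\mathrm{Tr}(\bM)=\mathrm{Tr}(\bM^{\top})$). This is the step that requires the most attention to bookkeeping, though it is still routine.

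For the third claim, I would diagonalize $\bH\bH^{\top}=\sum_i\lambda_i\bu_i\bu_i^{\top}$ with $\lambda_i\ge 0$, and use rotational invariance of the Gaussian to write $\|\ba^{\top}\bH\|_2^2=\ba^{\top}\bH\bH^{\top}\ba=\sum_i\lambda_iZ_i^2$ for i.i.d. $Z_i\sim\mathcal{N}(0,1)$. Since $\sum_i\lambda_i=\|\bH\|_{\F}^2$, applying Jensen's inequality to the convex function $x\mapsto x^k$ with weights $p_i=\lambda_i/\|\bH\|_{\F}^2$ gives $\bigl(\sum_i\lambda_iZ_i^2\bigr)^k\le\|\bH\|_{\F}^{2(k-1)}\sum_i\lambda_iZ_i^{2k}$; taking expectations and using $\mathbb{E}[Z^{2k}]=(2k-1)!!$ yields $\mathbb{E}[\|\ba^{\top}\bH\|_2^{2k}]\le(2k-1)!!\,\|\bH\|_{\F}^{2k}$, which is the desired bound with $c_k=(2k-1)!!$.

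None of the three steps is genuinely difficult; the only place I would slow down is in step (2), where the three Wick contractions give rise to three different traces and one must be careful about which factors of $\bH,\bX$ appear (transposed or not) in each to match the stated form. Everything else reduces to column-wise expansion, Isserlis' identity, or a single application of Jensen's inequality.
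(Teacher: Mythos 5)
Your proof is correct. For the first identity you use the same column-wise reduction as the paper, together with the scalar fourth-moment formula $\mathbb{E}[(\ba^{\top}\bu)^2(\ba^{\top}\bv)^2]=\|\bu\|_2^2\|\bv\|_2^2+2(\bu^{\top}\bv)^2$. For the second identity, the paper expands $\ba^{\top}\bH\bX^{\top}\ba=\sum_i(\ba^{\top}\bh_i)(\ba^{\top}\bx_i)$ and separates diagonal from off-diagonal cross-terms, whereas you apply Isserlis' formula directly to the full quadratic form $\ba^{\top}\bA\ba$ with $\bA=\bH\bX^{\top}$, obtaining $(\mathrm{Tr}\,\bA)^2+\mathrm{Tr}(\bA\bA^{\top})+\mathrm{Tr}(\bA^2)$ in a single contraction and translating back; these are the same Wick computation organized by matrix index rather than by column, and your bookkeeping of the three traces ($\mathrm{Tr}(\bA)=\mathrm{Tr}(\bH^{\top}\bX)$, $\mathrm{Tr}(\bA\bA^{\top})=\|\bH\bX^{\top}\|_{\F}^2$, $\mathrm{Tr}(\bA^2)=\mathrm{Tr}(\bH^{\top}\bX\bH^{\top}\bX)$, the last via transpose and cyclicity) is correct. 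For the $2k$-th moment bound, the paper passes to the SVD of $\bH$, writes $\|\ba^{\top}\bH\|_2^2=\sum_i b_i^2$ with independent $b_i\sim\mathcal{N}(0,\sigma_i^2)$, and asserts $\mathbb{E}\big[(\sum_i b_i^2)^k\big]\le c_k(\sum_i\sigma_i^2)^k$ without spelling out the underlying multinomial estimate; your Jensen argument with weights $\lambda_i/\|\bH\|_{\F}^2$ (equivalently $\sigma_i^2/\|\bH\|_{\F}^2$) is more explicit and pins down the constant as $c_k=(2k-1)!!$. Both routes are valid; yours is the more self-contained for the last part.
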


\begin{proof}
Let $\bX =[ \bx_{1}, \bx_{2}, \cdots, \bx_{r}] $ and $\bH = [\bh_{1}, \bh_{2}, \cdots, \bh_{r} ]$. Based on the simple facts
\begin{align*}
\mathbb{E}\left[(\bx^{\top}\ba)^{2}\ba\ba^{\top}\right] & = \left\Vert\bx\right\Vert_{2}^{2}\bI_{n} + 2\bx\bx^{\top},\\
\mathbb{E}\left[(\ba^{\top}\bx_{i})(\ba^{\top}\bx_{j})\ba\ba^{\top}\right] & = \bx_{i}\bx_{j}^{\top} + \bx_{j}\bx_{i}^{\top} + \bx_{i}^{\top}\bx_{j}\bI_{n},
\end{align*}
we can derive 
\begin{align*}
\mathbb{E}\left[ \left\Vert \ba^{\top}\bH\right\Vert_{2}^{2}  \left\Vert \ba^{\top}\bX \right\Vert_{2}^{2}\right]
& =  \sum_{i=1}^{r}\sum_{j=1}^{r}\mathbb{E}\left[ \left(\ba^{\top}\bh_{i}\right)^{2}  \left(\ba^{\top}\bx_{j}\right)^{2}\right] \\
&= \sum_{i=1}^{r}\sum_{j=1}^{r} \left[ \left\Vert \bh_{i}\right\Vert_{2}^{2} \left\Vert \bx_{j}\right\Vert_{2}^{2} + 2\left(\bh_{i}^{\top}\bx_{j}\right)^{2} \right] \\
& = \left\Vert\bH\right\Vert_{\F}^{2} \left\Vert\bX\right\Vert_{\F}^{2} + 2 \big\Vert\bH^{\top}\bX\big\Vert_{\F}^{2},
\end{align*}
and
\begin{align*}
  \mathbb{E}\left[\left(\ba^{\top}\bH\bX^{\top}\ba\right)^{2}\right]
& = \mathbb{E}\left[ \sum_{i=1}^{r}\left(\ba^{\top}\bh_{i}\right)^{2}\left(\ba^{\top}\bx_{i}\right)^{2} + \sum_{ i \neq j} \left(\ba^{\top}\bh_{i}\right) \left(\ba^{\top}\bx_{i}\right) \left(\ba^{\top}\bh_{j}\right) \left(\ba^{\top}\bx_{j}\right) \right] \\
& = \sum_{i=1}^{r} \left[ \left\Vert \bh_{i}\right\Vert_{2}^{2} \left\Vert \bx_{i}\right\Vert_{2}^{2} + 2\left(\bh_{i}^{\top}\bx_{i}\right)^{2}  \right]  \\
& \quad + \sum_{ i \neq j} \left[ \left(\bh_{i}^{\top}\bx_{i}\right)  \left(\bh_{j}^{\top}\bx_{j}\right) +  \left(\bh_{i}^{\top}\bh_{j}\right) \left(\bx_{i}^{\top}\bx_{j}\right) +  \left(\bh_{i}^{\top}\bx_{j}\right)\left(\bx_{i}^{\top}\bh_{j}\right) \right] \\
& =  \left( \mathrm{Tr}\big(\bH^{\top}\bX\big)\right)^{2}  + \big\Vert \bH\bX^{\top}\big\Vert_{\F}^{2} + \mathrm{Tr}\big(\bH^{\top}\bX\bH^{\top}\bX\big).
\end{align*}

Finally, to bound $\mathbb{E}\left[ \left\Vert \ba^{\top}\bH\right\Vert_{2}^{2k}\right]$ for an arbitrary $\bH\in\mathbb{R}^{n\times r}$, we write the singular value decomposition of $\bH$ as $\bH= \bU\boldsymbol{\Sigma}\bV^{\top}$, where $\bU =[ \bu_{1}, \bu_{2}, \cdots, \bu_{r} ]\in\mathbb{R}^{n\times r}$, $\boldsymbol{\Sigma} = \mathrm{diag}\left\{\sigma_{1}, \sigma_{2}, \cdots, \sigma_{r}\right\}$, and $\bV\in\mathbb{R}^{r\times r}$.
This gives 
	$$\big\Vert \ba^{\top}\bH\big\Vert_{2}^{2} =\sum_{i=1}^r \sigma_{i}^2(\ba^{\top}\bu_{i})^2.$$ 
Let $b_{i} = \sigma_{i}\ba^{\top}\bu_{i}$  for $i=1,\cdots,r$, which are independent random variables obeying
	 $b_i \sim \mathcal{N}\left(0, \sigma_{i}^{2}\right)$ due to the fact $\bU^{\top}\bU = \bI_{r}$. Since $\mathbb{E}\left[b_{i}^{2t}\right] = \sigma_{i}^{2t} \left(2t-1\right)!! \le c_k \sigma_{i}^{2t}$ for any $i=1,\cdots,r$ and $t=1,\cdots,k$, where $c_k$ is some large enough constant depending only on $k$, we arrive at 
	$$\mathbb{E}\left[ \left(\sum_{i=1}^{r} b_{i}^{2} \right)^{k}\right] \le c_k \left(\sum_{i=1}^{r} \sigma_{i}^{2} \right)^{k} = c_{k} \left\Vert \bH \right\Vert_{\F}^{2k}$$
	as claimed. 
\end{proof}

 \begin{lemma}
\label{lemma_intial_weighted_mat_concen}
Fix $\bX^{\natural} \in \mathbb{R}^{n \times r}$. Suppose the $\ba_{i}$'s are i.i.d.~random vectors following $\ba_{i}\sim \mathcal{N}\left(\boldsymbol{0}, \bI_n\right)$, $i=1,\cdots,m$. For any $0< \delta \le 1$, suppose $m \ge c\delta^{-2} n \log{n}$ for some sufficiently large  constant $c>0$. Then we have
\begin{equation*}
\left\Vert \frac{1}{m} \sum_{i=1}^{m} \big\|\ba_i^{\top}\bX^{\natural}\big\|_2^2\ba_{i}\ba_{i}^{\top} -  \big\Vert \bX^{\natural}\big\Vert_{\F}^{2}\,\bI_{n} - 2\bX^{\natural}\bX^{\natural \top} \right\Vert \le \delta \big\Vert \bX^{\natural}\big\Vert_{\F}^{2},
\end{equation*}
with probability at least $1 - c_{1} r n^{-13}$, where $c_1>0$ is some absolute constant. 
\end{lemma}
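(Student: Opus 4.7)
The plan is to reduce the lemma to its rank-one analogue via column decomposition, and then establish the rank-one version by a covering-net argument combined with truncation and a Bernstein-type tail bound.

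Writing $\bX^{\natural} = [\bx_1^{\natural}, \ldots, \bx_r^{\natural}]$, one has
$$\frac{1}{m}\sum_{i=1}^m \big\| \ba_i^{\top} \bX^{\natural} \big\|_2^2 \ba_i \ba_i^{\top} = \sum_{k=1}^r \frac{1}{m}\sum_{i=1}^m \big(\ba_i^{\top} \bx_k^{\natural}\big)^2 \ba_i\ba_i^{\top},$$
while the target matrix decomposes identically as $\|\bX^{\natural}\|_{\F}^2 \bI_n + 2 \bX^{\natural} \bX^{\natural\top} = \sum_{k=1}^r \big(\|\bx_k^{\natural}\|_2^2 \bI_n + 2\bx_k^{\natural} \bx_k^{\natural\top}\big)$. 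By the triangle inequality, it therefore suffices to prove the rank-one estimate
$$\left\| \frac{1}{m}\sum_{i=1}^m (\ba_i^{\top} \bx)^2 \ba_i \ba_i^{\top} - \|\bx\|_2^2 \bI_n - 2\bx \bx^{\top} \right\| \le \delta \|\bx\|_2^2$$
for each fixed $\bx \in \mathbb{R}^n$ with failure probability at most $c_1 n^{-13}$, and then apply it to each $\bx_k^{\natural}$ and union bound over $k = 1, \ldots, r$, which explains the $r$ factor in the probability budget. The expectation identity $\mathbb{E}[(\ba^{\top}\bx)^2 \ba\ba^{\top}] = \|\bx\|_2^2 \bI_n + 2 \bx\bx^{\top}$ itself is obtained by scalarizing against an arbitrary unit vector and invoking Lemma \ref{lemma_expectation_gaussian}.

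For the rank-one bound, after rescaling to $\|\bx\|_2 = 1$, I would control the operator norm by evaluating quadratic forms on a $\tfrac{1}{4}$-net $\mathcal{N}$ of $S^{n-1}$ of cardinality $|\mathcal{N}| \le 9^n$. For each $\bu \in \mathcal{N}$, the task reduces to concentrating the scalar sum
$$S_{\bu} := \frac{1}{m}\sum_{i=1}^m (\ba_i^{\top} \bx)^2 (\ba_i^{\top} \bu)^2$$
around its mean $1 + 2(\bx^{\top}\bu)^2$. Restricted to the event that both $\max_i |\ba_i^{\top} \bx|$ and $\max_i |\ba_i^{\top} \bu|$ are of order $\sqrt{\log n}$ (granted by Lemma \ref{lemma_a_log_tight_concen} together with standard Gaussian tails), each summand of $S_{\bu}$ is bounded by some $K = O(\log^2 n)$, while its centered second moment is bounded by an absolute constant via Lemma \ref{lemma_expectation_gaussian} applied with $k = 2$. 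A Bernstein-type tail bound in the spirit of Lemma \ref{lemma_oneside_tail_bound} then yields $|S_\bu - \mathbb{E}S_\bu| \le \delta/2$ with probability at least $1 - e^{-cn}$ for each fixed $\bu$, after which a union bound over $\mathcal{N}$ completes the rank-one claim.

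The principal technical obstacle is the rather heavy tail of the summands $(\ba_i^{\top}\bx)^2 (\ba_i^{\top}\bu)^2$: as a degree-four polynomial in a Gaussian vector, each is only sub-Weibull of index $1/2$, strictly heavier-tailed than a sub-exponential variable, and a crude truncation at $\log^2 n$ tends to push the Bernstein bound into its max-dominated rather than variance-dominated regime, producing an extra log factor in the sample complexity. To land precisely at the advertised $m \gtrsim \delta^{-2} n \log n$, one must either sharpen the tail via a Hanson--Wright style decoupling of the components of $\ba_i$ along and perpendicular to $\mathrm{span}\{\bx, \bu\}$, or perform a two-level truncation that isolates the rare $|\ba_i^{\top}\bu|^2 \gg 1$ events (handled through their small probability mass) from the typical summands (controlled through a tighter variance proxy). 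Either refinement keeps the argument within the covering-net framework sketched above.
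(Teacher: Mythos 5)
Your reduction matches the paper's exactly: both decompose $\bX^{\natural}$ column-by-column so that the matrix in question becomes a sum of rank-one analogues, and both then union bound over $k=1,\ldots,r$, which is where the factor $r$ in the failure probability comes from. The two arguments diverge only in how the rank-one statement is established. The paper first verifies moment concentration for $\frac{1}{m}\sum_i (\ba_i(1))^{2t}$, $t\in\{1,2,3\}$, and a bound on $\max_i |\ba_i(1)|$, and then invokes \cite[Lemma~7.4]{candes2015phase} as a black box, which internally carries out the covering-net argument with a decomposition of each net direction $\bu$ into its components along and orthogonal to $\bx$. You instead propose to run the covering argument from scratch.

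There is one step in your sketch that, as literally written, does not go through: you condition on both $\max_i |\ba_i^{\top}\bx| \lesssim \sqrt{\log n}$ and $\max_i |\ba_i^{\top}\bu| \lesssim \sqrt{\log n}$ in order to truncate the summands. The first event involves only the single fixed vector $\bx$ and has failure probability polynomially small in $n$, so it can be set aside before the net is introduced. The second event, however, depends on $\bu$, and you need it simultaneously for all $\bu$ in a net of size $9^n$; a union bound over $9^n$ directions and $m$ indices with a per-pair failure probability that is only polynomial in $n$ does not close, so this conditioning cannot be done uniformly over the net at the $\sqrt{\log n}$ scale. You do anticipate exactly this obstruction in your final paragraph and gesture at the correct fix --- decompose $\ba_i$ into its projection along $\bx$ (which can be truncated once, before the net is introduced) and the orthogonal component (which is independent of $\ba_i^{\top}\bx$ and is controlled through its moments, not through a per-$\bu$ truncation). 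That decoupling is precisely what \cite[Lemma~7.4]{candes2015phase} executes, and is the content the paper delegates to that reference. So your plan is directionally right and identifies the hard point accurately, but the central rank-one estimate remains a sketch whose most delicate step you name without carrying out, whereas the paper discharges it by checking the hypotheses of an existing lemma.
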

\begin{proof} 
This proof adapts the results of \cite[Lemma~7.4]{candes2015phase} with refining the probabilities. Let $\ba(1)$ be the first element of a vector $\ba\sim \mathcal{N}\left(\boldsymbol{0}, \bI_n\right)$. Based on \cite[Theorem 1.9]{schudy2011concentration}, we have
\begin{align*}
\mathbb{P}\left( \left\vert  \frac{1}{m} \sum_{i=1}^{m} \left(\ba_{i}(1)\right)^{2}  - 1 \right\vert  \ge \delta \right) 
& \le   e^{2} \cdot e^{ - \left(c_{1} \delta^{2} m \right)^{1/2}};\\
\mathbb{P}\left( \left\vert  \frac{1}{m} \sum_{i=1}^{m} \left(\ba_{i}(1)\right)^{4}  - 3 \right\vert  \ge \delta \right) 
& \le   e^{2} \cdot e^{ - \left(c_{2} \delta^{2} m \right)^{1/4}};\\
\mathbb{P}\left( \left\vert  \frac{1}{m} \sum_{i=1}^{m} \left(\ba_{i}(1)\right)^{6}  - 15 \right\vert  \ge \delta \right) 
& \le   e^{2} \cdot e^{ - \left(c_{3} \delta^{2} m \right)^{1/6}}.
\end{align*}
So, by setting $m \gg \delta^{-2} n $, we have 
\begin{equation}\label{equ_groundtruth_weight_mat_concen_moment}
\left\vert  \frac{1}{m} \sum_{i=1}^{m} \left(\ba_{i}(1)\right)^{2}  - 1 \right\vert  \le \delta, \ \left\vert  \frac{1}{m} \sum_{i=1}^{m} \left(\ba_{i}(1)\right)^{4}  - 3 \right\vert  \le \delta, \ \mathrm{and} \ \left\vert  \frac{1}{m} \sum_{i=1}^{m} \left(\ba_{i}(1)\right)^{6}  - 15 \right\vert  \le \delta, 
\end{equation}
with probability at least $1 - c_{4} n^{-13}$ for some constant $c_4 >0$. Moreover, following \cite[Lemma 1]{laurent2000adaptive}, we know 
\begin{equation*}
\mathbb{P}\left( \left(\ba_{i}(1)\right)^{2}  \ge 1 + 2\sqrt{t} + 2t \right) \le \exp{\left(-t\right)},
\end{equation*}
which gives 
\begin{equation*}
\mathbb{P} \left( \left(\ba_{i}(1)\right)^{2}  \ge 36.5   \log{m} \right)  \le  \exp{\left(- 14 \log{m} \right)} = m^{-14},
\end{equation*}
if setting $t = 14 \log{m}$. Therefore, as long as $m \ge c n$, we have 
\begin{equation}\label{equ_groundtruth_weight_mat_concen_max}
\max_{1\le i\le m}  \left\vert \ba_{i}(1)\right\vert \le \sqrt{36.5   \log{m}},
\end{equation}
with probability at least $1 - c_{5} n^{-13}$ for some constant $c_5 > 0$.

With \eqref{equ_groundtruth_weight_mat_concen_moment} and \eqref{equ_groundtruth_weight_mat_concen_max}, the results in \cite[Lemma~7.4]{candes2015phase} imply that for any $0 <\delta \le 1$, as soon as $m \ge c\delta^{-2} n \log{n}$ for some sufficiently large constant $c$, with probability at least $1-c_1 n^{-13}$,
\begin{equation*}
\left\Vert \frac{1}{m} \sum_{i=1}^{m} \big(\ba_{i}^{\top}\bx \big)^{2} \ba_{i}\ba_{i}^{\top} -  \left\Vert \bx  \right\Vert_{2}^{2}\bI - 2 \bx \bx^{\top}  \right\Vert \le \delta \left\Vert \bx  \right\Vert_{2}^{2}
\end{equation*}
holds for any fixed vector $\bx\in\mathbb{R}^{n}$. Let $\bX^{\natural} =[ \bx_{1}^{\natural}, \bx_{2}^{\natural}, \cdots, \bx_{r}^{\natural} ]$. Instantiating the above bound for the set of vectors $\bx_k^{\natural}$, $k=1,\ldots, r$ and taking the union bound, we have
\begin{align*}
\left\Vert \frac{1}{m} \sum_{i=1}^{m} \big\|\ba_i^{\top}\bX^{\natural}\big\|_2^2 \,\ba_{i}\ba_{i}^{\top} -  \big\Vert \bX^{\natural}\big\Vert_{\F}^{2}\bI - 2\bX^{\natural}\bX^{\natural \top} \right\Vert 
&  \le \sum_{k=1}^r \left\Vert \frac{1}{m} \sum_{i=1}^{m} \big(\ba_i^{\top}\bx_k^{\natural} \big)^2\ba_{i}\ba_{i}^{\top} -  \big\Vert \bx_k^{\natural} \big\Vert_{2}^{2}\,\bI - 2\bx_k^{\natural}\bx_k^{\natural \top} \right\Vert  
	 \\
& \quad \leq \delta \sum_{k=1}^r \big\Vert \bx_k^{\natural}\big\Vert_{2}^{2} = \delta \big\Vert \bX^{\natural}\big\Vert_{\F}^{2}.
\end{align*}
\end{proof}

\section{Proof of Lemma~\ref{lemma_restrict_concen_hessian_neighbor}}\label{proof_lemma_restrict_concen_hessian_neighbor}

The crucial ingredient for proving the lower bound \eqref{eq:lower_bound} is the following lemma, whose proof is provided in Appendix~\ref{proof_lemma_restrict_concen_hessian_lower_looseconst}.


\begin{lemma}\label{lemma_restrict_concen_hessian_lower_looseconst}
Suppose $m\ge c \frac{ \big\Vert \bX^{\natural} \big\Vert_{\F}^{4}}{\sigma_{r}^{4}\left(\bX^{\natural}\right) } n r \log{\left(n \kappa\right)}$ with some large enough positive constant $c$, then with probability at least $1 - c_{1} n^{-12} - m e^{-1.5n}$, we have 
\begin{equation}
\mathrm{vec}\left(\bV\right)^{\top} \nabla^{2}f(\bX) \mathrm{vec}\left(\bV\right) \ge  2 \mathrm{Tr}\left(\bX^{\natural \top}\bV\bX^{\natural \top}\bV\right)  + 1.204  \sigma_{r}^{2} (\bX^{\natural}  ) \left\Vert \bV \right\Vert_{\F}^{2},
\end{equation}
for all matrices $\bX$ and $\bV$ where $\bX $ satisfies $\left\Vert\bX - \bX^{\natural}\right\Vert_{\F}\le \frac{1}{24}\frac{\sigma_{r}^{2}\left(\bX^{\natural}\right)}{\left\Vert \bX^{\natural} \right\Vert_{\F}}$. Here, $c_{1} > 0$ is some universal constant.
\end{lemma}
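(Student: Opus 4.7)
The plan is to expand the quadratic form using the Kronecker structure, reduce to the anchor point $\bX=\bX^{\natural}$, and then handle the perturbation in $\bX$ and uniform control in $\bV$ separately. A direct computation gives
\begin{equation*}
\mathrm{vec}(\bV)^{\top}\nabla^{2}f(\bX)\mathrm{vec}(\bV) = \frac{1}{m}\sum_{i=1}^{m}\bigl[(\|\ba_{i}^{\top}\bX\|_{2}^{2}-y_{i})\|\ba_{i}^{\top}\bV\|_{2}^{2} + 2(\ba_{i}^{\top}\bV\bX^{\top}\ba_{i})^{2}\bigr].
\end{equation*}
By homogeneity I may assume $\|\bV\|_{\F}=1$, and at $\bX=\bX^{\natural}$ the residual term vanishes, leaving the pure fourth-order form $\frac{2}{m}\sum_{i}(\ba_{i}^{\top}\bV\bX^{\natural\top}\ba_{i})^{2}$.

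For this anchor form, Lemma~\ref{lemma_expectation_gaussian} identifies the mean as $2\|\bV\bX^{\natural\top}\|_{\F}^{2} + 2\mathrm{Tr}(\bX^{\natural\top}\bV\bX^{\natural\top}\bV) + 2(\mathrm{Tr}(\bV^{\top}\bX^{\natural}))^{2}$. The PSD trace inequality $\mathrm{Tr}(\bV^{\top}\bV\bX^{\natural\top}\bX^{\natural})\ge \sigma_{r}^{2}(\bX^{\natural})\|\bV\|_{\F}^{2}$, combined with positivity of the last term, yields a baseline of $2\sigma_{r}^{2}(\bX^{\natural})\|\bV\|_{\F}^{2}+2\mathrm{Tr}(\bX^{\natural\top}\bV\bX^{\natural\top}\bV)$, which exceeds the target $1.204\sigma_{r}^{2}(\bX^{\natural})\|\bV\|_{\F}^{2}+2\mathrm{Tr}(\bX^{\natural\top}\bV\bX^{\natural\top}\bV)$ by a comfortable margin. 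Because $(\ba^{\top}\bV\bX^{\natural\top}\ba)^{2}$ is a degree-$4$ Gaussian polynomial with variance $O(\|\bX^{\natural}\|_{\F}^{4})$, a Bernstein/truncation argument in the spirit of Lemma~\ref{lemma_oneside_tail_bound} shows the empirical average lies within $\varepsilon\sigma_{r}^{2}(\bX^{\natural})$ of its mean once $m\gtrsim \varepsilon^{-2}(\|\bX^{\natural}\|_{\F}^{4}/\sigma_{r}^{4}(\bX^{\natural}))\,nr\log(n\kappa)$; picking $\varepsilon$ of order $0.3$ leaves enough slack for the subsequent perturbation step.

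To make this concentration uniform in $\bV$, I would cover the unit-Frobenius sphere of rank-$\le r$ matrices in $\mathbb{R}^{n\times r}$ by a $\tau$-net of cardinality $(9/\tau)^{(n+r+1)r}$ via Lemma~\ref{lemma_covering_net}, union-bound the Bernstein tail over the net, and bridge the net gap by Lipschitz continuity of the form in $\bV$ (controlled by $\max_{i}\|\ba_{i}\|_{2}$ from Lemma~\ref{lemma_a_sqrtroot_concen}). For the $\bX$-perturbation, writing $\bH=\bX-\bX^{\natural}$ gives
\begin{equation*}
\mathrm{vec}(\bV)^{\top}\bigl[\nabla^{2}f(\bX)-\nabla^{2}f(\bX^{\natural})\bigr]\mathrm{vec}(\bV) = \frac{1}{m}\sum_{i}\bigl[(2\ba_{i}^{\top}\bX^{\natural}\bH^{\top}\ba_{i}+\|\ba_{i}^{\top}\bH\|_{2}^{2})\|\ba_{i}^{\top}\bV\|_{2}^{2} + 2(\ba_{i}^{\top}\bV\bH^{\top}\ba_{i})(\ba_{i}^{\top}\bV(\bX+\bX^{\natural})^{\top}\ba_{i})\bigr],
\end{equation*}
and each summand is handled by repeated Cauchy--Schwarz combined with uniform control of $\frac{1}{m}\sum_{i}(\ba_{i}^{\top}\bu)^{2}(\ba_{i}^{\top}\bw)^{2}$ via Lemma~\ref{lemma_aaT_operator_concentration}; the assumption $\|\bH\|_{\F}\le \sigma_{r}^{2}(\bX^{\natural})/(24\|\bX^{\natural}\|_{\F})$ forces this perturbation to be a small fraction of $\sigma_{r}^{2}(\bX^{\natural})\|\bV\|_{\F}^{2}$.

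The main obstacle I anticipate is making the concentration sharp enough to retain the specific $1.204$ constant while keeping the sample complexity at $m\gtrsim (\|\bX^{\natural}\|_{\F}^{4}/\sigma_{r}^{4}(\bX^{\natural}))\,nr\log(n\kappa)$: the fourth-order Gaussian form is only sub-exponential with $\bV$-dependent variance, the covering set has cardinality exponential in $nr$, and the Lipschitz constant of the form grows with $n$ via $\max_{i}\|\ba_{i}\|_{2}^{2}$. Balancing these competing effects appears to require a truncation argument tuned with the heavy-tail moment bounds from Lemma~\ref{lemma_expectation_gaussian}, rather than a naive sub-Gaussian concentration.
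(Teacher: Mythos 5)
Your decomposition of $p(\bV,\bX)=\mathrm{vec}(\bV)^\top\nabla^2 f(\bX)\mathrm{vec}(\bV)$ into an anchor at $\bX=\bX^\natural$ plus a perturbation in $\bH=\bX-\bX^\natural$ is a genuinely different route from the paper's. The paper instead writes $p=q-\text{baseline}$, where $q(\bV,\bH,t)=\frac{1}{m}\sum_i[\|\ba_i^\top\bX\|_2^2\|\ba_i^\top\bV\|_2^2+2(\ba_i^\top\bX\bV^\top\ba_i)^2]$ collects \emph{all} the nonnegative terms (including the full $\bX$-dependence, parametrized by $\bH$ and $t$), while the subtracted baseline $\frac{1}{m}\sum_i\|\ba_i^\top\bX^\natural\|_2^2\|\ba_i^\top\bV\|_2^2$ only involves the fixed $\bX^\natural$ and so is controlled uniformly in $\bV$ by a single application of Lemma~\ref{lemma_intial_weighted_mat_concen}. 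The paper then applies the one-sided Bernstein bound (Lemma~\ref{lemma_oneside_tail_bound}) to $q$ at a fixed point — crucially using that the summands $G_i\ge 0$ so that $\mathbb{E}[G_i]-G_i$ is bounded above — and then covers $\bV$, $\bH$, and the scalar $t$ by $\epsilon$-nets.

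The gap in your proposal is the perturbation step. Your perturbation term
\begin{equation*}
P=\frac{1}{m}\sum_i\Bigl[\bigl(2\ba_i^\top\bX^\natural\bH^\top\ba_i+\|\ba_i^\top\bH\|_2^2\bigr)\|\ba_i^\top\bV\|_2^2+2(\ba_i^\top\bV\bH^\top\ba_i)(\ba_i^\top\bV(\bX+\bX^\natural)^\top\ba_i)\Bigr]
\end{equation*}
is a sign-indefinite fourth-order Gaussian chaos in which \emph{both} $\bH$ and $\bV$ vary. You propose to control it via ``repeated Cauchy--Schwarz combined with uniform control of $\frac{1}{m}\sum_i(\ba_i^\top\bu)^2(\ba_i^\top\bw)^2$ via Lemma~\ref{lemma_aaT_operator_concentration},'' but that lemma only controls the second-order object $\frac{1}{m}\sum_i\ba_i\ba_i^\top$ (equivalently $\frac{1}{m}\sum_i\|\ba_i^\top\bW\|_2^2$); it gives no handle on degree-four sums. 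Lemma~\ref{lemma_intial_weighted_mat_concen} \emph{does} control a degree-four sum but only for the fixed $\bX^\natural$ in the weight slot — applying it with $\bH$ in that slot would itself require an $\epsilon$-net over $\bH$, at which point you are doing exactly the covering work the paper does on $q$. Moreover, $P$ loses the positivity that makes the one-sided tail bound cheap, so a naive Cauchy--Schwarz with $\max_i\|\ba_i\|_2\le\sqrt{6n}$ from Lemma~\ref{lemma_a_sqrtroot_concen} introduces $\Theta(n^2)$ factors that the stated sample complexity cannot absorb. To rescue the argument you would need a dedicated concentration statement for the perturbation, uniform over $\bH$ and $\bV$, which in effect re-creates the paper's joint net over $(\bV,\bH,t)$; the anchor/perturbation split does not actually simplify the uniformity problem.
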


With Lemma \ref{lemma_restrict_concen_hessian_lower_looseconst} in place, we are ready to prove \eqref{eq:lower_bound}. 
Let $\bV = \bT_{1}\bQ_{\bT} - \bT_{2}$ satisfy the assumptions in  Lemma~\ref{lemma_restrict_concen_hessian_neighbor}, then we can demonstrate that
\begin{align}
& \mathrm{Tr}\left(\bX^{\natural \top}\bV\bX^{\natural \top}\bV\right) \nonumber\\
& = \mathrm{Tr}\left( \big(\bX^{\natural}-\bT_{2}+\bT_{2}\big)^{\top}\bV \big(\bX^{\natural}-\bT_{2}+\bT_{2}\big)^{\top}\bV\right) \nonumber\\
& =  \mathrm{Tr}\left(\big(\bX^{\natural} - \bT_{2}\big)^{\top}\bV\big(\bX^{\natural} - \bT_{2}\big)^{\top}\bV\right) + 2\mathrm{Tr}\left(\big(\bX^{\natural}-\bT_{2}\big)^{\top}\bV\bT_{2}^{\top}\bV\right) + \mathrm{Tr}\left(\bT_{2}^{\top}\bV\bT_{2}^{\top}\bV\right) \nonumber\\
& \ge \mathrm{Tr}\left(\bT_{2}^{\top}\bV\bT_{2}^{\top}\bV\right) - \big\Vert \bX^{\natural} - \bT_{2} \big\Vert^{2} \left\Vert\bV\right\Vert_{\F}^{2} - 2\big\Vert \bX^{\natural} - \bT_{2} \big\Vert\left\Vert \bT_{2} \right\Vert \left\Vert\bV\right\Vert_{\F}^{2} \nonumber\\
& = \big\Vert \bT_{2}^{\top}\bV \big\Vert_{\F}^{2}  - \big\Vert \bX^{\natural}-\bT_{2} \big\Vert^{2} \left\Vert\bV\right\Vert_{\F}^{2} - 2\big\Vert \bX^{\natural}-\bT_{2} \big\Vert\left\Vert \bT_{2} \right\Vert \left\Vert\bV\right\Vert_{\F}^{2}  \label{equ_hessian_lower_loose_symm}\\
& \ge - \left[ \left( \frac{1}{24}\frac{\sigma_{r}^{2}\left(\bX^{\natural}\right)}{\left\Vert \bX^{\natural} \right\Vert}\right)^{2} + 2\cdot  \frac{1}{24}\frac{\sigma_{r}^{2}\left(\bX^{\natural}\right)}{\left\Vert \bX^{\natural} \right\Vert} \cdot \left( \frac{1}{24}\frac{\sigma_{r}^{2}\left(\bX^{\natural}\right)}{\big\Vert \bX^{\natural} \big\Vert} + \big\Vert\bX^{\natural}\big\Vert\right) \right] \left\Vert\bV\right\Vert_{\F}^{2} \label{equ_hessian_lower_loose_invoking_lemma2}\\
& \ge - 0.0886 \sigma_{r}^{2} (\bX^{\natural} ) \left\Vert \bV \right\Vert_{\F}^{2},  \label{equ_hessian_lower_loose_finalbound}
\end{align}
where \eqref{equ_hessian_lower_loose_symm} follows from the fact that $\bT_{2}^{\top}\bV \in\mathbb{R}^{r\times r}$ is a symmetric matrix \cite[Theorem 2]{ten1977orthogonal}, \eqref{equ_hessian_lower_loose_invoking_lemma2} arises from the fact $\big\Vert \bT_{2}^{\top}\bV \big\Vert_{\F}^{2} \geq 0$ as well as the assumptions of  Lemma~\ref{lemma_restrict_concen_hessian_neighbor}, and \eqref{equ_hessian_lower_loose_finalbound} is based on the fact $\left\Vert \bX^{\natural} \right\Vert \ge \sigma_{r}(\bX^{\natural} ) $. Combining \eqref{equ_hessian_lower_loose_finalbound} with Lemma \ref{lemma_restrict_concen_hessian_lower_looseconst}, we establish the lower bound \eqref{eq:lower_bound}.
 

To prove the upper bound \eqref{eq:upper_bound} asserted in the lemma, we make the observation that the Hessian in \eqref{eq:hessian} satisfies
\begin{align}
&\left\Vert  \nabla^{2}f(\bX)\right\Vert \nonumber\\
& = \left\Vert   \frac{1}{m}\sum_{i=1}^{m} \left[ \left( \Vert \ba_{i}^{\top}\bX \Vert_{2}^{2} -  \Vert \ba_{i}^{\top}\bX^{\natural} \Vert_{2}^{2} \right)\bI_{r} + 2\bX^{\top}\ba_{i}\ba_{i}^{\top}\bX \right] \otimes \left( \ba_{i}\ba_{i}^{\top} \right) \right\Vert \nonumber\\
& \le \left\Vert   \frac{1}{m}\sum_{i=1}^{m} \left[  \left|  \ba_{i}^{\top} \left(\bX + \bX^{\natural}\right) \left(\bX - \bX^{\natural}\right)^{\top} \ba_{i} \right|  \bI_{r} + 2 \big\Vert\ba_{i}^{\top}\bX \big\Vert_{2}^{2}\bI_{r} \right] \otimes \left( \ba_{i}\ba_{i}^{\top} \right) \right\Vert \nonumber\\
&\le \left\Vert   \frac{1}{m}\sum_{i=1}^{m} \left[ \left(  \Vert \ba_{i}^{\top}\bX \Vert_{2} +  \Vert \ba_{i}^{\top}\bX^{\natural} \Vert_{2} \right) \cdot \left\Vert \ba_{i}^{\top}\big(\bX-\bX^{\natural}\big)\right\Vert_{2}    + 2\big\Vert\ba_{i}^{\top}\bX\big\Vert_{2}^{2}  \right]     \ba_{i}\ba_{i}^{\top}   \right\Vert \label{equ_hessian_upper_kron_spectral}\\
& = \Bigg\Vert   \frac{1}{m}\sum_{i=1}^{m} \left[ \left( \Vert \ba_{i}^{\top}\bX\Vert_{2} + \Vert \ba_{i}^{\top}\bX^{\natural}\Vert_{2} \right) \cdot \left\Vert \ba_{i}^{\top}\big(\bX-\bX^{\natural}\big)\right\Vert_{2}   + 2 \left(\big\Vert\ba_{i}^{\top}\bX\big\Vert_{2}^{2} - \big\Vert\ba_{i}^{\top}\bX^{\natural}\big\Vert_{2}^{2} \right) \right]  \left( \ba_{i}\ba_{i}^{\top} \right)  \nonumber\\
& \quad\quad + \frac{1}{m}\sum_{i=1}^{m} 2\big\Vert\ba_{i}^{\top}\bX^{\natural}\big\Vert_{2}^{2}  \left( \ba_{i}\ba_{i}^{\top} \right) -  2 \left(\big\Vert\bX^{\natural}\big\Vert_{\F}^{2}\bI_{n} + 2\bX^{\natural}\bX^{\natural \top}\right) + 2 \left(\big\Vert\bX^{\natural}\big\Vert_{\F}^{2}\bI_{n} + 2\bX^{\natural}\bX^{\natural \top}\right) \Bigg\Vert \nonumber\\
& \le \underbrace{\left\Vert  \frac{3}{m}\sum_{i=1}^{m} \left(  \Vert \ba_{i}^{\top}\bX \Vert_{2} +  \Vert \ba_{i}^{\top}\bX^{\natural} \Vert_{2} \right) \cdot \left\Vert \ba_{i}^{\top}\big(\bX-\bX^{\natural}\big)\right\Vert_{2}   \left( \ba_{i}\ba_{i}^{\top} \right) \right\Vert}_{:=B_1} \nonumber\\
& \ \  + \underbrace{2\left\Vert \frac{1}{m}\sum_{i=1}^{m}\left\Vert\ba_{i}^{\top}\bX^{\natural}\right\Vert_{2}^{2}  \left( \ba_{i}\ba_{i}^{\top} \right) -  \big\Vert\bX^{\natural}\big\Vert_{\F}^{2}\bI_{n} - 2\bX^{\natural}\bX^{\natural \top}   \right\Vert }_{:=B_2}+ \underbrace{2 \left\Vert  \big\Vert\bX^{\natural}\big\Vert_{\F}^{2}\bI_{n} + 2\bX^{\natural}\bX^{\natural \top}   \right\Vert}_{:=B_3} ,  \label{eq:B1-B2-B3}
\end{align}
where \eqref{equ_hessian_upper_kron_spectral} follows from the fact $\left\Vert \bI \otimes \bA\right\Vert =  \left\Vert \bA\right\Vert$. It is seen from Lemma~\ref{lemma_intial_weighted_mat_concen} that 
\begin{equation*}
B_2 \leq \delta \big\Vert\bX^{\natural}\big\Vert_{\F}^{2} \le 0.02 \sigma_{r}^{2}\big(\bX^{\natural}\big),
\end{equation*}
when setting $\delta \le 0.02 \frac{\sigma_{r}^{2}\big(\bX^{\natural}\big)}{\big\Vert\bX^{\natural}\big\Vert_{\F}^{2}}$. Moreover, it is straightforward to check that
\begin{equation*}
B_3\leq 6\big\Vert\bX^{\natural}\big\Vert_{\F}^{2}.
\end{equation*} 
With regards to the first term $B_1$, note that by Lemma~\ref{lemma_a_log_tight_concen} and \eqref{eq:def_incoherence}, we can bound
\begin{equation*}
 \left\Vert \ba_{i}^{\top}\bX \right\Vert_{2}  \leq  \left\Vert \ba_{i}^{\top}\bX^{\natural} \right\Vert_{2} + \left \Vert \ba_{i}^{\top}(\bX-\bX^{\natural}) \right\Vert_{2} \leq 5.86 \sqrt{\log{n}} \big\Vert \bX^{\natural} \big\Vert_{\F} +  \frac{1}{24} \sqrt{\log{n}}  \cdot \frac{\sigma_{r}^{2}\left(\bX^{\natural}\right)}{\left\Vert \bX^{\natural} \right\Vert_{\F}} 
 \end{equation*}
 for $1\leq i\leq m$, and therefore,
\begin{align}
B_1 & \le 1.471 \sigma_{r}^{2}\big(\bX^{\natural}\big)  \log{n} \left\Vert  \frac{1}{m}\sum_{i=1}^{m}  \ba_{i}\ba_{i}^{\top} \right\Vert   \le 1.48 \sigma_{r}^{2}\big(\bX^{\natural}\big) \log{n}    ,\label{equ_hessian_upper_aaT_con}
\end{align}
where the last line follows from Lemma~\ref{lemma_aaT_operator_concentration}. The proof is then finished by combining \eqref{eq:B1-B2-B3} with the preceding bounds on $B_1$, $B_2$ and $B_3$.

\section{Proof of Lemma~\ref{lemma_restrict_concen_hessian_lower_looseconst}}\label{proof_lemma_restrict_concen_hessian_lower_looseconst}

Without loss of generality, we assume $\left\Vert \bV \right\Vert_{\F} = 1$. Write
\begin{align}
&\mathrm{vec}\left(\bV\right)^{\top} \nabla^{2}f(\bX) \mathrm{vec}\left(\bV\right) \nonumber \\
& = \frac{1}{m}\sum_{i=1}^{m} \mathrm{vec}\left(\bV\right)^{\top} \left[ \left[ \left(\left\Vert \ba_{i}^{\top}\bX\right\Vert_{2}^{2} - y_{i}\right)\bI_{r} + 2\bX^{\top}\ba_{i}\ba_{i}^{\top}\bX \right] \otimes \left( \ba_{i}\ba_{i}^{\top} \right) \right] \mathrm{vec}\left(\bV\right) \nonumber\\
& = \frac{1}{m}\sum_{i=1}^{m} \left(\left\Vert \ba_{i}^{\top}\bX\right\Vert_{2}^{2} - y_{i}\right) \mathrm{vec}\left(\bV\right)^{\top}\mathrm{vec}\left(\ba_{i}\ba_{i}^{\top}\bV\right)+ \frac{1}{m}\sum_{i=1}^{m} \mathrm{vec}\left(\bV\right)^{\top} \mathrm{vec}\left(2 \ba_{i}\ba_{i}^{\top}\bV\bX^{\top}\ba_{i}\ba_{i}^{\top}\bX\right)  \nonumber\\
	& = \frac{1}{m}\sum_{i=1}^{m} \left[ \left(\left\Vert \ba_{i}^{\top}\bX\right\Vert_{2}^{2} - \left\Vert \ba_{i}^{\top}\bX^{\natural}\right\Vert_{2}^{2}\right) \left\Vert \ba_{i}^{\top}\bV\right\Vert_{2}^{2} + 2 \left(\ba_{i}^{\top}\bX\bV^{\top}\ba_{i}\right)^{2} \right].  \label{eq:p-original}
\end{align}
In what follows, we let $\bX=\bX^{\natural} + t\frac{\sigma_{r}^{2}\left(\bX^{\natural}\right)}{\left\Vert \bX^{\natural} \right\Vert_{\F}}\bH$ with $t \le 1/24$ and $\left\Vert \bH \right\Vert_{\F} = 1$ which immediately obeys $\left\Vert \bX - \bX^{\natural} \right\Vert_{\F} \le \frac{1}{24} \frac{\sigma_{r}^{2}\left(\bX^{\natural}\right)}{\left\Vert \bX^{\natural} \right\Vert_{\F}}$, and express the right-hand side of \eqref{eq:p-original} as
\begin{align}
&p\left(\bV, \bH, t\right) \nonumber \\ 
& := \underbrace{\frac{1}{m}\sum_{i=1}^{m}  \left[ \left\Vert \ba_{i}^{\top}\bX\right\Vert_{2}^{2} \left\Vert \ba_{i}^{\top}\bV\right\Vert_{2}^{2} + 2 \left(\ba_{i}^{\top}\bX\bV^{\top}\ba_{i}\right)^{2} \right] }_{:=q\left(\bV, \bH, t\right)} - \frac{1}{m}\sum_{i=1}^{m} \left\Vert \ba_{i}^{\top}\bX^{\natural}\right\Vert_{2}^{2} \left\Vert \ba_{i}^{\top}\bV\right\Vert_{2}^{2}.  \label{eq:def_pVH}
\end{align}
The aim is thus to control $p\left(\bV, \bH, t\right)$ for all matrices satisfying $\left\Vert \bH\right\Vert_{\F} = 1$ and $\left\Vert \bV\right\Vert_{\F} = 1$, and for all $t$ obeying $t\le {1}/{24}$.
 
We first bound the second term in \eqref{eq:def_pVH}. Let $\bV = [\bv_{1}, \bv_{2}, \cdots, \bv_{r}]$, then by Lemma~\ref{lemma_intial_weighted_mat_concen}, 
\begin{align*}
&\left\vert \frac{1}{m} \sum_{i=1}^{m} \big\Vert \ba_{i}^{\top}\bX^{\natural}\big\Vert_{2}^{2} \big\Vert \ba_{i}^{\top}\bV\big\Vert_{2}^{2} - \big\Vert \bX^{\natural}\big\Vert_{\F}^{2}\big\Vert \bV\big\Vert_{\F}^{2} - 2 \big\Vert \bX^{\natural \top}\bV \big\Vert_{\F}^{2} \right\vert \\
& = \left\vert \frac{1}{m} \sum_{i=1}^{m} \big\Vert \ba_{i}^{\top}\bX^{\natural}\big\Vert_{2}^{2}  \sum_{k=1}^{r} \left( \ba_{i}^{\top}\bv_{k}\right)^{2} - \big\Vert \bX^{\natural}\big\Vert_{\F}^{2} \sum_{k=1}^{r} \left\Vert \bv_{k}\right\Vert_{2}^{2} - 2 \sum_{k=1}^{r} \big\Vert \bX^{\natural \top}\bv_{k}\big\Vert_{2}^{2} \right\vert\\
& \le \sum_{k=1}^{r}   \left\vert \frac{1}{m} \sum_{i=1}^{m} \big\Vert \ba_{i}^{\top}\bX^{\natural}\big\Vert_{2}^{2}  \left( \ba_{i}^{\top}\bv_{k}\right)^{2} - \big\Vert \bX^{\natural}\big\Vert_{\F}^{2} \left\Vert \bv_{k}\right\Vert_{2}^{2} - 2 \big\Vert \bX^{\natural \top}\bv_{k}\big\Vert_{2}^{2} \right\vert\\
& = \sum_{k=1}^{r}   \left\vert \bv_{k}^{\top} \left( \frac{1}{m} \sum_{i=1}^{m} \big\Vert \ba_{i}^{\top}\bX^{\natural}\big\Vert_{2}^{2} \ba_{i}\ba_{i}^{\top} - \big\Vert \bX^{\natural}\big\Vert_{\F}^{2} - 2 \bX^{\natural}\bX^{\natural \top}  \right) \bv_{k} \right\vert\\
& \le \sum_{k=1}^{r} \left\Vert \bv_{k} \right\Vert_{2}^{2} \left\Vert  \frac{1}{m} \sum_{i=1}^{m} \big\Vert \ba_{i}^{\top}\bX^{\natural}\big\Vert_{2}^{2} \ba_{i}\ba_{i}^{\top} - \big\Vert \bX^{\natural}\big\Vert_{\F}^{2} - 2 \bX^{\natural}\bX^{\natural \top}  \right\Vert  \\
& \le \delta \big\Vert \bX^{\natural} \big\Vert_{\F}^{2}  \sum_{k=1}^{r} \left\Vert \bv_{k} \right\Vert_{2}^{2}   
 =  \delta \big\Vert \bX^{\natural} \big\Vert_{\F}^{2} \left\Vert \bV \right\Vert_{\F}^{2}.
\end{align*}
By setting $\delta \le \frac{1}{24}\frac{ \sigma_{r}^{2}\left(\bX^{\natural}\right)}{ \left\Vert \bX^{\natural}\right\Vert_{\F}^{2}}$, we see that with probability at least  $1 - c_{1} r n^{-13}$,
\begin{equation}\label{equ_hessian_loose_secondterm}
 \frac{1}{m} \sum_{i=1}^{m} \big\Vert \ba_{i}^{\top}\bX^{\natural}\big\Vert_{2}^{2} \big\Vert \ba_{i}^{\top}\bV \big\Vert_{2}^{2} \le  \big\Vert \bX^{\natural} \big\Vert_{\F}^{2} \big\Vert \bV \big\Vert_{\F}^{2} + 2 \big\Vert \bX^{\natural \top}\bV\big\Vert_{\F}^{2} + \frac{1}{24} \sigma_{r}^{2}\big(\bX^{\natural}\big)  \left\Vert \bV \right\Vert_{\F}^{2},
\end{equation}
holds simultaneously for all matrices $\bV$, as long as $m\gtrsim  \frac{\left\Vert\bX^{\natural}\right\Vert_{\F}^{4}}{\sigma_{r}^{4}\left(\bX^{\natural}\right)} n \log{n}$.

Next, we turn to  the first term $q\left(\bV, \bH, t\right)$ in \eqref{eq:def_pVH}, and we need to accommodate all matrices satisfying $\left\Vert \bH\right\Vert_{\F} = 1$ and $\left\Vert \bV\right\Vert_{\F} = 1$, and all scalars obeying $t\le {1}/{24}$. The strategy is that we first establish the bound of $q\left(\bV, \bH, t\right)$ for any fixed $\bH$, $\bV$ and $t$, and then extend the result to a uniform bound for all $\bH$, $\bV$ and $t$ by covering arguments.

\subsection{Bound with Fixed Matrices and Scalar}

Recall that
\begin{align*}
q\left(\bV, \bH, t\right)& =  \frac{1}{m}\sum_{i=1}^{m} \underbrace{\left[ \big\Vert \ba_{i}^{\top}\bX \big\Vert_{2}^{2} \big\Vert \ba_{i}^{\top}\bV \big\Vert_{2}^{2} + 2 \big(\ba_{i}^{\top}\bX\bV^{\top}\ba_{i}\big)^{2} \right]}_{:=G_{i}}.
\end{align*}
%
We will start by assuming that $\bX$ and $\bV$ are both fixed and statistically independent of $\{\ba_i\}_{i=1}^{m}$. 
In view of Lemma~\ref{lemma_expectation_gaussian}, 
\begin{align}
\mathbb{E}\left[ G_{i} \right]
& = \mathbb{E}\left[ \big\Vert \ba_{i}^{\top}\bX\big\Vert_{2}^{2} \big\Vert \ba_{i}^{\top}\bV\big\Vert_{2}^{2} \right] + 2 \mathbb{E}\left[ \big(\ba_{i}^{\top}\bX\bV^{\top}\ba_{i}\big)^{2}\right] \nonumber\\
& = \left\Vert\bX\right\Vert_{\F}^{2} \left\Vert\bV\right\Vert_{\F}^{2} + 2 \big\Vert\bX^{\top}\bV\big\Vert_{\F}^{2} + 2 \left( \mathrm{Tr}\big(\bX^{\top}\bV\big)\right)^{2} + 2 \big\Vert \bX\bV^{\top}\big\Vert_{\F}^{2} + 2 \mathrm{Tr}\big(\bX^{\top}\bV\bX^{\top}\bV\big) \nonumber\\
& \le \left\Vert\bX\right\Vert_{\F}^{2} \left\Vert\bV\right\Vert_{\F}^{2} + 2\left\Vert\bX\right\Vert^{2} \left\Vert\bV\right\Vert_{\F}^{2} + 2\left\Vert\bX\right\Vert_{\F}^{2} \left\Vert\bV\right\Vert_{\F}^{2} + 2 \left\Vert\bX\right\Vert^{2} \left\Vert\bV\right\Vert_{\F}^{2} + 2 \left\Vert\bX\right\Vert^{2} \left\Vert\bV\right\Vert_{\F}^{2} \nonumber\\
& \le 9 \left\Vert \bX\right\Vert_{\F}^{2} \left\Vert\bV\right\Vert_{\F}^{2}
= 9 \left\Vert  \bX^{\natural} + t\frac{\sigma_{r}^{2}\left(\bX^{\natural}\right)}{\left\Vert \bX^{\natural} \right\Vert_{\F}}\bH \right\Vert_{\F}^{2} \label{equ_expectation_G_invokex} \\
&\le 18  \left( \big\Vert\bX^{\natural}\big\Vert_{\F}^{2} + t^{2}\frac{\sigma_{r}^{4}\left(\bX^{\natural}\right)}{\left\Vert \bX^{\natural} \right\Vert_{\F}^{2}} \left\Vert\bH\right\Vert_{\F}^{2} \right)
 \le 18.002  \big\Vert\bX^{\natural}\big\Vert_{\F}^{2} , \label{equ_expectation_G_invokeht}
\end{align}
where \eqref{equ_expectation_G_invokex} follows $\|\bV\|_{\F}=1$ and $\bX=\bX^{\natural} + t\frac{\sigma_{r}^{2}\left(\bX^{\natural}\right)}{\left\Vert \bX^{\natural} \right\Vert_{\F}}\bH$, and \eqref{equ_expectation_G_invokeht} arises from the calculations with  $\|\bH\|_{\F}=1$ and $t \le 1/24$. Therefore, if we define $T_{i} = \mathbb{E}\left[ G_{i} \right] - G_{i}$, we have $\mathbb{E}\left[T_{i}\right] = 0$ and 
\begin{equation*}
T_{i} \le \mathbb{E}\left[G_{i}\right] \le 18.002  \big \Vert\bX^{\natural}\big\Vert_{\F}^{2},
\end{equation*} 
due to $G_{i} \ge 0$.  In addition,  
\begin{align}
	& \mathbb{E}\left[ T_{i}^{2} \right] = \mathbb{E}\left[ G_{i}^{2} \right] - \left( \mathbb{E}\left[G_{i}\right] \right)^{2} 
 \le \mathbb{E}\left[ G_{i}^{2} \right] \nonumber\\
& = \mathbb{E}\left[ \left( \big\Vert \ba_{i}^{\top}\bX\big\Vert_{2}^{2} \big\Vert \ba_{i}^{\top}\bV\big\Vert_{2}^{2} + 2 \big(\ba_{i}^{\top}\bX\bV^{\top}\ba_{i}\big)^{2} \right)^{2} \right] \nonumber\\
	& = \mathbb{E}\left[  \big\Vert \ba_{i}^{\top}\bX\big\Vert_{2}^{4} \big\Vert \ba_{i}^{\top}\bV\big\Vert_{2}^{4} \right] + 4 \mathbb{E}\left[ \big(\ba_{i}^{\top}\bX\bV^{\top}\ba_{i}\big)^{4} \right] + 4 \mathbb{E}\left[ \big(\ba_{i}^{\top}\bX\bV^{\top}\ba_{i}\big)^{2}  \big\Vert \ba_{i}^{\top}\bX\big\Vert_{2}^{2} \big\Vert \ba_{i}^{\top}\bV\big\Vert_{2}^{2}\right] \nonumber\\
& \le 9  \mathbb{E}\left[  \big\Vert \ba_{i}^{\top}\bX\big\Vert_{2}^{4} \big\Vert \ba_{i}^{\top}\bV\big\Vert_{2}^{4} \right]  \label{equ_tsqure_upperbound_cauchy}\\
& \le  9  \sqrt{ \mathbb{E}\left[  \left\Vert \ba_{i}^{\top}\bX\right\Vert_{2}^{8}\right]  \mathbb{E}\left[   \left\Vert \ba_{i}^{\top}\bV\right\Vert_{2}^{8} \right]}  \label{equ_tsqure_upperbound_holder}\\
& \le 9c_{4}  \left\Vert\bX \right\Vert_{\F}^{4} \left\Vert\bV \right\Vert_{\F}^{4} = 9c_{4} \big\Vert\bX \big\Vert_{\F}^{4} \label{equ_tsqure_upperbound_gaussian}\\
&  = 9c_{4} \left\Vert \bX^{\natural} + t\frac{\sigma_{r}^{2}\left(\bX^{\natural}\right)}{\left\Vert \bX^{\natural} \right\Vert_{\F}}\bH \right\Vert_{\F}^{4} \lesssim \big\Vert \bX^{\natural} \big\Vert_{\F}^{4}, \nonumber
\end{align}
where \eqref{equ_tsqure_upperbound_cauchy} follows from the Cauchy-Schwarz inequality, \eqref{equ_tsqure_upperbound_holder} comes from the H\"{o}lder's inequality, and \eqref{equ_tsqure_upperbound_gaussian} is a consequence of Lemma~\ref{lemma_expectation_gaussian}. Apply Lemma~\ref{lemma_oneside_tail_bound} to arrive at 
\begin{equation}
\mathbb{P}\left( \frac{1}{m} \sum_{i=1}^{m} T_{i} \ge \frac{1}{24}  \sigma_{r}^{2}\left(\bX^{\natural}\right)  \right) \le \exp\left(-c\frac{m\sigma_{r}^{4}\left(\bX^{\natural}\right)}{\left\Vert\bX^{\natural}\right\Vert_{\F}^{4}}\right),
\end{equation}
which further leads to
\begin{align}
&q\left(\bV, \bH, t\right)  =  \frac{1}{m}\sum_{i=1}^{m} G_{i}= \mathbb{E}\left[G_{i}\right]  -  \frac{1}{m}\sum_{i=1}^{m} T_{i} \nonumber\\
& \ge \mathbb{E}\left[G_{i}\right] -  \frac{1}{24} \sigma_{r}^{2}\big(\bX^{\natural}\big) \nonumber\\
& =  \left\Vert\bX\right\Vert_{\F}^{2} \left\Vert\bV\right\Vert_{\F}^{2} + 2 \big\Vert\bX^{\top}\bV\big\Vert_{\F}^{2} + 2 \left( \mathrm{Tr}\big(\bX^{\top}\bV\big)\right)^{2} + 2 \big\Vert \bX\bV^{\top}\big\Vert_{\F}^{2} + 2 \mathrm{Tr}\left(\bX^{\top}\bV\bX^{\top}\bV\right) -  \frac{1}{24} \sigma_{r}^{2}\left(\bX^{\natural}\right)  \nonumber\\
& \ge  \left\Vert\bX\right\Vert_{\F}^{2} \left\Vert\bV\right\Vert_{\F}^{2} + 2 \big\Vert\bX^{\top}\bV\big\Vert_{\F}^{2} + 2 \big\Vert \bX\bV^{\top}\big\Vert_{\F}^{2} + 2 \mathrm{Tr}\left(\bX^{\top}\bV\bX^{\top}\bV\right) -  \frac{1}{24} \sigma_{r}^{2}\left(\bX^{\natural}\right) . \label{equ_q_interim_lowerbound}
\end{align}

Substituting $\bX=\bX^{\natural} + t\frac{\sigma_{r}^{2}\left(\bX^{\natural}\right)}{\left\Vert \bX^{\natural} \right\Vert_{\F}}\bH$ for $\bX$, and using the facts $\left\Vert \bH\right\Vert_{\F} = 1$, $\left\Vert \bV\right\Vert_{\F} = 1$ and $t \le 1/24$, we can calculate the following bounds:
\begin{align*}
\left\Vert\bX\right\Vert_{\F}^{2} 
& =  \big\Vert \bX^{\natural} \big\Vert_{\F}^{2} +  t^2 \frac{\sigma_{r}^{4}\left(\bX^{\natural}\right)}{\left\Vert \bX^{\natural} \right\Vert_{\F}^2 } \left\Vert\bH\right\Vert_{\F}^{2}  + 2t\frac{\sigma_{r}^{2}\left(\bX^{\natural}\right)}{\left\Vert \bX^{\natural} \right\Vert_{\F}} \mathrm{Tr}\left(   \bX^{\natural \top}\bH  \right)\\
& \ge \big\Vert \bX^{\natural} \big\Vert_{\F}^{2} - 2  t\frac{\sigma_{r}^{2}\left(\bX^{\natural}\right)}{\left\Vert \bX^{\natural} \right\Vert_{\F}}  \big\Vert\bX^{\natural}\big\Vert_{\F}\left\Vert\bH\right\Vert_{\F} \ge \big\Vert \bX^{\natural} \big\Vert_{\F}^{2} - \frac{1}{12} \sigma_{r}^{2}\left(\bX^{\natural}\right);\\
 \big\Vert\bX^{\top}\bV\big\Vert_{\F}^{2}
 & =  \left\Vert \bX^{\natural \top}\bV\right\Vert_{\F}^{2} + t^2 \frac{\sigma_{r}^{4}\left(\bX^{\natural}\right)}{\left\Vert \bX^{\natural} \right\Vert_{\F}^2} \left\Vert \bH^{\top}\bV\right\Vert_{\F}^{2} + 2t\frac{\sigma_{r}^{2}\left(\bX^{\natural}\right)}{\left\Vert \bX^{\natural} \right\Vert_{\F}} \mathrm{Tr}\left(    \bV^{\top}\bH\bX^{\natural \top}\bV\right) \\
 & \ge  \left\Vert \bX^{\natural \top}\bV\right\Vert_{\F}^{2} -  2 t\frac{\sigma_{r}^{2}\left(\bX^{\natural}\right)}{\left\Vert \bX^{\natural} \right\Vert_{\F}} \big\Vert\bX^{\natural}\big\Vert \left\Vert\bH\right\Vert \left\Vert\bV\right\Vert_{\F}^{2} \ge \left\Vert \bX^{\natural \top}\bV\right\Vert_{\F}^{2} -  \frac{1}{12} \sigma_{r}^{2}\left(\bX^{\natural}\right);\\
 \big\Vert \bX\bV^{\top}\big\Vert_{\F}^{2}
 & =  \left\Vert \bX^{\natural}\bV^{\top}\right\Vert_{\F}^{2} +  t^{2}\frac{\sigma_{r}^{4}\left(\bX^{\natural}\right)}{\left\Vert \bX^{\natural} \right\Vert_{\F}^{2}} \left\Vert \bH\bV^{\top}\right\Vert_{\F}^{2} + 2 t\frac{\sigma_{r}^{2}\left(\bX^{\natural}\right)}{\left\Vert \bX^{\natural} \right\Vert_{\F}} \mathrm{Tr}\left( \bV\bH^{\top}\bX^{\natural}\bV^{\top}\right) \\
 & \ge  \left\Vert \bX^{\natural}\bV^{\top}\right\Vert_{\F}^{2} -  2 t\frac{\sigma_{r}^{2}\left(\bX^{\natural}\right)}{\left\Vert \bX^{\natural} \right\Vert_{\F}} \big\Vert\bX^{\natural}\big\Vert  \left\Vert\bH\right\Vert \left\Vert\bV\right\Vert_{\F}^{2} \ge \left\Vert \bX^{\natural}\bV^{\top}\right\Vert_{\F}^{2} -  \frac{1}{12} \sigma_{r}^{2}\left(\bX^{\natural}\right);\\
 \mathrm{Tr}\left(\bX^{\top}\bV\bX^{\top}\bV\right)
 & =  \mathrm{Tr}\left(\bX^{\natural \top}\bV\bX^{\natural \top}\bV\right) + 2 t\frac{\sigma_{r}^{2}\left(\bX^{\natural}\right)}{\left\Vert \bX^{\natural} \right\Vert_{\F}} \mathrm{Tr}\left( \bH^{\top}\bV\bX^{\natural \top}\bV\right)  + t^{2}\frac{\sigma_{r}^{4}\left(\bX^{\natural}\right)}{\left\Vert \bX^{\natural} \right\Vert_{\F}^{2}} \mathrm{Tr}\left(\bH^{\top}\bV \bH^{\top}\bV\right)\\
 & \ge \mathrm{Tr}\left(\bX^{\natural \top}\bV\bX^{\natural \top}\bV\right)  -  2  t\frac{\sigma_{r}^{2}\left(\bX^{\natural}\right)}{\left\Vert \bX^{\natural} \right\Vert_{\F}} \big\Vert\bX^{\natural}\big\Vert \left\Vert\bH\right\Vert \left\Vert\bV\right\Vert_{\F}^{2} -  t^{2}\frac{\sigma_{r}^{4}\left(\bX^{\natural}\right)}{\left\Vert \bX^{\natural} \right\Vert_{\F}^{2}} \left\Vert\bH\right\Vert^{2}\left\Vert\bV\right\Vert_{\F}^{2}\\
 & \ge \mathrm{Tr}\left(\bX^{\natural \top}\bV\bX^{\natural \top}\bV\right) - \left( \frac{1}{12} + \frac{1}{24^{2}}\right) \sigma_{r}^{2}\left(\bX^{\natural}\right),
\end{align*}
which, combining with \eqref{equ_q_interim_lowerbound}, yields 
\begin{align*}
& q\left(\bV, \bH, t\right) \\
& \ge \big\Vert \bX^{\natural} \big\Vert_{\F}^{2} + 2 \left\Vert \bX^{\natural \top}\bV\right\Vert_{\F}^{2}  + 2 \left\Vert \bX^{\natural}\bV^{\top}\right\Vert_{\F}^{2}  +  2 \mathrm{Tr}\left(\bX^{\natural \top}\bV\bX^{\natural \top}\bV\right)  -  \left(\frac{15}{24} + \frac{1}{12\cdot 24} \right) \sigma_{r}^{2}\left(\bX^{\natural}\right)\\
& \ge  \big\Vert \bX^{\natural} \big\Vert_{\F}^{2} + 2 \left\Vert \bX^{\natural \top}\bV\right\Vert_{\F}^{2}   +  2 \mathrm{Tr}\left(\bX^{\natural \top}\bV\bX^{\natural \top}\bV\right)  + 2  \sigma_{r}^{2}\left(\bX^{\natural}\right) -  \left(\frac{15}{24} + \frac{1}{12\cdot 24} \right) \sigma_{r}^{2}\left(\bX^{\natural}\right)\\
& \ge  \big\Vert \bX^{\natural} \big\Vert_{\F}^{2} + 2 \left\Vert \bX^{\natural \top}\bV\right\Vert_{\F}^{2}   +  2 \mathrm{Tr}\left(\bX^{\natural \top}\bV\bX^{\natural \top}\bV\right)  + 1.371  \sigma_{r}^{2}\left(\bX^{\natural}\right).
\end{align*}

\subsection{Covering Arguments}

Since we have obtained a lower bound on $q\left(\bV, \bH, t\right) $ for  fixed $\bV$, $\bH$ and $t$, we now move on to extending it to a uniform bound that covers all $\bV$, $\bH$ and $t$ simultaneously. 
Towards this, we will 
invoke the $\epsilon$-net covering arguments for all $\bV$, $\bH$ and $t$, respectively, and will rely on the fact $\max_{1\le i\le m}\left\Vert\ba_{i}\right\Vert_{2} \le \sqrt{6 n}$ asserted in Lemma~\ref{lemma_a_sqrtroot_concen}. For notational convenience, we define 
\begin{align*}
g\left(\bV, \bH, t\right) 
& = q\left(\bV, \bH, t\right)\\
& \quad -   \big\Vert \bX^{\natural} \big\Vert_{\F}^{2} - 2 \big\Vert \bX^{\natural \top}\bV\big\Vert_{\F}^{2}   -  2 \mathrm{Tr}\left(\bX^{\natural \top}\bV\bX^{\natural \top}\bV\right)  - 1.371  \sigma_{r}^{2}\left(\bX^{\natural}\right).
\end{align*}

First, consider the $\epsilon$-net covering argument for $\bV$. Suppose $\bV_{1}$ and $\bV_{2}$ are such that $\left\Vert \bV_{1} \right\Vert_{\F} = 1$, $\left\Vert \bV_{2} \right\Vert_{\F} = 1$, and $\left\Vert \bV_{1} - \bV_{2} \right\Vert_{\F} \le \epsilon$. Then, since
\begin{align*}
\left\vert  \left\Vert\bX^{\natural \top}\bV_{1}\right\Vert_{\F}^{2} -  \left\Vert\bX^{\natural \top}\bV_{2}\right\Vert_{\F}^{2} \right\vert \le \left( \left\Vert\bX^{\natural \top}\bV_{1}\right\Vert_{\F} +  \left\Vert\bX^{\natural \top}\bV_{2}\right\Vert_{\F} \right)  \left\Vert\bX^{\natural \top}\left(\bV_{1}-\bV_{2}\right)\right\Vert_{\F} \le 2\big\Vert \bX^{\natural}\big\Vert^{2} \epsilon,
\end{align*}
and
\begin{align*}
&\left\vert \mathrm{Tr}\left(\bX^{\natural \top}\bV_{1}\bX^{\natural \top}\bV_{1}\right) - \mathrm{Tr}\left(\bX^{\natural \top}\bV_{2}\bX^{\natural \top}\bV_{2}\right) \right\vert\\
&\le \left\vert \mathrm{Tr}\left(\bX^{\natural \top}\bV_{1}\bX^{\natural \top}\bV_{1}\right) - \mathrm{Tr}\left(\bX^{\natural \top}\bV_{1}\bX^{\natural \top}\bV_{2}\right) \right\vert + \left\vert \mathrm{Tr}\left(\bX^{\natural \top}\bV_{1}\bX^{\natural \top}\bV_{2}\right) - \mathrm{Tr}\left(\bX^{\natural \top}\bV_{2}\bX^{\natural \top}\bV_{2}\right) \right\vert \\
& \le \big\Vert \bX^{\natural}\big\Vert^{2} \left\Vert \bV_{1} \right\Vert_{\F} \left\Vert \bV_{1} -\bV_{2} \right\Vert_{\F} +  \big\Vert \bX^{\natural}\big\Vert^{2} \left\Vert \bV_{2} \right\Vert_{\F} \left\Vert \bV_{1} -\bV_{2} \right\Vert_{\F} \le 2\big\Vert \bX^{\natural}\big\Vert^{2}\epsilon,
\end{align*}
we have
\begin{align*}
&\left\vert g\left(\bV_{1}, \bH, t\right) - g\left(\bV_{2}, \bH, t\right) \right\vert\\
& \le \left\vert q\left(\bV_{1}, \bH, t\right) - q\left(\bV_{2}, \bH, t\right) \right\vert + 2\left\vert  \left\Vert\bX^{\natural \top}\bV_{1}\right\Vert_{\F}^{2} -  \left\Vert\bX^{\natural \top}\bV_{2}\right\Vert_{\F}^{2} \right\vert  \\
&\quad + 2 \left\vert \mathrm{Tr}\left(\bX^{\natural \top}\bV_{1}\bX^{\natural \top}\bV_{1}\right) - \mathrm{Tr}\left(\bX^{\natural \top}\bV_{2}\bX^{\natural \top}\bV_{2}\right) \right\vert\\
& \le \left\vert \frac{1}{m}\sum_{i=1}^{m} \left[ \left\Vert \ba_{i}^{\top}\bX\right\Vert_{2}^{2} \left\Vert \ba_{i}^{\top}\bV_{1}\right\Vert_{2}^{2} + 2 \left(\ba_{i}^{\top}\bX\bV_{1}^{\top}\ba_{i}\right)^{2} \right] - \frac{1}{m}\sum_{i=1}^{m} \left[ \left\Vert \ba_{i}^{\top}\bX\right\Vert_{2}^{2} \left\Vert \ba_{i}^{\top}\bV_{2}\right\Vert_{2}^{2} + 2 \left(\ba_{i}^{\top}\bX\bV_{2}^{\top}\ba_{i}\right)^{2} \right]\right\vert\\
& \quad + 8 \big\Vert \bX^{\natural}\big\Vert^{2}\epsilon  \\
& \le  \frac{1}{m}\sum_{i=1}^{m}  \left\vert \left\Vert \ba_{i}^{\top}\bX\right\Vert_{2}^{2} \left\Vert \ba_{i}^{\top}\bV_{1}\right\Vert_{2}^{2} - \left\Vert \ba_{i}^{\top}\bX\right\Vert_{2}^{2} \left\Vert \ba_{i}^{\top}\bV_{2}\right\Vert_{2}^{2} \right\vert  +    \frac{2}{m}\sum_{i=1}^{m} \left\vert \left(\ba_{i}^{\top}\bX\bV_{1}^{\top}\ba_{i}\right)^{2} - \left(\ba_{i}^{\top}\bX\bV_{2}^{\top}\ba_{i}\right)^{2} \right\vert  + 8 \big\Vert \bX^{\natural}\big\Vert^{2}\epsilon \\
& \le \frac{1}{m}\sum_{i=1}^{m}  \left\Vert \ba_{i}^{\top}\bX\right\Vert_{2}^{2} \cdot  \left( \left\Vert \ba_{i}^{\top}\bV_{1}\right\Vert_{2} + \left\Vert \ba_{i}^{\top}\bV_{2}\right\Vert_{2} \right) \cdot \left\Vert \ba_{i}^{\top}\left(\bV_{1} - \bV_{2}\right) \right\Vert_{2}\\
& \quad + \frac{2}{m}\sum_{i=1}^{m} \left\vert \ba_{i}^{\top}\bX \left(\bV_{1}+\bV_{2}\right)^{\top}\ba_{i} \right\vert \cdot \left\vert \ba_{i}^{\top}\bX\left(\bV_{1}-\bV_{2}\right)^{\top}\ba_{i} \right\vert  + 8 \big\Vert \bX^{\natural}\big\Vert^{2}\epsilon  \\
& \le 6 n \cdot \left\Vert\bX\right\Vert^{2} \cdot 2\sqrt{6 n } \cdot \sqrt{6 n } \cdot \epsilon + 2 \cdot 12 n \cdot \left\Vert\bX\right\Vert \cdot 6 n \cdot \left\Vert \bX \right\Vert \epsilon + 8 \big\Vert \bX^{\natural}\big\Vert^{2}\epsilon \\
& = 216 \epsilon n^{2}  \left\Vert  \bX^{\natural} + t\frac{\sigma_{r}^{2}\left(\bX^{\natural}\right)}{\left\Vert \bX^{\natural} \right\Vert_{\F}}\bH \right\Vert^{2}  + 8 \big\Vert \bX^{\natural}\big\Vert^{2}\epsilon  \\
& \le 432 \epsilon n^{2}  \left(  \big\Vert  \bX^{\natural} \big\Vert^{2} +  t^{2}\frac{\sigma_{r}^{4}\left(\bX^{\natural}\right)}{\left\Vert \bX^{\natural} \right\Vert_{\F}^{2}} \left\Vert  \bH \right\Vert^{2}\right)  + 8 \big\Vert \bX^{\natural}\big\Vert^{2}\epsilon \\
& \le \left(432.75 n^{2}  + 8\right)  \epsilon \big\Vert  \bX^{\natural} \big\Vert^{2}
 \le \frac{1}{24} \sigma_{r}^{2}\left(\bX^{\natural}\right),
\end{align*}
as long as $\epsilon = \frac{\sigma_{r}^{2}\left(\bX^{\natural}\right)}{10584 n^{2} \left\Vert  \bX^{\natural} \right\Vert^{2}}$. Based on Lemma~\ref{lemma_covering_net}, the cardinality of this $\epsilon$-net will be 
$$\left(\frac{9}{\epsilon}\right)^{(n+r+1)r} = \left(\frac{9 \cdot 10584 n^{2} \left\Vert  \bX^{\natural} \right\Vert^{2}}{\sigma_{r}^{2}\left(\bX^{\natural}\right)}\right)^{(n+r+1)r} \le \exp{\left(cnr \log{\left(n \kappa\right)}\right)}.$$

Secondly, consider the $\epsilon$-net covering argument for $\bH$. Suppose $\bH_{1}$ and $\bH_{2}$ obey $\left\Vert \bH_{1} \right\Vert_{\F} = 1$, $\left\Vert \bH_{2} \right\Vert_{\F} = 1$, and $\left\Vert \bH_{1} - \bH_{2} \right\Vert_{\F} \le \epsilon$. Then one has
\begin{align*}
&\left\vert g\left(\bV, \bH_{1}, t\right) - g\left(\bV, \bH_{2}, t\right) \right\vert\\
& = \left\vert q\left(\bV, \bH_{1}, t\right) - q\left(\bV, \bH_{2}, t\right) \right\vert \\
& = \Bigg\vert \frac{1}{m}\sum_{i=1}^{m} \left[ \left\Vert \ba_{i}^{\top}\left(\bX^{\natural} + t\frac{\sigma_{r}^{2}\left(\bX^{\natural}\right)}{\left\Vert \bX^{\natural} \right\Vert_{\F}}\bH_{1}\right)\right\Vert_{2}^{2} \left\Vert \ba_{i}^{\top}\bV\right\Vert_{2}^{2} + 2 \left(\ba_{i}^{\top}\left(\bX^{\natural} + t\frac{\sigma_{r}^{2}\left(\bX^{\natural}\right)}{\left\Vert \bX^{\natural} \right\Vert_{\F}}\bH_{1}\right)\bV^{\top}\ba_{i}\right)^{2} \right] \\
& \quad - \frac{1}{m}\sum_{i=1}^{m} \left[ \left\Vert \ba_{i}^{\top}\left(\bX^{\natural} + t\frac{\sigma_{r}^{2}\left(\bX^{\natural}\right)}{\left\Vert \bX^{\natural} \right\Vert_{\F}}\bH_{2}\right)\right\Vert_{2}^{2} \left\Vert \ba_{i}^{\top}\bV\right\Vert_{2}^{2} + 2 \left(\ba_{i}^{\top}\left(\bX^{\natural} + t\frac{\sigma_{r}^{2}\left(\bX^{\natural}\right)}{\left\Vert \bX^{\natural} \right\Vert_{\F}}\bH_{2}\right)\bV^{\top}\ba_{i}\right)^{2} \right] \Bigg\vert\\
& \le  \frac{1}{m}\sum_{i=1}^{m} \left\Vert \ba_{i}^{\top}\bV\right\Vert_{2}^{2} \cdot \left\vert \left\Vert \ba_{i}^{\top}\left(\bX^{\natural} + t\frac{\sigma_{r}^{2}\left(\bX^{\natural}\right)}{\left\Vert \bX^{\natural} \right\Vert_{\F}}\bH_{1}\right)\right\Vert_{2}^{2} - \left\Vert \ba_{i}^{\top}\left(\bX^{\natural} + t\frac{\sigma_{r}^{2}\left(\bX^{\natural}\right)}{\left\Vert \bX^{\natural} \right\Vert_{\F}}\bH_{2}\right)\right\Vert_{2}^{2}  \right\vert\\
& \quad   +    \frac{2}{m}\sum_{i=1}^{m} \left\vert \left(\ba_{i}^{\top}\left(\bX^{\natural} + t\frac{\sigma_{r}^{2}\left(\bX^{\natural}\right)}{\left\Vert \bX^{\natural} \right\Vert_{\F}}\bH_{1}\right)\bV^{\top}\ba_{i}\right)^{2} - \left(\ba_{i}^{\top}\left(\bX^{\natural} + t\frac{\sigma_{r}^{2}\left(\bX^{\natural}\right)}{\left\Vert \bX^{\natural} \right\Vert_{\F}}\bH_{2}\right)\bV^{\top}\ba_{i}\right)^{2} \right\vert   \\
& \le 6 n   \cdot \sqrt{6 n } \cdot t\frac{\sigma_{r}^{2}\left(\bX^{\natural}\right)}{\left\Vert \bX^{\natural} \right\Vert_{\F}}\epsilon \cdot 2\sqrt{6 n } \cdot \frac{25}{24}\big\Vert\bX^{\natural}\big\Vert + 2\cdot 6 n \cdot  t\frac{\sigma_{r}^{2}\left(\bX^{\natural}\right)}{\left\Vert \bX^{\natural} \right\Vert_{\F}}\epsilon \cdot 12 n \cdot \frac{25}{24}\big\Vert\bX^{\natural}\big\Vert\\
& \le \frac{75}{8} \epsilon n^{2} \frac{\sigma_{r}^{2}\left(\bX^{\natural}\right)}{\left\Vert \bX^{\natural} \right\Vert_{\F}}\big\Vert\bX^{\natural}\big\Vert
 \le \frac{1}{24} \sigma_{r}^{2}\left(\bX^{\natural}\right),
\end{align*}
as long as $\epsilon = \frac{1}{225 n^{2}} \cdot \frac{\left\Vert\bX^{\natural}\right\Vert_{\F}}{\left\Vert\bX^{\natural}\right\Vert}$. Based on Lemma~\ref{lemma_covering_net}, the cardinality of this $\epsilon$-net will be 
$$\left(\frac{9}{\epsilon}\right)^{(n+r+1)r} = \left(9 \cdot 225 n^{2} \cdot \frac{\left\Vert\bX^{\natural}\right\Vert}{\left\Vert\bX^{\natural}\right\Vert_{\F}} \right)^{(n+r+1)r} \le \exp{\left(cnr \log{n}\right)}.$$

Finally, consider the $\epsilon$-net covering argument for all $t$, such that $t\le {1}/{24}$. Suppose $t_{1}$ and $t_{2}$ satisfy $t_{1}\le {1}/{24}$, $t_{2}\le {1}/{24}$ and $\left\vert t_{1} - t_{2} \right\vert\le \epsilon$. Then we get 
\begin{align*}
&\left\vert g\left(\bV, \bH, t_{1}\right) - g\left(\bV, \bH, t_{2}\right) \right\vert\\
& = \left\vert q\left(\bV, \bH, t_{1}\right) - q\left(\bV, \bH, t_{2}\right) \right\vert \\
& = \Bigg\vert \frac{1}{m}\sum_{i=1}^{m} \left[ \left\Vert \ba_{i}^{\top}\left(\bX^{\natural} + t_{1}\frac{\sigma_{r}^{2}\left(\bX^{\natural}\right)}{\left\Vert \bX^{\natural} \right\Vert_{\F}}\bH\right)\right\Vert_{2}^{2} \left\Vert \ba_{i}^{\top}\bV\right\Vert_{2}^{2} + 2 \left(\ba_{i}^{\top}\left(\bX^{\natural} + t_{1}\frac{\sigma_{r}^{2}\left(\bX^{\natural}\right)}{\left\Vert \bX^{\natural} \right\Vert_{\F}}\bH\right)\bV^{\top}\ba_{i}\right)^{2} \right] \\
& \quad - \frac{1}{m}\sum_{i=1}^{m} \left[ \left\Vert \ba_{i}^{\top}\left(\bX^{\natural} + t_{2}\frac{\sigma_{r}^{2}\left(\bX^{\natural}\right)}{\left\Vert \bX^{\natural} \right\Vert_{\F}}\bH\right)\right\Vert_{2}^{2} \left\Vert \ba_{i}^{\top}\bV\right\Vert_{2}^{2} + 2 \left(\ba_{i}^{\top}\left(\bX^{\natural} + t_{2}\frac{\sigma_{r}^{2}\left(\bX^{\natural}\right)}{\left\Vert \bX^{\natural} \right\Vert_{\F}}\bH\right)\bV^{\top}\ba_{i}\right)^{2} \right] \Bigg\vert\\
& \le  \frac{1}{m}\sum_{i=1}^{m} \left\Vert \ba_{i}^{\top}\bV\right\Vert_{2}^{2} \cdot \left\vert \left\Vert \ba_{i}^{\top}\left(\bX^{\natural} + t_{1}\frac{\sigma_{r}^{2}\left(\bX^{\natural}\right)}{\left\Vert \bX^{\natural} \right\Vert_{\F}}\bH\right)\right\Vert_{2}^{2} - \left\Vert \ba_{i}^{\top}\left(\bX^{\natural} + t_{2}\frac{\sigma_{r}^{2}\left(\bX^{\natural}\right)}{\left\Vert \bX^{\natural} \right\Vert_{\F}}\bH\right)\right\Vert_{2}^{2}  \right\vert\\
& \quad   +    \frac{2}{m}\sum_{i=1}^{m} \left\vert \left(\ba_{i}^{\top}\left(\bX^{\natural} + t_{1}\frac{\sigma_{r}^{2}\left(\bX^{\natural}\right)}{\left\Vert \bX^{\natural} \right\Vert_{\F}}\bH\right)\bV^{\top}\ba_{i}\right)^{2} - \left(\ba_{i}^{\top}\left(\bX^{\natural} + t_{2}\frac{\sigma_{r}^{2}\left(\bX^{\natural}\right)}{\left\Vert \bX^{\natural} \right\Vert_{\F}}\bH\right)\bV^{\top}\ba_{i}\right)^{2} \right\vert   \\
& \le 6 n  \cdot \sqrt{6 n } \cdot \frac{\sigma_{r}^{2}\left(\bX^{\natural}\right)}{\left\Vert \bX^{\natural} \right\Vert_{\F}}\epsilon \cdot 2\sqrt{6 n} \cdot \frac{25}{24}\big\Vert\bX^{\natural}\big\Vert  + 2\cdot 6 n \cdot \frac{\sigma_{r}^{2}\left(\bX^{\natural}\right)}{\left\Vert \bX^{\natural} \right\Vert_{\F}}\epsilon \cdot 12 n  \cdot \frac{25}{24}\big\Vert\bX^{\natural}\big\Vert\\
& \le 225 \epsilon n^{2} \frac{\sigma_{r}^{2}\left(\bX^{\natural}\right)}{\left\Vert \bX^{\natural} \right\Vert_{\F}}  \big\Vert\bX^{\natural}\big\Vert
 \le \frac{1}{24} \sigma_{r}^{2}\left(\bX^{\natural}\right),
\end{align*}
as long as $\epsilon = \frac{1}{5400 n^{2}} \cdot \frac{\left\Vert\bX^{\natural}\right\Vert_{\F}}{\left\Vert\bX^{\natural}\right\Vert}$. The cardinality of this $\epsilon$-net will be $\frac{1/24}{\epsilon} \le c n^{2} \cdot \frac{\left\Vert\bX^{\natural}\right\Vert}{\left\Vert\bX^{\natural}\right\Vert_{\F}}$.


Therefore, when $m\ge c \frac{ \big\Vert \bX^{\natural} \big\Vert_{\F}^{4}}{\sigma_{r}^{4}\left(\bX^{\natural}\right) } n r \log{\left(n \kappa\right)}$ with some large enough constant $c$, for all matrices $\bV$ and $\bX$ such that $\left\Vert\bX-\bX^{\natural}\right\Vert_{\F}\le \frac{1}{24}\frac{\sigma_{r}^{2}\left(\bX^{\natural}\right)}{\left\Vert \bX^{\natural} \right\Vert_{\F}}$, we have 
 \begin{equation}\label{equ_hessian_q_finallowerbound}
 q\left(\bV, \bH, t\right) \ge   \big\Vert \bX^{\natural} \big\Vert_{\F}^{2}  + 2 \big\Vert \bX^{\natural \top}\bV\big\Vert_{\F}^{2}  +  2 \mathrm{Tr}\left(\bX^{\natural \top}\bV\bX^{\natural \top}\bV\right)  + 1.246  \sigma_{r}^{2}\left(\bX^{\natural}\right) ,
\end{equation}
with probability at least $1 - e^{-c_{1}nr \log{\left(n \kappa\right)}} - m e^{-1.5n}$.

\subsection{Finishing the Proof}

Combining \eqref{equ_hessian_loose_secondterm} and \eqref{equ_hessian_q_finallowerbound}, we can prove
\begin{align*}
\mathrm{vec}\left(\bV\right)^{\top} \nabla^{2}f(\bX) \mathrm{vec}\left(\bV\right)
& \ge \left\Vert \bX^{\natural} \right\Vert_{\F}^{2} + 2 \left\Vert \bX^{\natural \top}\bV\right\Vert_{\F}^{2}  +  2 \mathrm{Tr}\left(\bX^{\natural \top}\bV\bX^{\natural \top}\bV\right)  + 1.246  \sigma_{r}^{2}\left(\bX^{\natural}\right) \\
& \quad - \frac{1}{m}\sum_{i=1}^{m} \left\Vert \ba_{i}^{\top}\bX^{\natural}\right\Vert_{2}^{2} \left\Vert \ba_{i}^{\top}\bV\right\Vert_{2}^{2}\\
& \ge  \left\Vert \bX^{\natural} \right\Vert_{\F}^{2} + 2 \left\Vert \bX^{\natural \top}\bV\right\Vert_{\F}^{2}  +  2 \mathrm{Tr}\left(\bX^{\natural \top}\bV\bX^{\natural \top}\bV\right)  + 1.246  \sigma_{r}^{2}\left(\bX^{\natural}\right)\\
& \quad - \left\Vert \bX^{\natural}\right\Vert_{\F}^{2} - 2 \left\Vert \bX^{\natural \top}\bV\right\Vert_{\F}^{2} -  \frac{1}{24} \sigma_{r}^{2}\left(\bX^{\natural}\right)  \\
& \ge  2 \mathrm{Tr}\left(\bX^{\natural \top}\bV\bX^{\natural \top}\bV\right)  + 1.204  \sigma_{r}^{2}\left(\bX^{\natural}\right)
\end{align*}
as claimed. 

\section{Proof of Lemma~\ref{lemma:lemma_induction}}\label{proof_lemma_induction}
We first note that
\begin{align}
\left\Vert \bX_{t+1}\bQ_{t+1} - \bX^{\natural} \right\Vert_{\F}^{2} 
& \le \left\Vert \bX_{t+1}\bQ_{t} - \bX^{\natural} \right\Vert_{\F}^{2} \label{equ_induc_to_groundtruth_qt}\\
& = \left\Vert \left( \bX_{t} - \mu \nabla f\left(\bX_{t}\right)\right) \bQ_{t} - \bX^{\natural} \right\Vert_{\F}^{2} \nonumber\\
& =  \left\Vert  \bX_{t} \bQ_{t} - \mu \nabla f\left(\bX_{t} \bQ_{t}\right) - \bX^{\natural} \right\Vert_{\F}^{2} \label{equ_induc_to_groundtruth_gradeq}\\
& = \left\Vert \bx_{t} - \bx^{\natural} - \mu \cdot  \mathrm{vec}\left(\nabla f\left(\bX_{t} \bQ_{t}\right) - \nabla f\left(\bX^{\natural}\right)\right)   \right\Vert_{2}^{2}, \label{equ_induc_to_groundtruth_zerogd} 
\end{align}
where we write 
\begin{equation*}
\bx_{t} := \mathrm{vec}\left(\bX_{t}\bQ_{t} \right) \quad \text{and}\quad \bx^{\natural} := \mathrm{vec}\big(\bX^{\natural}\big).
\end{equation*}
Here, \eqref{equ_induc_to_groundtruth_qt} follows from the definition of $\bQ_{t+1}$ (see \eqref{eq:defn-Qt}),  \eqref{equ_induc_to_groundtruth_gradeq} holds owing to the identity $\nabla f\left(\bX_{t}\right) \bQ_{t} = \nabla f\left(\bX_{t} \bQ_{t}\right)$ for $\bQ_t\in\mathcal{O}^{r\times r}$, and \eqref{equ_induc_to_groundtruth_zerogd} arises from the fact that $\nabla f\left(\bX^{\natural}\right) = \boldsymbol{0}$.  Let
\begin{equation*}
\bX_{t}(\tau) = \bX^{\natural} + \tau \big( \bX_{t}\bQ_{t} - \bX^{\natural} \big), 
\end{equation*}
where $\tau\in[0,1]$. Then, by the fundamental theorem of calculus for vector-valued functions \cite{lang1993real}, 
\begin{align}
\mbox{RHS of \eqref{equ_induc_to_groundtruth_zerogd} }
& = \left\Vert \bx_{t} - \bx^{\natural} - \mu \cdot  \int_{0}^{1} \nabla^{2} f\left(\bX_{t}(\tau) \right) \left(\bx_{t} - \bx^{\natural}\right) d\tau  \right\Vert_{2}^{2}\label{equ_induc_to_groundtruth_meanvalue}\\
	& = \left\Vert \left(\bI - \mu \cdot  \int_{0}^{1} \nabla^{2} f\left(\bX_{t}(\tau) \right) \mathrm{d}\tau\right) \left(\bx_{t} - \bx^{\natural}\right)  \right\Vert_{2}^{2}\nonumber\\
	& = \left(\bx_{t} - \bx^{\natural}\right)^{\top} \left(\bI - \mu \cdot  \int_{0}^{1} \nabla^{2} f\left(\bX_{t}(\tau) \right) \mathrm{d}\tau\right)^{2}   \left(\bx_{t} - \bx^{\natural}\right)\nonumber\\
	& =  \left\Vert \bx_{t} - \bx^{\natural} \right\Vert_{2}^{2} - 2\mu  \cdot  \left(\bx_{t} - \bx^{\natural}\right)^{\top} \left(\int_{0}^{1} \nabla^{2} f\left(\bX_{t}(\tau) \right) \mathrm{d}\tau\right) \left(\bx_{t} - \bx^{\natural}\right) \nonumber\\
	& \qquad + \mu^{2}  \cdot \left(\bx_{t} - \bx^{\natural}\right)^{\top} \left(\int_{0}^{1} \nabla^{2} f\left(\bX_{t}(\tau) \right) \mathrm{d}\tau\right)^{2} \left(\bx_{t} - \bx^{\natural}\right) \nonumber\\
	&\le \left\Vert \bx_{t} - \bx^{\natural} \right\Vert_{2}^{2} - 2\mu  \cdot  \left(\bx_{t} - \bx^{\natural}\right)^{\top} \left(\int_{0}^{1} \nabla^{2} f\left(\bX_{t}(\tau) \right) \mathrm{d}\tau\right) \left(\bx_{t} - \bx^{\natural}\right) \nonumber\\
	& \qquad + \mu^{2}  \cdot  \left\Vert\int_{0}^{1} \nabla^{2} f\left(\bX_{t}(\tau) \right) \mathrm{d}\tau\right\Vert^{2} \left\Vert\bx_{t} - \bx^{\natural}\right\Vert_{2}^{2} .\label{eq:decom_bound}
\end{align}

It is easy to verify that $\bX_{t}(\tau)$ satisfies \eqref{eq:def_RIC} for any $\tau\in[0,1]$, since
\begin{equation*}
\left\Vert \bX_{t}(\tau) - \bX^{\natural} \right\Vert_{\F} = \tau\left\Vert  \bX_{t}\bQ_{t} - \bX^{\natural} \right\Vert_{\F}  \le \frac{1}{24} \frac{\sigma_{r}^{2}\left(\bX^{\natural}\right)}{\left\Vert \bX^{\natural} \right\Vert_{\F}},
\end{equation*}
and
\begin{equation*}
\max_{1\le l\le m} \left\Vert \ba_{l}^{\top}\left(\bX_{t}(\tau) - \bX^{\natural}\right) \right\Vert_{2} = \tau \cdot\max_{1\le l\le m} \left\Vert \ba_{l}^{\top}\left(\bX_{t}\bQ_{t} - \bX^{\natural}\right) \right\Vert_{2} \leq \frac{1}{24} \sqrt{\log{n}}  \cdot \frac{\sigma_{r}^{2}\left(\bX^{\natural}\right)}{\left\Vert \bX^{\natural} \right\Vert_{\F}}.
\end{equation*}
 Lemma~\ref{lemma_restrict_concen_hessian_neighbor} then implies that 
\begin{equation*}
	\left(\bx_{t} - \bx^{\natural}\right)^{\top} \left(\int_{0}^{1} \nabla^{2} f\left(\bX_{t}(\tau) \right) \mathrm{d}\tau\right) \left(\bx_{t} - \bx^{\natural}\right) \ge 1.026  \sigma_{r}^{2}\left(\bX^{\natural}\right) \left\Vert \bx_{t} - \bx^{\natural} \right\Vert_{2}^{2},
\end{equation*}
and 
\begin{equation*}
	\left\Vert\int_{0}^{1} \nabla^{2} f\left(\bX_{t}(\tau) \right) \mathrm{d}\tau\right\Vert \le 1.5 \sigma_{r}^{2}\left(\bX^{\natural}\right) \log{n}    + 6\big\Vert\bX^{\natural}\big\Vert_{\F}^{2}.
\end{equation*}
Substituting the above two inequalities into \eqref{equ_induc_to_groundtruth_zerogd} and \eqref{eq:decom_bound} gives
\begin{align*}
&\left\Vert \bX_{t+1}\bQ_{t+1} - \bX^{\natural} \right\Vert_{\F}^{2}\\
&\le  \left\Vert \bx_{t} - \bx^{\natural} \right\Vert_{2}^{2} - 2\mu \cdot  1.026  \sigma_{r}^{2}\left(\bX^{\natural}\right) \left\Vert \bx_{t} - \bx^{\natural} \right\Vert_{2}^{2} + \mu^{2}  \cdot  \left( 1.5  \sigma_{r}^{2}\left(\bX^{\natural}\right) \log{n}  + 6\left\Vert\bX^{\natural}\right\Vert_{\F}^{2} \right)^{2} \left\Vert \bx_{t} - \bx^{\natural} \right\Vert_{2}^{2}\\
& = \left[ 1 - 2.052 \sigma_{r}^{2}\left(\bX^{\natural}\right) \mu  + \left(1.5 \sigma_{r}^{2}\left(\bX^{\natural}\right) \log{n}    + 6\left\Vert\bX^{\natural}\right\Vert_{\F}^{2}\right)^{2} \mu^{2} \right] \left\Vert \bX_{t}\bQ_{t} - \bX^{\natural} \right\Vert_{\F}^{2}\\
& \le  \left(1 - 1.026 \sigma_{r}^{2}\left(\bX^{\natural}\right) \mu \right) \left\Vert \bX_{t}\bQ_{t} - \bX^{\natural} \right\Vert_{\F}^{2},
\end{align*}
with the proviso that $\mu \le \frac{1.026 \sigma_{r}^{2}\left(\bX^{\natural}\right)}{\left(1.5 \sigma_{r}^{2}\left(\bX^{\natural}\right) \log{n}    + 6\left\Vert\bX^{\natural}\right\Vert_{\F}^{2}\right)^{2}}$. This allows us to conclude that
\begin{equation*}
\big\Vert \bX_{t+1}\bQ_{t+1} - \bX^{\natural} \big\Vert_{\F} \le \left(1 - 0.513 \sigma_{r}^{2}\big(\bX^{\natural}\big) \mu \right) \big\Vert \bX_{t}\bQ_{t} - \bX^{\natural} \big\Vert_{\F}.
\end{equation*}

\section{Proof of Lemma~\ref{lemma:proximity}}\label{proof:lemma_proximity}
Recognizing that
\begin{align*}
 \left\Vert \bX_{t+1}\bQ_{t+1} - \bX_{t+1}^{(l)}\bR_{t+1}^{(l)} \right\Vert_{\F} 
 & \le  \left\Vert \bX_{t+1}\bQ_{t+1} - \bX_{t+1}^{(l)} \bR_{t}^{(l)} \bQ_{t}^{\top} \bQ_{t+1} \right\Vert_{\F} \\ &=  \left\Vert \bX_{t+1}  - \bX_{t+1}^{(l)} \bR_{t}^{(l)} \bQ_{t}^{\top}  \right\Vert_{\F}
 = \left\Vert \bX_{t+1}\bQ_{t}  - \bX_{t+1}^{(l)} \bR_{t}^{(l)}   \right\Vert_{\F},
\end{align*}
we will focus on bounding $\big\Vert \bX_{t+1}\bQ_{t}  - \bX_{t+1}^{(l)} \bR_{t}^{(l)}   \big\Vert_{\F}$. Since 
\begin{align*}
& \bX_{t+1}\bQ_{t}  - \bX_{t+1}^{(l)} \bR_{t}^{(l)}
 = \left( \bX_{t} - \mu \nabla f\left(\bX_{t}\right) \right) \bQ_{t}  - \left( \bX_{t}^{(l)} - \mu \nabla f^{(l)} \left(\bX_{t}^{(l)} \right)  \right) \bR_{t}^{(l)}\\
& =  \bX_{t}\bQ_{t} - \bX_{t}^{(l)}\bR_{t}^{(l)} - \mu \nabla f\left(\bX_{t}\right) \bQ_{t}  + \mu \nabla f^{(l)}\left(\bX_{t}^{(l)}\right)\bR_{t}^{(l)}\\
& =  \bX_{t}\bQ_{t} - \bX_{t}^{(l)}\bR_{t}^{(l)} - \mu  \frac{1}{m} \sum_{i=1}^{m } \left( \left\Vert \ba_{i}^{\top}\bX_{t} \right\Vert_{2}^{2} - y_{i} \right) \ba_{i}\ba_{i}^{\top}\bX_{t} \bQ_{t}  \\
& \quad + \mu   \frac{1}{m} \sum_{i=1}^{m} \left( \left\Vert \ba_{i}^{\top}\bX_{t}^{(l)} \right\Vert_{2}^{2} - y_{i} \right) \ba_{i}\ba_{i}^{\top}\bX_{t}^{(l)}  \bR_{t}^{(l)} - \mu  \frac{1}{m}  \left( \left\Vert \ba_{l}^{\top}\bX_{t}^{(l)} \right\Vert_{2}^{2} - y_{l} \right) \ba_{l}\ba_{l}^{\top}\bX_{t}^{(l)}  \bR_{t}^{(l)}\\
& = \underbrace{\bX_{t}\bQ_{t} - \bX_{t}^{(l)}\bR_{t}^{(l)} - \mu  \nabla f\left(\bX_{t}\bQ_{t}\right) +  \mu  \nabla f\left(\bX_{t}^{(l)}\bR_{t}^{(l)}\right)}_{:=\bS_{t,1}^{(l)}}  - \underbrace{ \mu  \frac{1}{m}  \left( \left\Vert \ba_{l}^{\top}\bX_{t}^{(l)} \right\Vert_{2}^{2} - y_{l} \right) \ba_{l}\ba_{l}^{\top}\bX_{t}^{(l)}  \bR_{t}^{(l)}}_{:=\bS_{t,2}^{(l)}},
\end{align*}
we aim to control $\big\Vert \bS_{t,1}^{(l)}\big\Vert_{\F}$ and $\big\Vert \bS_{t,2}^{(l)} \big\Vert_{\F}$ separately.

We first bound the term $\big\Vert \bS_{t,2}^{(l)}\big\Vert_{\F}$, which is easier to handle. Observe that by  Cauchy-Schwarz, 
\begin{align}
\left\vert \left\Vert \ba_{l}^{\top}\bX_{t}^{(l)} \right\Vert_{2}^{2} - y_{l} \right\vert & = \left\vert \ba_{l}^{\top}\left(\bX_{t}^{(l)}\bR_{t}^{(l)} - \bX^{\natural}\right) \left(\bX_{t}^{(l)}\bR_{t}^{(l)} + \bX^{\natural}\right)^{\top}\ba_{l}  \right\vert  \nonumber\\
& \le \left\Vert \ba_{l}^{\top}\left(\bX_{t}^{(l)}\bR_{t}^{(l)} - \bX^{\natural}\right)\right\Vert_{2} \left\Vert \ba_{l}^{\top}\left(\bX_{t}^{(l)}\bR_{t}^{(l)} + \bX^{\natural}\right)\right\Vert_{2}  .\label{equ_inductive_leaveapprox_s2_dis_cauchy}
\end{align}
The first term in \eqref{equ_inductive_leaveapprox_s2_dis_cauchy} can be bounded by
\begin{align}
&\left\Vert \ba_{l}^{\top}\left(\bX_{t}^{(l)}\bR_{t}^{(l)} - \bX^{\natural}\right)\right\Vert_{2} \nonumber\\
 & \leq\left\Vert \ba_{l}^{\top}\left(\bX_{t}^{(l)}\bR_{t}^{(l)} - \bX_{t}\bQ_{t}\right)\right\Vert_{2} + \left\Vert \ba_{l}^{\top}\left(\bX_{t}\bQ_{t} - \bX^{\natural}\right)\right\Vert_{2} \nonumber\\
&  \leq \sqrt{6n}  \left\Vert \bX_{t}^{(l)}\bR_{t}^{(l)} - \bX_{t}\bQ_{t} \right\Vert + C_{2} \left(1 - 0.5 \sigma_{r}^{2}\left(\bX^{\natural}\right) \mu \right)^{t} \sqrt{\log{n}}  \cdot \frac{\sigma_{r}^{2}\left(\bX^{\natural}\right)}{\left\Vert \bX^{\natural} \right\Vert_{\F}} \nonumber \\
& \le  \sqrt{6n}  C_{3} \left(1 - 0.5 \sigma_{r}^{2}\left(\bX^{\natural}\right) \mu \right)^{t} \sqrt{\frac{\log{n}}{n}} \cdot  \frac{\sigma_{r}^{2}\left(\bX^{\natural}\right)}{\kappa \left\Vert \bX^{\natural} \right\Vert_{\F}} +  C_{2} \left(1 - 0.5 \sigma_{r}^{2}\left(\bX^{\natural}\right) \mu \right)^{t}\sqrt{\log{n}}  \cdot \frac{\sigma_{r}^{2}\left(\bX^{\natural}\right)}{\left\Vert \bX^{\natural} \right\Vert_{\F}}    \nonumber\\
&\le  (\sqrt{6}C_3+C_2) \left(1 - 0.5 \sigma_{r}^{2}\left(\bX^{\natural}\right) \mu \right)^{t} \sqrt{\log{n}}  \frac{\sigma_{r}^{2}\left(\bX^{\natural}\right)}{\left\Vert \bX^{\natural} \right\Vert_{\F}},\label{eq:incoherence_diff}
\end{align}
where we have used the triangle inequality, Lemma~\ref{lemma_a_sqrtroot_concen}, as well as the induction hypotheses \eqref{equ_inductive_att_c2} and \eqref{equ_inductive_att_c3}. Similarly, the second term in \eqref{equ_inductive_leaveapprox_s2_dis_cauchy} can be bounded as
\begin{align}
 \left\Vert \ba_{l}^{\top}\left(\bX_{t}^{(l)}\bR_{t}^{(l)} + \bX^{\natural}\right)\right\Vert_{2} & \leq \left\Vert \ba_{l}^{\top}\left(\bX_{t}^{(l)}\bR_{t}^{(l)} - \bX^{\natural}\right)\right\Vert_{2} + 2 \left\Vert \ba_{l}^{\top}\bX^{\natural}\right\Vert_{2}    \nonumber\\
&\le \left(\sqrt{6}C_{3}+C_{2}\right) \sqrt{\log{n}}  \frac{\sigma_{r}^{2}\left(\bX^{\natural}\right)}{\left\Vert \bX^{\natural} \right\Vert_{\F}}  + 11.72  \sqrt{\log{n}} \left\Vert\bX^{\natural}\right\Vert_{\F}   \nonumber\\
&\le \left(\sqrt{6}C_{3}+C_{2} + 11.72  \right)\sqrt{\log{n}} \big\Vert\bX^{\natural}\big\Vert_{\F}, \label{eq:incoherence_sum} 
\end{align}
where we have used \eqref{eq:incoherence_diff}, Lemma~\ref{lemma_a_log_tight_concen}, and $\sigma_r^2\left(\bX^{\natural}\right)\leq  \big\Vert\bX^{\natural}\big\Vert_{\F}^2$. Similarly, we can also obtain
\begin{align*}
 \left\Vert \ba_{l}^{\top}\bX_{t}^{(l)} \right\Vert_{2} \le  \left(\sqrt{6}C_{3}+C_{2} + 5.86  \right)\sqrt{\log{n}} \big\Vert\bX^{\natural}\big\Vert_{\F}.
\end{align*}
Substituting \eqref{eq:incoherence_diff} and \eqref{eq:incoherence_sum} into \eqref{equ_inductive_leaveapprox_s2_dis_cauchy}, and using the above inequality, we get
\begin{align}
\left\Vert \bS_{t,2}^{(l)}\right\Vert_{\F}
& = \mu   \frac{1}{m} \cdot \left\vert \left\Vert \ba_{l}^{\top}\bX_{t}^{(l)} \right\Vert_{2}^{2} - y_{l} \right\vert   \cdot \left\Vert  \ba_{l}\ba_{l}^{\top}\bX_{t}^{(l)} \right\Vert_{\F} \nonumber\\
& \le C_4^2 \left(1 - 0.5 \sigma_{r}^{2}\left(\bX^{\natural}\right) \mu \right)^{t} \cdot \mu   \frac{1}{m} \cdot \sigma_{r}^{2}\left(\bX^{\natural}\right)   \log{n}  \cdot \left\Vert\ba_{l}\right\Vert_{2} \left\Vert \ba_{l}^{\top}\bX_{t}^{(l)} \right\Vert_{2} \nonumber\\
& \le  \sqrt{6}C_4^3 \left(1 - 0.5 \sigma_{r}^{2}\left(\bX^{\natural}\right) \mu \right)^{t} \cdot \mu   \frac{1}{m} \cdot \sigma_{r}^{2}\left(\bX^{\natural}\right)   \log{n}  \cdot \sqrt{n} \big\Vert\bX^{\natural} \big\Vert_{\F} \sqrt{\log{n}} \nonumber\\
& =  \sqrt{6}C_4^3 \left(1 - 0.5 \sigma_{r}^{2}\left(\bX^{\natural}\right) \mu \right)^{t} \cdot \mu \frac{\sqrt{n}\cdot\left(\log{n}\right)^{3/2}}{m} \sigma_{r}^{2}\left(\bX^{\natural}\right)\big\Vert\bX^{\natural} \big\Vert_{\F} ,\label{equ_induc_leaveapp_s2}
\end{align}
where $C_4: = \sqrt{6}C_{3}+C_{2} + 11.72 $.

Next, we turn to $\left\Vert \bS_{t,1}^{(l)}\right\Vert_{\F}$. By defining 
	$$\bs_{t,1}^{(l)} = \mathrm{vec}\big(\bS_{t,1}^{(l)}\big), \quad \bx_{t} = \mathrm{vec}\left(\bX_{t}\bQ_{t}\right), \quad \text{and} \quad \bx_{t}^{(l)} = \mathrm{vec}\big( \bX_{t}^{(l)}\bR_{t}^{(l)} \big),$$ 
	we can write
\begin{align*}
\bs_{t,1}^{(l)} & = \bx_{t} - \bx_{t}^{(l)} - \mu \cdot  \mathrm{vec}\left(  \nabla f\left(\bX_{t}\bQ_{t}\right) -  \nabla f (\bX_{t}^{(l)}\bR_{t}^{(l)} ) \right)\\
	& = \bx_{t} - \bx_{t}^{(l)} - \mu \cdot \int_{0}^{1} \nabla^{2} f\left(\bX_{t}^{(l)}(\tau) \right) \left(\bx_{t} - \bx_{t}^{(l)}\right) \mathrm{d}\tau\\
	& = \left(\bI - \mu \cdot \int_{0}^{1} \nabla^{2} f\left(\bX_{t}^{(l)}(\tau) \right) \mathrm{d}\tau\right)  \left( \bx_{t} - \bx_{t}^{(l)}\right).
\end{align*}
Here, the second line follows from the fundamental theorem of calculus for vector-valued functions \cite{lang1993real}, where
\begin{equation}
\bX_{t}^{(l)}(\tau) = \bX_{t}^{(l)}\bR_{t}^{(l)} + \tau \left( \bX_{t}\bQ_{t} - \bX_{t}^{(l)}\bR_{t}^{(l)} \right), 
\end{equation}
for $\tau\in[0,1]$. Using very similar algebra as in Appendix~\ref{proof_lemma_induction}, we obtain
\begin{align}
	\left\Vert \bS_{t,1}^{(l)}\right\Vert_{\F}^{2} & \le \left\Vert \bx_{t} - \bx_{t}^{(l)}\right\Vert_{2}^{2} + \mu^{2} \left\Vert  \int_{0}^{1} \nabla^{2} f\left(\bX_{t}^{(l)}(\tau) \right) \mathrm{d}\tau \right\Vert^{2} \left\Vert \bx_{t} - \bx_{t}^{(l)}\right\Vert_{2}^{2} \nonumber\\
	& \quad - 2\mu \cdot \left( \bx_{t} - \bx_{t}^{(l)}\right)^{\top} \left(\int_{0}^{1} \nabla^{2} f\left(\bX_{t}^{(l)}(\tau) \right) \mathrm{d}\tau\right)  \left( \bx_{t} - \bx_{t}^{(l)}\right).\label{bound_St1}
\end{align}

It is easy to verify that for all $\tau\in[0,1]$,
\begin{align}
\left\Vert \bX_{t}^{(l)}(\tau) - \bX^{\natural} \right\Vert_{\F} 
& = \left\Vert \left(1-\tau\right)\left( \bX_{t}^{(l)}\bR_{t}^{(l)} - \bX_{t}\bQ_{t} \right) + \bX_{t}\bQ_{t} - \bX^{\natural}  \right\Vert_{\F} \nonumber\\
& \le  ( 1-\tau) \left\Vert  \bX_{t}^{(l)}\bR_{t}^{(l)} - \bX_{t}\bQ_{t}  \right\Vert_{\F} + \left\Vert \bX_{t}\bQ_{t} - \bX^{\natural}  \right\Vert_{\F} \nonumber\\
&\le C_{3} \sqrt{\frac{\log{n}}{n}} \cdot  \frac{\sigma_{r}^{2}\left(\bX^{\natural}\right)}{\kappa \left\Vert \bX^{\natural} \right\Vert_{\F}} +  C_{1} \frac{\sigma_{r}^{2}\left(\bX^{\natural}\right)}{\left\Vert \bX^{\natural} \right\Vert_{\F}} \label{equ_loo_induc_s1_invokehypo}\\
& \le \left(C_{3} \sqrt{\frac{\log{n}}{n}} + C_{1}\right) \frac{\sigma_{r}^{2}\left(\bX^{\natural}\right)}{\left\Vert \bX^{\natural} \right\Vert_{\F}} \le \frac{1}{24} \frac{\sigma_{r}^{2}\left(\bX^{\natural}\right)}{\left\Vert \bX^{\natural} \right\Vert_{\F}}, \label{equ_loo_induc_s1_setc}
\end{align}
where \eqref{equ_loo_induc_s1_invokehypo} follows from the induction hypotheses \eqref{equ_inductive_att_c1} and \eqref{equ_inductive_att_c3}, and \eqref{equ_loo_induc_s1_setc} follows  as long as $C_{1}+C_{3} \le \frac{1}{24}$.  Further, for all $1\leq l \leq m$, by the induction hypothesis \eqref{equ_inductive_att_c3} and \eqref{equ_inductive_att_c2},
\begin{align*}
  \left\Vert \ba_{l}^{\top}\left(\bX_{t}^{(l)}(\tau) - \bX^{\natural}\right) \right\Vert_{2} 
& \le ( 1-\tau)  \left\Vert \ba_{l}^{\top}\left( \bX_{t}^{(l)}\bR_{t}^{(l)} - \bX_{t}\bQ_{t}  \right) \right\Vert_{2}  +  \left\Vert \ba_{l}^{\top}\left( \bX_{t}\bQ_{t} - \bX^{\natural} \right) \right\Vert_{2} \\
& \le   \left\Vert \ba_{l}\right\Vert_{2} \left\Vert \bX_{t}^{(l)}\bR_{t}^{(l)} - \bX_{t}\bQ_{t} \right\Vert + C_{2} \sqrt{\log{n}}  \cdot \frac{\sigma_{r}^{2}\left(\bX^{\natural}\right)}{\left\Vert \bX^{\natural} \right\Vert_{\F}}\\
& \le \sqrt{6n} C_{3} \sqrt{\frac{\log{n}}{n}} \cdot  \frac{\sigma_{r}^{2}\left(\bX^{\natural}\right)}{\kappa \left\Vert \bX^{\natural} \right\Vert_{\F}}+ C_{2} \sqrt{\log{n}}  \cdot \frac{\sigma_{r}^{2}\left(\bX^{\natural}\right)}{\left\Vert \bX^{\natural} \right\Vert_{\F}}\\
&\le \left(\sqrt{6}C_{3} + C_{2}\right) \sqrt{\log{n}}  \cdot \frac{\sigma_{r}^{2}\left(\bX^{\natural}\right)}{\left\Vert \bX^{\natural} \right\Vert_{\F}} \le \frac{1}{24} \sqrt{\log{n}}  \cdot \frac{\sigma_{r}^{2}\left(\bX^{\natural}\right)}{\left\Vert \bX^{\natural} \right\Vert_{\F}},
\end{align*}
as long as $ \sqrt{6}C_{3} + C_{2} \le \frac{1}{24}$. Therefore, Lemma~\ref{lemma_restrict_concen_hessian_neighbor} holds for $\bX_{t}^{(l)}(\tau)$, and similar to Appendix~\ref{proof_lemma_induction}, \eqref{bound_St1} can be further bounded by  
\begin{equation}\label{equ_induc_leaveapp_s1}
\left\Vert \bS_{t,1}^{(l)}\right\Vert_{\F} \le \left(1 - 0.513 \sigma_{r}^{2}\left(\bX^{\natural}\right) \mu \right) \left\Vert \bX_{t}\bQ_{t} - \bX_{t}^{(l)}\bR_{t}^{(l)}\right\Vert_{\F}
\end{equation}
as long as  $\mu \le \frac{1.026 \sigma_{r}^{2}\left(\bX^{\natural}\right)}{\left(1.5 \sigma_{r}^{2}\left(\bX^{\natural}\right) \log{n}    + 6\left\Vert\bX^{\natural}\right\Vert_{\F}^{2}\right)^{2}}$. Consequently, combining \eqref{equ_induc_leaveapp_s2} and \eqref{equ_induc_leaveapp_s1}, we can get 
\begin{align}
\left\Vert \bX_{t+1}\bQ_{t+1} - \bX_{t+1}^{(l)}\bR_{t+1}^{(l)}\right\Vert_{\F} 
& \le \left\Vert \bS_{t,1}^{(l)}\right\Vert_{\F} + \left\Vert \bS_{t,2}^{(l)}\right\Vert_{\F} \nonumber\\
& \le \left(1 - 0.513 \sigma_{r}^{2}\left(\bX^{\natural}\right) \mu \right) \left\Vert \bX_{t}\bQ_{t} - \bX_{t}^{(l)}\bR_{t}^{(l)}\right\Vert_{\F} \nonumber\\
& \quad +  \sqrt{6}C_4^3 \left(1 - 0.5 \sigma_{r}^{2}\left(\bX^{\natural}\right) \mu \right)^{t} \cdot \mu \frac{\sqrt{n}\cdot\left(\log{n}\right)^{3/2}}{m} \sigma_{r}^{2}\left(\bX^{\natural}\right) \big\Vert\bX^{\natural} \big\Vert_{\F}  \nonumber\\
& \le C_{3} \left(1 - 0.5 \sigma_{r}^{2}\left(\bX^{\natural}\right) \mu \right)^{t+1} \sqrt{\frac{\log{n}}{n}}  \cdot \frac{\sigma_{r}^{2}\left(\bX^{\natural}\right)}{\kappa \left\Vert\bX^{\natural} \right\Vert_{\F}}, \label{equ_loo_induc_finalbound_setm}
\end{align}
where \eqref{equ_loo_induc_finalbound_setm} follows from the induction hypothesis \eqref{equ_inductive_att_c3}, as long as $m\ge c \kappa \frac{\left\Vert\bX^{\natural} \right\Vert_{\F}^{2}}{\sigma_{r}^{2}\left(\bX^{\natural}\right)} n\log{n}$ for some large enough constant $c>0$.

\section{Proof of Lemma~\ref{lemma:incoherence_induction}} \label{proof_incoherence_induction}

For any $1\leq l\leq m$, by the statistical independence of $\ba_l$ and $\bX_{t+1}^{(l)}$ and by Lemma~\ref{lemma_a_log_tight_concen}, we have
\begin{equation*}
\left\Vert \ba_{l}^{\top}\left( \bX_{t+1}^{(l)}\bQ_{t+1}^{(l)} - \bX^{\natural}\right) \right\Vert_{2}  \leq 5.86 \sqrt{\log{n}} \left\Vert  \bX_{t+1}^{(l)} \bQ_{t+1}^{(l)} - \bX^{\natural}  \right\Vert_{\F} .
\end{equation*}
Since following Lemma~\ref{lemma:lemma_induction}, 
\begin{align*}
\left\Vert \bX_{t+1}\bQ_{t+1} - \bX^{\natural} \right\Vert \left\Vert  \bX^{\natural} \right\Vert 
& \le \left\Vert \bX_{t+1}\bQ_{t+1} - \bX^{\natural} \right\Vert_{\F} \left\Vert  \bX^{\natural} \right\Vert \\
& \le C_{1} \left(1 - 0.513 \sigma_{r}^{2}\left(\bX^{\natural}\right) \mu \right)^{t+1}  \cdot  \frac{\sigma_{r}^{2}\left(\bX^{\natural}\right)}{\left\Vert \bX^{\natural} \right\Vert_{\F}} \cdot \left\Vert  \bX^{\natural} \right\Vert \\
& \le \frac{1}{2} \sigma_{r}^{2}\left(\bX^{\natural}\right),
\end{align*}
as long as $C_{1} \le \frac{1}{2}$, and following Lemma~\ref{lemma:proximity},
\begin{align*}
\left\Vert \bX_{t+1}\bQ_{t+1} -  \bX_{t+1}^{(l)}\bR_{t+1}^{(l)}  \right\Vert \left\Vert \bX^{\natural} \right\Vert
& \le \left\Vert \bX_{t+1}\bQ_{t+1} -  \bX_{t+1}^{(l)}\bR_{t+1}^{(l)}  \right\Vert_{\F} \left\Vert \bX^{\natural} \right\Vert\\
& \le C_{3} \left(1 - 0.5 \sigma_{r}^{2}\left(\bX^{\natural}\right) \mu \right)^{t+1}  \sqrt{\frac{\log{n}}{n}} \cdot  \frac{\sigma_{r}^{2}\left(\bX^{\natural}\right)}{\kappa \left\Vert \bX^{\natural} \right\Vert_{\F}}  \cdot \left\Vert \bX^{\natural} \right\Vert\\
& \le \frac{1}{4} \sigma_{r}^{2}\left(\bX^{\natural}\right),
\end{align*}
as long as $C_{3} \le \frac{1}{4}$, we can invoke Lemma 37 in \cite{ma2017implicit} and get 
\begin{align*}
\left\Vert \bX_{t+1}\bQ_{t+1}  - \bX_{t+1}^{(l)}\bQ_{t+1}^{(l)}  \right\Vert_{\F} \le 5 \kappa \left\Vert \bX_{t+1}\bQ_{t+1}  - \bX_{t+1}^{(l)}\bR_{t+1}^{(l)}  \right\Vert_{\F}.
\end{align*}

Further, by the triangle inequality, Lemma~\ref{lemma_a_sqrtroot_concen}, Lemma~\ref{lemma:proximity} and Lemma~\ref{lemma:lemma_induction}, we can deduce that 
\begin{align}
& \left\Vert \ba_{l}^{\top} \left( \bX_{t+1}\bQ_{t+1} - \bX^{\natural} \right) \right\Vert_{2} 
 \le  \left\Vert \ba_{l}^{\top} \left( \bX_{t+1}\bQ_{t+1} - \bX_{t+1}^{(l)}\bQ_{t+1}^{(l)} \right) \right\Vert_{2} +  \left\Vert \ba_{l}^{\top}\left( \bX_{t+1}^{(l)}\bQ_{t+1}^{(l)} - \bX^{\natural}\right) \right\Vert_{2} \nonumber\\
& \le  \left\Vert \ba_{l}\right\Vert_{2} \left\Vert \bX_{t+1}\bQ_{t+1} - \bX_{t+1}^{(l)}\bQ_{t+1}^{(l)}  \right\Vert + 5.86 \sqrt{\log{n}} \left\Vert  \bX_{t+1}^{(l)} \bQ_{t+1}^{(l)} - \bX^{\natural}  \right\Vert_{\F} \nonumber\\
& \le \sqrt{6n}   \left\Vert \bX_{t+1}\bQ_{t+1} - \bX_{t+1}^{(l)}\bQ_{t+1}^{(l)}  \right\Vert + 5.86 \sqrt{\log{n}}  \left\Vert \bX_{t+1}\bQ_{t+1} -  \bX_{t+1}^{(l)}\bQ_{t+1}^{(l)} \right\Vert_{\F} \nonumber\\
&\quad + 5.86 \sqrt{\log{n}} \left\Vert \bX_{t+1}\bQ_{t+1} - \bX^{\natural} \right\Vert_{\F} \nonumber\\
& \le \left(\sqrt{6n} + 5.86 \sqrt{\log{n}} \right)  \left\Vert \bX_{t+1}\bQ_{t+1} -  \bX_{t+1}^{(l)}\bQ_{t+1}^{(l)} \right\Vert_{\F} + 5.86 \sqrt{\log{n}}  \left\Vert \bX_{t+1}\bQ_{t+1} - \bX^{\natural} \right\Vert_{\F} \nonumber\\
& \le 5 \left(\sqrt{6n} + 5.86 \sqrt{\log{n}} \right) \kappa  \left\Vert \bX_{t+1}\bQ_{t+1} -  \bX_{t+1}^{(l)}\bR_{t+1}^{(l)} \right\Vert_{\F} + 5.86 \sqrt{\log{n}}  \left\Vert \bX_{t+1}\bQ_{t+1} - \bX^{\natural} \right\Vert_{\F} \nonumber\\
& \le 5 \left( \sqrt{6n}  + 5.86 \sqrt{\log{n}}  \right)  \kappa  \cdot C_{3} \left(1 - 0.5 \sigma_{r}^{2}\left(\bX^{\natural}\right) \mu \right)^{t+1}  \sqrt{\frac{\log{n}}{n}} \cdot  \frac{\sigma_{r}^{2}\left(\bX^{\natural}\right)}{\kappa \left\Vert \bX^{\natural} \right\Vert_{\F}} \nonumber\\
& \quad +  5.86 \sqrt{\log{n}}  \cdot C_{1} \left(1 - 0.513 \sigma_{r}^{2}\left(\bX^{\natural}\right) \mu \right)^{t+1}  \cdot  \frac{\sigma_{r}^{2}\left(\bX^{\natural}\right)}{\left\Vert \bX^{\natural} \right\Vert_{\F}} \nonumber \\
& \le \left( 5\sqrt{6}C_{3} +  5.86 C_{1} + 29.3 C_{3}  \sqrt{\frac{\log{n}}{n}} \right)   \left(1 - 0.5 \sigma_{r}^{2}\left(\bX^{\natural}\right) \mu \right)^{t+1}  \sqrt{\log{n}} \cdot \frac{\sigma_{r}^{2}\left(\bX^{\natural}\right)}{\left\Vert \bX^{\natural} \right\Vert_{\F}} \nonumber\\
&\le  C_{2} \left(1 - 0.5 \sigma_{r}^{2}\left(\bX^{\natural}\right) \mu \right)^{t+1} \sqrt{\log{n}} \cdot  \frac{\sigma_{r}^{2}\left(\bX^{\natural}\right)}{\left\Vert \bX^{\natural} \right\Vert_{\F}},\nonumber
\end{align}
where the last line follows as long as $5\sqrt{6}C_{3} +  5.86 C_{1} + 29.3 C_{3} \le  C_{2}$. The proof is then finished by applying the union bound for all $1\leq l\leq m$.



\section{Proof of Lemma~\ref{lemma:initialization}}\label{proof_lemma_initialization}

Define 
\begin{align*}
\boldsymbol{\Sigma}_{0} &= \mathrm{diag}\left\{ \lambda_{1}\left(\bY\right), \lambda_{2}\left(\bY\right), \cdots , \lambda_{r}\left(\bY\right) \right\} = \boldsymbol{\Lambda}_{0} + \lambda\bI \\ \boldsymbol{\Sigma}_{0}^{(l)} &= \mathrm{diag} \left\{ \lambda_{1}\left(\bY^{(l)}\right), \lambda_{2}\left(\bY^{(l)}\right),  \cdots , \lambda_{r}\left(\bY^{(l)}\right) \right\} = \boldsymbol{\Lambda}_{0}^{(l)} + \lambda^{(l)}\bI, \quad 1\leq l\leq m,
\end{align*}
then by definition we have $\bY\bZ_{0} = \bZ_{0}\boldsymbol{\Sigma}_{0}$, $\bY^{(l)}\bZ_{0}^{(l)} = \bZ_{0}^{(l)}\boldsymbol{\Sigma}_{0}^{(l)}$, and 
\begin{equation}\label{eq:one_sample}
\boldsymbol{\Sigma}_{0}\bZ_{0}^{\top}\bZ_{0}^{(l)} - \bZ_{0}^{\top}\bZ_{0}^{(l)}\boldsymbol{\Sigma}_{0}^{(l)} = \frac{1}{2m} y_{l}\bZ_{0}^{\top}\ba_{l}\ba_{l}^{\top}\bZ_{0}^{(l)}.
\end{equation} 
Moreover, let $\bZ_{0,c}$ and $\bZ_{0,c}^{(l)}$ be the complement matrices of $\bZ_{0}$ and $\bZ_{0}^{(l)}$, respectively, such that both $\left[ \bZ_{0}, \bZ_{0,c} \right]$ and $\left[ \bZ_{0}^{(l)}, \bZ_{0,c}^{(l)} \right]$ are orthonormal matrices. Below we will prove the induction hypotheses \eqref{eq:induction_hyp} in the base case when $t=0$ one by one.

\subsection{Proof of \eqref{equ_inductive_att_c1} }


From Lemma~\ref{lemma_fronorm_lowbound}, we have 
\begin{align}
\left\Vert \bX_{0}\bQ_{0} - \bX^{\natural}\right\Vert_{\F}  & \le \frac{1}{\sqrt{2\left(\sqrt{2}-1\right)} \sigma_{r} \left( \bX^{\natural} \right) } \left\Vert \bX_{0}\bX_{0}^{\top} - \bX^{\natural}\bX^{\natural \top} \right\Vert_{\F} \nonumber\\
& = \frac{1}{\sqrt{2\left(\sqrt{2}-1\right)} \sigma_{r} \left( \bX^{\natural} \right) } \left\Vert \bZ_{0}\boldsymbol{\Lambda}_{0}\bZ_{0}^{\top} - \bX^{\natural}\bX^{\natural \top} \right\Vert_{\F}  \nonumber\\
& \le \frac{\sqrt{r}}{\sqrt{2\left(\sqrt{2}-1\right)} \sigma_{r} \left( \bX^{\natural} \right) } \left\Vert \bZ_{0}\boldsymbol{\Sigma}_{0}\bZ_{0}^{\top} - \bX^{\natural}\bX^{\natural \top} - \lambda \bZ_{0}\bZ_{0}^{\top} \right\Vert . \label{eq_fronorm_lowerbound}
\end{align}
The last term in \eqref{eq_fronorm_lowerbound} can be further bounded as
\begin{align}
& \left\Vert \bZ_{0}\boldsymbol{\Sigma}_{0}\bZ_{0}^{\top} - \bX^{\natural}\bX^{\natural \top} - \lambda \bZ_{0}\bZ_{0}^{\top} \right\Vert \nonumber \\
& \le   \left\Vert \bY -  \frac{1}{2} \Vert \bX^{\natural}  \Vert_{\F}^{2}\bI - \bX^{\natural}\bX^{\natural \top}  \right\Vert  
+ \left\Vert \bZ_{0}\boldsymbol{\Sigma}_{0}\bZ_{0}^{\top}  -\bY  + \frac{1}{2} \Vert \bX^{\natural} \Vert_{\F}^{2}\bZ_{0,c}\bZ_{0,c}^{\top}  \right\Vert + \left\Vert \frac{1}{2} \Vert \bX^{\natural}  \Vert_{\F}^{2}\bZ_{0}\bZ_{0}^{\top} - \lambda \bZ_{0}\bZ_{0}^{\top} \right\Vert   \nonumber\\
& \le    \delta \left\Vert \bX^{\natural}\right\Vert_{\F}^{2} +  \delta \left\Vert \bX^{\natural}\right\Vert_{\F}^{2} + \delta \left\Vert \bX^{\natural} \right\Vert_{\F}^{2}    = 3\delta  \left\Vert \bX^{\natural}\right\Vert_{\F}^{2}, \label{equ_initial_spectral_method_dis}
\end{align}
where \eqref{equ_initial_spectral_method_dis} follows from
\begin{equation*} 
\left\Vert \bY - \mathbb{E}[\bY]\right\Vert = \left\Vert \bY -  \frac{1}{2}\left\Vert \bX^{\natural}\right\Vert_{\F}^{2}\bI - \bX^{\natural}\bX^{\natural \top} \right\Vert \le \delta \left\Vert \bX^{\natural} \right\Vert_{\F}^{2}
\end{equation*}
via Lemma~\ref{lemma_intial_weighted_mat_concen}, the Weyl's inequality, and 
\begin{equation*}
\left\vert \lambda - \mathbb{E}\left[\lambda\right] \right\vert = \left\vert \lambda - \frac{1}{2} \left\Vert\bX^{\natural}\right\Vert_{\F}^{2} \right\vert \le \delta \left\Vert \bX^{\natural}\right\Vert_{\F}^{2}
\end{equation*}
via Lemma~\ref{lemma_aaT_operator_concentration}. Plugging \eqref{equ_initial_spectral_method_dis} into \eqref{eq_fronorm_lowerbound}, we have
\begin{align}
\left\Vert \bX_{0}\bQ_{0} - \bX^{\natural}\right\Vert_{\F}  & \le \frac{3}{\sqrt{2\left(\sqrt{2}-1\right)}} \cdot \frac{\delta \sqrt{r} \left\Vert \bX^{\natural}\right\Vert_{\F}^{2}}{\sigma_{r}\left( \bX^{\natural} \right) },\nonumber
\end{align}
Setting $\delta = c \frac{\sigma_{r}^{3}\left( \bX^{\natural} \right) }{\sqrt{r}\left\Vert \bX^{\natural} \right\Vert_{\F}^{3}}$ for a sufficiently small constant $c$, i.e. $m\gtrsim \frac{\left\Vert \bX^{\natural} \right\Vert_{\F}^{6}}{\sigma_{r}^{6}\left( \bX^{\natural} \right) } nr\log n $, we get $\left\Vert \bX_{0}\bQ_{0} - \bX^{\natural}\right\Vert_{\F} \allowbreak \le C_1 \frac{\sigma_{r}^{2}\left(\bX^{\natural}\right)}{\left\Vert\bX^{\natural}\right\Vert_{\F}}$. Following similar procedures, we can also show $ \left\Vert \bX_{0}^{(l)}\bQ_{0}^{(l)} - \bX^{\natural}\right\Vert_{\F} \allowbreak \le C_1 \frac{\sigma_{r}^{2}\left(\bX^{\natural}\right)}{\left\Vert \bX^{\natural}\right\Vert_{\F}}$.

\subsection{Proof of  \eqref{equ_inductive_att_c3} }

Following Weyl's inequality, by \eqref{equ_inductive_att_c1}, we have 
	$$\left\vert \sigma_{i}\left(\bX_{0}\right) - \sigma_{i}\left(\bX^{\natural}\right) \right\vert \le C_1 \frac{\sigma_{r}^{2}\left(\bX^{\natural}\right)}{\left\Vert\bX^{\natural}\right\Vert_{\F}},$$ 
	and similarly, $\left\vert \sigma_{i}\left(\bX_{0}^{(l)}\right) - \sigma_{i}\left(\bX^{\natural}\right) \right\vert \le C_1 \frac{\sigma_{r}^{2}\left(\bX^{\natural}\right)}{\left\Vert \bX^{\natural} \right\Vert_{\F}}$, for $i=1,\cdots,r$. 
Combined with Lemma~\ref{lemma_fronorm_lowbound}, there exists some constant $c$ such that
\begin{align}
&\left\Vert \bX_{0}\bQ_{0} - \bX_{0}^{(l)}\bR_{0}^{(l)}\right\Vert_{\F}  \le \frac{1}{\sqrt{2\left(\sqrt{2}-1\right)} \sigma_{r} \left( \bX_{0} \right) } \left\Vert \bX_{0}\bX_{0}^{\top} - \bX_{0}^{(l)}\bX_{0}^{(l) \top} \right\Vert_{\F} \nonumber\\
& \le  \frac{c}{\sigma_{r} \left( \bX^{\natural} \right) } \left\Vert \bX_{0}\bX_{0}^{\top} - \bX_{0}^{(l)}\bX_{0}^{(l) \top} \right\Vert_{\F} \nonumber\\
& = \frac{c}{\sigma_{r} \left( \bX^{\natural} \right) } \left\Vert \bZ_{0}\boldsymbol{\Lambda}_{0}\bZ_{0}^{\top} - \bZ_{0}^{(l)}\boldsymbol{\Lambda}_{0}^{(l)}\bZ_{0}^{(l) \top} \right\Vert_{\F} \nonumber\\
& = \frac{c}{\sigma_{r} \left( \bX^{\natural} \right) } \left\Vert \bZ_{0}\boldsymbol{\Sigma}_{0}\bZ_{0}^{\top} - \bZ_{0}^{(l)}\boldsymbol{\Sigma}_{0}^{(l)}\bZ_{0}^{(l) \top} - \lambda\bZ_{0}\bZ_{0}^{\top} + \lambda^{(l)} \bZ_{0}^{(l)}\bZ_{0}^{(l) \top}  \right\Vert_{\F} \nonumber\\
&\le \frac{c}{\sigma_{r} \left( \bX^{\natural} \right) } \left\Vert \bZ_{0}\boldsymbol{\Sigma}_{0}\bZ_{0}^{\top} - \bZ_{0}^{(l)}\boldsymbol{\Sigma}_{0}^{(l)}\bZ_{0}^{(l) \top} \right\Vert_{\F} +  \frac{c}{\sigma_{r} \left( \bX^{\natural} \right) } \left\Vert  \lambda \bZ_{0}\bZ_{0}^{\top} -  \lambda^{(l)}\bZ_{0}^{(l)}\bZ_{0}^{(l) \top}  \right\Vert_{\F}. \label{equ_initial_leaveoneout_approx_bound}
\end{align}
We will bound each term in \eqref{equ_initial_leaveoneout_approx_bound}, respectively. For the first term, we have
\begin{align}
&\left\Vert \bZ_{0}\boldsymbol{\Sigma}_{0}\bZ_{0}^{\top} - \bZ_{0}^{(l)}\boldsymbol{\Sigma}_{0}^{(l)}\bZ_{0}^{(l) \top} \right\Vert_{\F}  = \left\Vert \left[ \bZ_{0}\boldsymbol{\Sigma}_{0}\bZ_{0}^{\top}\bZ_{0}^{(l)} - \bZ_{0}^{(l)}\boldsymbol{\Sigma}_{0}^{(l)}, \bZ_{0}\boldsymbol{\Sigma}_{0}\bZ_{0}^{\top} \bZ_{0,c}^{(l)}  \right]  \right\Vert_{\F} \nonumber\\
& \le  \left\Vert \bZ_{0}\boldsymbol{\Sigma}_{0}\bZ_{0}^{\top} \bZ_{0}^{(l)} - \bZ_{0}^{(l)}\boldsymbol{\Sigma}_{0}^{(l)} \right\Vert_{\F} + \left\Vert  \bZ_{0}\boldsymbol{\Sigma}_{0}\bZ_{0}^{\top} \bZ_{0,c}^{(l)} \right\Vert_{\F} \nonumber\\
& \le \left\Vert \bZ_{0}\boldsymbol{\Sigma}_{0}\bZ_{0}^{\top} \bZ_{0}^{(l)} - \bZ_{0}\bZ_{0}^{\top}\bZ_{0}^{(l)}\boldsymbol{\Sigma}_{0}^{(l)} \right\Vert_{\F} + \left\Vert \bZ_{0}\bZ_{0}^{\top}\bZ_{0}^{(l)}\boldsymbol{\Sigma}_{0}^{(l)} - \bZ_{0}^{(l)}\boldsymbol{\Sigma}_{0}^{(l)} \right\Vert_{\F} + \left\Vert\bY\right\Vert \left\Vert \bZ_{0}^{\top} \bZ_{0,c}^{(l)} \right\Vert_{\F} \nonumber\\
& \le  \left\Vert \bZ_{0} \cdot  \frac{1}{2m} y_{l}\bZ_{0}^{\top}\ba_{l}\ba_{l}^{\top}\bZ_{0}^{(l)} \right\Vert_{\F} + \left\Vert  \bZ_{0}\bZ_{0}^{\top} - \bZ_{0}^{(l)}\bZ_{0}^{(l) \top} \right\Vert_{\F} \big\Vert\bY^{(l)}\big\Vert  + \left\Vert\bY\right\Vert \left\Vert \bZ_{0}^{\top} \bZ_{0,c}^{(l)} \right\Vert_{\F} ,  \label{equ_initial_leave_approx_first_angle} 
\end{align}
where the last line follows from \eqref{eq:one_sample}. Note that the first term in \eqref{equ_initial_leave_approx_first_angle} can be bounded as
\begin{align}
\left\Vert \bZ_{0} \cdot  \frac{1}{2m} y_{l}\bZ_{0}^{\top}\ba_{l}\ba_{l}^{\top}\bZ_{0}^{(l)} \right\Vert_{\F} & \leq  \frac{1}{2m} \left\Vert \ba_{l}^{\top}\bX^{\natural} \right\Vert_{2}^{2}  \left\Vert \ba_{l}^{\top}\bZ_{0}^{(l)} \right\Vert_{2} \left\Vert \ba_{l}^{\top}\bZ_{0} \right\Vert_{2}  \nonumber \\
& \lesssim  \frac{\sqrt{n} \cdot \left(\log{n}\right)^{3/2} \cdot \sqrt{r}}{m} \big\Vert\bX^{\natural}\big\Vert_{\F}^{2} , \label{eq:bound_first_term}
 \end{align}
which follows Lemma~\ref{lemma_a_sqrtroot_concen} and Lemma~\ref{lemma_a_log_tight_concen}. The second term in \eqref{equ_initial_leave_approx_first_angle} can be bounded as
 \begin{align*}
 \left\Vert   \bZ_{0}\bZ_{0}^{\top} - \bZ_{0}^{(l)}\bZ_{0}^{(l) \top}  \right\Vert_{\F} & =    \left\Vert \bZ_{0}\left(\bZ_{0} - \bZ_{0}^{(l)}\bT_{0}^{(l)}\right)^{\top} +  \left(\bZ_{0} - \bZ_{0}^{(l)}\bT_{0}^{(l)}\right) \left(\bZ_{0}^{(l)}\bT_{0}^{(l)}\right)^{\top}\right\Vert_{\F} \\
 & \leq 2   \left\Vert \bZ_{0} - \bZ_{0}^{(l)}\bT_{0}^{(l)} \right\Vert_{\F}    \leq 2\sqrt{2}  \left\Vert \bZ_{0}^{\top} \bZ_{0,c}^{(l)} \right\Vert_{\F} 
 \end{align*}
where $\bT_{t}^{(l)}  = \argmin_{\bP\in\mathcal{O}^{r\times r} } \left\Vert \bZ_{t} - \bZ_{t}^{(l)}\bP\right\Vert_{\F}$, and the last line follows from the fact $\left\Vert \bZ_{0} - \bZ_{0}^{(l)}\bT_{0}^{(l)} \right\Vert_{\F}  \le \sqrt{2}  \left\Vert \bZ_{0}^{\top} \bZ_{0,c}^{(l)} \right\Vert_{\F} $ \cite{yu2014useful}. Putting this together with the third term in \eqref{equ_initial_leave_approx_first_angle}, we have
\begin{align}
\left\Vert  \bZ_{0}\bZ_{0}^{\top} - \bZ_{0}^{(l)}\bZ_{0}^{(l) \top} \right\Vert_{\F} \big\Vert\bY^{(l)}\big\Vert & + \left\Vert\bY\right\Vert \left\Vert \bZ_{0}^{\top} \bZ_{0,c}^{(l)} \right\Vert_{\F}  \le  \left(2\sqrt{2}\left\Vert\bY^{(l)}\right\Vert + \left\Vert\bY \right\Vert \right)  \left\Vert \bZ_{0}^{\top} \bZ_{0,c}^{(l)} \right\Vert_{\F} \nonumber \\
& \lesssim \left\Vert\bX^{\natural}\right\Vert_{\F}^{2} \frac{\left\Vert \left(\frac{1}{m}y_{l}\ba_{l}\ba_{l}^{\top}\right)\bZ_{0}^{(l)} \right\Vert_{\F}}{\sigma_{r}^{2}\left(\bX^{\natural}\right)} \label{equ_intial_leave_approx_first_daviskahan}\\
& \lesssim \frac{ \left\Vert \ba_{l}^{\top}\bX^{\natural} \right\Vert_{2}^{2}  \left\Vert \ba_{l}^{\top}\bZ_{0}^{(l)} \right\Vert_{2} \left\Vert \ba_{l} \right\Vert_{2}}{m}\frac{\left\Vert\bX^{\natural}\right\Vert_{\F}^{2}}{\sigma_{r}^{2}\left(\bX^{\natural}\right)} \nonumber\\
& \lesssim \frac{\sqrt{n} \cdot \left(\log{n}\right)^{3/2} \cdot \sqrt{r}}{m}\frac{ \left\Vert\bX^{\natural}\right\Vert_{\F}^{4}}{\sigma_{r}^{2}\left(\bX^{\natural}\right)}, \label{equ_initial_leave_approx_first_conbound}
\end{align}
where \eqref{equ_intial_leave_approx_first_daviskahan} follows from Lemma~\ref{lemma_intial_weighted_mat_concen} and the Davis-Kahan $\sin\Theta$ theorem \cite{davis1970rotation}, and \eqref{equ_initial_leave_approx_first_conbound} follows from Lemma~\ref{lemma_a_sqrtroot_concen} and Lemma~\ref{lemma_a_log_tight_concen}. 

For the second term in \eqref{equ_initial_leaveoneout_approx_bound}, we have
\begin{align}
\left\Vert  \lambda \bZ_{0}\bZ_{0}^{\top} -  \lambda^{(l)}\bZ_{0}^{(l)}\bZ_{0}^{(l) \top}  \right\Vert_{\F}
& = \left\Vert \lambda \bZ_{0}\bZ_{0}^{\top} -  \lambda \bZ_{0}^{(l)}\bZ_{0}^{(l) \top} + \lambda \bZ_{0}^{(l)}\bZ_{0}^{(l) \top} -  \lambda^{(l)}\bZ_{0}^{(l)}\bZ_{0}^{(l) \top}  \right\Vert_{\F} \nonumber\\
& \le  \lambda \cdot   \left\Vert \bZ_{0}\bZ_{0}^{\top} -  \bZ_{0}^{(l)}\bZ_{0}^{(l) \top}  \right\Vert_{\F}  +  \left|\lambda -  \lambda^{(l)}\right| \cdot \left\Vert  \bZ_{0}^{(l)}\bZ_{0}^{(l) \top} \right\Vert_{\F} \nonumber\\
& \lesssim \frac{\sqrt{n} \cdot \left(\log{n}\right)^{3/2} \cdot \sqrt{r}}{m}\frac{ \left\Vert\bX^{\natural}\right\Vert_{\F}^{4}}{\sigma_{r}^{2}\left(\bX^{\natural}\right)} +  \frac{y_l }{2m} \sqrt{r}   \label{equ_intial_leave_approx_second_lambda}\\
& \lesssim \frac{\sqrt{n} \cdot \left(\log{n}\right)^{3/2} \cdot \sqrt{r}}{m}\frac{ \left\Vert\bX^{\natural}\right\Vert_{\F}^{4}}{\sigma_{r}^{2}\left(\bX^{\natural}\right)}+  \frac{\sqrt{r} \cdot \log{n}}{m} \left\Vert \bX^{\natural} \right\Vert_{\F}^{2} ,\label{equ_intial_leave_approx_second}
\end{align}
where the first term of \eqref{equ_intial_leave_approx_second_lambda} is bounded similarly as \eqref{equ_initial_leave_approx_first_conbound}, and \eqref{equ_intial_leave_approx_second} follows from Lemma~\ref{lemma_a_log_tight_concen}. Combining \eqref{eq:bound_first_term}, \eqref{equ_initial_leave_approx_first_conbound}, and \eqref{equ_intial_leave_approx_second}, we obtain
\begin{align*}
\left\Vert \bX_{0}\bQ_{0} - \bX_{0}^{(l)}\bR_{0}^{(l)}\right\Vert_{\F} 
\lesssim  \frac{\sqrt{n} \cdot \left(\log{n}\right)^{3/2} \cdot \sqrt{r}}{m}\frac{ \left\Vert\bX^{\natural}\right\Vert_{\F}^{4}}{\sigma_{r}^{3}\left(\bX^{\natural}\right)}
\lesssim \sqrt{\frac{\log{n}}{n}}  \cdot  \frac{\sigma_{r}^{2}\left(\bX^{\natural}\right)}{\kappa \left\Vert\bX^{\natural}\right\Vert_{\F}},
\end{align*}
where the last inequality holds as long as $m \gtrsim \kappa \frac{\left\Vert\bX^{\natural}\right\Vert_{\F}^{5}}{\sigma_{r}^{5}\left(\bX^{\natural}\right)} n\sqrt{r}\log{n} = O(nr^3\log n)$.

\subsection{Proof of \eqref{equ_inductive_att_c2}}

Since from \eqref{equ_inductive_att_c1} and \eqref{equ_inductive_att_c3},
\begin{align*}
\left\Vert  \bX_{0}\bQ_{0} - \bX^{\natural} \right\Vert \left\Vert \bX^{\natural} \right\Vert 
\le \left\Vert  \bX_{0}\bQ_{0} - \bX^{\natural} \right\Vert_{\F} \left\Vert \bX^{\natural} \right\Vert 
\lesssim \sigma_{r}^{2}\left(\bX^{\natural}\right),
\end{align*}
and for every $1\leq l\leq m$, 
\begin{align*}
\left\Vert \bX_{0}\bQ_{0} - \bX_{0}^{(l)}\bR_{0}^{(l)} \right\Vert \left\Vert \bX^{\natural} \right\Vert 
\le \left\Vert \bX_{0}\bQ_{0} - \bX_{0}^{(l)}\bR_{0}^{(l)}  \right\Vert_{\F} \left\Vert \bX^{\natural} \right\Vert   
\lesssim \sigma_{r}^{2}\left(\bX^{\natural}\right),
\end{align*}
with proper constants, following Lemma 37 in \cite{ma2017implicit}, we have 
\begin{align*}
\left\Vert \bX_{0}\bQ_{0}  - \bX_{0}^{(l)}\bQ_{0}^{(l)}  \right\Vert_{\F} \le 5 \kappa \left\Vert \bX_{0}\bQ_{0}  - \bX_{0}^{(l)}\bR_{0}^{(l)}  \right\Vert_{\F},
\end{align*}
which implies that for every $1\leq l\leq m$ we can get
\begin{align*}
\left\Vert \bX_{0}^{(l)}\bQ_{0}^{(l)} - \bX^{\natural} \right\Vert_{\F}
& \le  \left\Vert \bX_{0}\bQ_{0} - \bX_{0}^{(l)}\bQ_{0}^{(l)} \right\Vert_{\F}  + \left\Vert \bX_{0}\bQ_{0} - \bX^{\natural} \right\Vert_{\F}\\
&  \lesssim \kappa \left\Vert \bX_{0}\bQ_{0}  - \bX_{0}^{(l)}\bR_{0}^{(l)}  \right\Vert_{\F} + \left\Vert \bX_{0}\bQ_{0} - \bX^{\natural} \right\Vert_{\F} \\
& \lesssim \kappa \sqrt{\frac{\log{n}}{n}}  \cdot  \frac{\sigma_{r}^{2}\left(\bX^{\natural}\right)}{\kappa \left\Vert\bX^{\natural}\right\Vert_{\F}} + \frac{\sigma_{r}^{2}\left(\bX^{\natural}\right)}{\left\Vert\bX^{\natural}\right\Vert_{\F}} \\
&\lesssim  \frac{\sigma_{r}^{2}\left(\bX^{\natural}\right)}{\left\Vert\bX^{\natural}\right\Vert_{\F}}.
\end{align*}
This further gives
\begin{align}
&\max_{1\le l\le m} \left\Vert \ba_{l}^{\top} \left( \bX_{0}\bQ_{0} - \bX^{\natural} \right) \right\Vert_{2} \nonumber \\
& \le \max_{1\le l\le m} \left\Vert \ba_{l}^{\top} \left( \bX_{0}\bQ_{0} - \bX_{0}^{(l)}\bQ_{0}^{(l)} \right) \right\Vert_{2} + \max_{1\le l\le m}\left\Vert \ba_{l}^{\top}\left( \bX_{0}^{(l)}\bQ_{0}^{(l)} - \bX^{\natural}\right) \right\Vert_{2} \nonumber\\
& \le \max_{1\le l\le m} \left\Vert \ba_{l}\right\Vert_{2} \left\Vert \bX_{0}\bQ_{0} - \bX_{0}^{(l)}\bQ_{0}^{(l)}  \right\Vert + \max_{1\le l\le m}\left\Vert \ba_{l}^{\top}\left( \bX_{0}^{(l)}\bQ_{0}^{(l)} - \bX^{\natural}\right) \right\Vert_{2} \nonumber \\
& \lesssim \sqrt{n} \cdot  \max_{1\le l\le m} \left\Vert \bX_{0}\bQ_{0} - \bX_{0}^{(l)}\bQ_{0}^{(l)}  \right\Vert + \sqrt{\log{n}} \cdot  \max_{1\le l\le m}\left\Vert \bX_{0}^{(l)}\bQ_{0}^{(l)} - \bX^{\natural} \right\Vert_{2} \label{equ_intial_twoabound} \\
& \lesssim \sqrt{n} \cdot \kappa \max_{1\le l\le m} \left\Vert \bX_{0}\bQ_{0} - \bX_{0}^{(l)}\bR_{0}^{(l)}  \right\Vert + \sqrt{\log{n}} \cdot  \max_{1\le l\le m}\left\Vert \bX_{0}^{(l)}\bQ_{0}^{(l)} - \bX^{\natural} \right\Vert_{2} \nonumber \\
& \lesssim  \sqrt{n} \cdot \kappa \sqrt{\frac{\log{n}}{n}}  \cdot  \frac{\sigma_{r}^{2}\left(\bX^{\natural}\right)}{\kappa \left\Vert\bX^{\natural}\right\Vert_{\F}}  +  \sqrt{\log{n}} \cdot \frac{\sigma_{r}^{2}\left(\bX^{\natural}\right)}{\left\Vert\bX^{\natural}\right\Vert_{\F}} \label{equ_intial_abound_logtight}\\
	& \lesssim  \sqrt{\log{n}} \cdot \frac{\sigma_{r}^{2}\left(\bX^{\natural}\right)}{\left\Vert\bX^{\natural}\right\Vert_{\F}}, \nonumber
\end{align}
where \eqref{equ_intial_twoabound} follows from Lemma~\ref{lemma_a_sqrtroot_concen} and Lemma~\ref{lemma_a_log_tight_concen}, and \eqref{equ_intial_abound_logtight} follows from \eqref{equ_inductive_att_c3}.


\subsection{Finishing the Proof}
The proof of Lemma~\ref{lemma:initialization} is now complete by appropriately adjusting the constants.



\bibliographystyle{IEEEtran} 
\bibliography{bibfileNonconvex}

\begin{thebibliography}{10}
\providecommand{\url}[1]{#1}
\csname url@samestyle\endcsname
\providecommand{\newblock}{\relax}
\providecommand{\bibinfo}[2]{#2}
\providecommand{\BIBentrySTDinterwordspacing}{\spaceskip=0pt\relax}
\providecommand{\BIBentryALTinterwordstretchfactor}{4}
\providecommand{\BIBentryALTinterwordspacing}{\spaceskip=\fontdimen2\font plus
\BIBentryALTinterwordstretchfactor\fontdimen3\font minus
  \fontdimen4\font\relax}
\providecommand{\BIBforeignlanguage}[2]{{%
\expandafter\ifx\csname l@#1\endcsname\relax
\typeout{** WARNING: IEEEtran.bst: No hyphenation pattern has been}%
\typeout{** loaded for the language `#1'. Using the pattern for}%
\typeout{** the default language instead.}%
\else
\language=\csname l@#1\endcsname
\fi
#2}}
\providecommand{\BIBdecl}{\relax}
\BIBdecl

\bibitem{chen2013exact}
Y.~Chen, Y.~Chi, and A.~J. Goldsmith, ``Exact and stable covariance estimation
  from quadratic sampling via convex programming,'' \emph{IEEE Transactions on
  Information Theory}, vol.~61, no.~7, pp. 4034--4059, 2015.

\bibitem{candes2015phase}
E.~J. Cand\`es, X.~Li, and M.~Soltanolkotabi, ``Phase retrieval via {Wirtinger}
  flow: Theory and algorithms,'' \emph{IEEE Transactions on Information
  Theory}, vol.~61, no.~4, pp. 1985--2007, 2015.

\bibitem{chen2014convex}
Y.~Chen, X.~Yi, and C.~Caramanis, ``A convex formulation for mixed regression
  with two components: Minimax optimal rates,'' in \emph{Conference on Learning
  Theory}, 2014, pp. 560--604.

\bibitem{kueng2017low}
R.~Kueng, H.~Rauhut, and U.~Terstiege, ``Low rank matrix recovery from rank one
  measurements,'' \emph{Applied and Computational Harmonic Analysis}, vol.~42,
  no.~1, pp. 88--116, 2017.

\bibitem{tian2012experimental}
L.~Tian, J.~Lee, S.~B. Oh, and G.~Barbastathis, ``Experimental compressive
  phase space tomography,'' \emph{Optics express}, vol.~20, no.~8, pp.
  8296--8308, 2012.

\bibitem{livni2014computational}
R.~Livni, S.~Shalev-Shwartz, and O.~Shamir, ``On the computational efficiency
  of training neural networks,'' in \emph{Advances in Neural Information
  Processing Systems}, 2014, pp. 855--863.

\bibitem{soltanolkotabi2017theoretical}
M.~Soltanolkotabi, A.~Javanmard, and J.~D. Lee, ``Theoretical insights into the
  optimization landscape of over-parameterized shallow neural networks,''
  \emph{arXiv preprint arXiv:1707.04926}, 2017.

\bibitem{soltani2018towards}
M.~Soltani and C.~Hegde, ``Towards provable learning of polynomial neural
  networks using low-rank matrix estimation,'' in \emph{Proceedings of the 21st
  International Conference on Artificial Intelligence and Statistics}, 2018.

\bibitem{zhong2015efficient}
K.~Zhong, P.~Jain, and I.~S. Dhillon, ``Efficient matrix sensing using rank-1
  {G}aussian measurements,'' in \emph{International Conference on Algorithmic
  Learning Theory}.\hskip 1em plus 0.5em minus 0.4em\relax Springer, 2015, pp.
  3--18.

\bibitem{lin2016non}
M.~Lin and J.~Ye, ``A non-convex one-pass framework for generalized
  factorization machine and rank-one matrix sensing,'' in \emph{Advances in
  Neural Information Processing Systems}, 2016, pp. 1633--1641.

\bibitem{sanghavi2017local}
S.~Sanghavi, R.~Ward, and C.~D. White, ``The local convexity of solving systems
  of quadratic equations,'' \emph{Results in Mathematics}, vol.~71, no. 3-4,
  pp. 569--608, 2017.

\bibitem{ChenCandes15solving}
\BIBentryALTinterwordspacing
Y.~Chen and E.~J. Cand\`es, ``Solving random quadratic systems of equations is
  nearly as easy as solving linear systems,'' \emph{Comm. Pure Appl. Math.},
  vol.~70, no.~5, pp. 822--883, 2017. [Online]. Available:
  \url{http://dx.doi.org/10.1002/cpa.21638}
\BIBentrySTDinterwordspacing

\bibitem{chen2015fast}
Y.~Chen and M.~J. Wainwright, ``Fast low-rank estimation by projected gradient
  descent: General statistical and algorithmic guarantees,'' \emph{arXiv
  preprint arXiv:1509.03025}, 2015.

\bibitem{ma2017implicit}
C.~Ma, K.~Wang, Y.~Chi, and Y.~Chen, ``Implicit regularization in nonconvex
  statistical estimation: Gradient descent converges linearly for phase
  retrieval, matrix completion and blind deconvolution,'' \emph{arXiv preprint
  arXiv:1711.10467}, 2017.

\bibitem{zhong2017near}
Y.~Zhong and N.~Boumal, ``Near-optimal bounds for phase synchronization,''
  \emph{arXiv preprint arXiv:1703.06605}, 2017.

\bibitem{chen2017spectral}
Y.~Chen, J.~Fan, C.~Ma, and K.~Wang, ``Spectral method and regularized {MLE}
  are both optimal for top-$ k $ ranking,'' \emph{arXiv preprint
  arXiv:1707.09971}, 2017.

\bibitem{cai2015rop}
T.~Cai and A.~Zhang, ``{ROP}: Matrix recovery via rank-one projections,''
  \emph{The Annals of Statistics}, vol.~43, no.~1, pp. 102--138, 2015.

\bibitem{li2017low}
Y.~Li, Y.~Sun, and Y.~Chi, ``Low-rank positive semidefinite matrix recovery
  from corrupted rank-one measurements,'' \emph{IEEE Transactions on Signal
  Processing}, vol.~65, no.~2, pp. 397--408, 2017.

\bibitem{candes2013phaselift}
E.~J. Candes, T.~Strohmer, and V.~Voroninski, ``Phaselift: Exact and stable
  signal recovery from magnitude measurements via convex programming,''
  \emph{Communications on Pure and Applied Mathematics}, vol.~66, no.~8, pp.
  1241--1274, 2013.

\bibitem{candes2014solving}
E.~J. Cand{\`e}s and X.~Li, ``Solving quadratic equations via phaselift when
  there are about as many equations as unknowns,'' \emph{Foundations of
  Computational Mathematics}, vol.~14, no.~5, pp. 1017--1026, 2014.

\bibitem{demanet2014stable}
L.~Demanet and P.~Hand, ``Stable optimizationless recovery from phaseless
  linear measurements,'' \emph{Journal of Fourier Analysis and Applications},
  vol.~20, no.~1, pp. 199--221, 2014.

\bibitem{waldspurger2015phase}
I.~Waldspurger, A.~d?Aspremont, and S.~Mallat, ``Phase recovery, maxcut and
  complex semidefinite programming,'' \emph{Mathematical Programming}, vol.
  149, no. 1-2, pp. 47--81, 2015.

\bibitem{cai2016optimal}
T.~T. Cai, X.~Li, Z.~Ma \emph{et~al.}, ``Optimal rates of convergence for noisy
  sparse phase retrieval via thresholded {W}irtinger flow,'' \emph{The Annals
  of Statistics}, vol.~44, no.~5, pp. 2221--2251, 2016.

\bibitem{zhang2017reshaped}
H.~Zhang, Y.~Zhou, Y.~Liang, and Y.~Chi, ``A nonconvex approach for phase
  retrieval: Reshaped wirtinger flow and incremental algorithms,''
  \emph{Journal of Machine Learning Research}, 2017.

\bibitem{soltanolkotabi2017structured}
M.~Soltanolkotabi, ``Structured signal recovery from quadratic measurements:
  Breaking sample complexity barriers via nonconvex optimization,'' \emph{arXiv
  preprint arXiv:1702.06175}, 2017.

\bibitem{wang2017solving}
G.~Wang, G.~B. Giannakis, and Y.~C. Eldar, ``Solving systems of random
  quadratic equations via truncated amplitude flow,'' \emph{IEEE Transactions
  on Information Theory}, 2017.

\bibitem{tu2016low}
S.~Tu, R.~Boczar, M.~Simchowitz, M.~Soltanolkotabi, and B.~Recht, ``Low-rank
  solutions of linear matrix equations via procrustes flow,'' in
  \emph{Proceedings of the 33rd International Conference on International
  Conference on Machine Learning-Volume 48}.\hskip 1em plus 0.5em minus
  0.4em\relax JMLR. org, 2016, pp. 964--973.

\bibitem{DBLP:journals/corr/LiLSW16}
\BIBentryALTinterwordspacing
X.~Li, S.~Ling, T.~Strohmer, and K.~Wei, ``Rapid, robust, and reliable blind
  deconvolution via nonconvex optimization,'' \emph{CoRR}, vol. abs/1606.04933,
  2016. [Online]. Available: \url{http://arxiv.org/abs/1606.04933}
\BIBentrySTDinterwordspacing

\bibitem{cai2017spectral}
J.-F. Cai, T.~Wang, and K.~Wei, ``Spectral compressed sensing via projected
  gradient descent,'' \emph{arXiv preprint arXiv:1707.09726}, 2017.

\bibitem{wei2015solving}
K.~Wei, ``Solving systems of phaseless equations via {K}aczmarz methods: A
  proof of concept study,'' \emph{Inverse Problems}, vol.~31, no.~12, p.
  125008, 2015.

\bibitem{tan2017phase}
Y.~S. Tan and R.~Vershynin, ``Phase retrieval via randomized kaczmarz:
  Theoretical guarantees,'' \emph{arXiv preprint arXiv:1706.09993}, 2017.

\bibitem{jeong2017convergence}
H.~Jeong and C.~S. G{\"u}nt{\"u}rk, ``Convergence of the randomized kaczmarz
  method for phase retrieval,'' \emph{arXiv preprint arXiv:1706.10291}, 2017.

\bibitem{ma2018optimization}
J.~Ma, J.~Xu, and A.~Maleki, ``Optimization-based {AMP} for phase retrieval:
  The impact of initialization and $\ell_2$-regularization,'' \emph{arXiv
  preprint arXiv:1801.01170}, 2018.

\bibitem{chi2016kaczmarz}
Y.~Chi and Y.~M. Lu, ``Kaczmarz method for solving quadratic equations,''
  \emph{IEEE Signal Processing Letters}, vol.~23, no.~9, pp. 1183--1187, 2016.

\bibitem{sun2016geometric}
J.~Sun, Q.~Qu, and J.~Wright, ``A geometric analysis of phase retrieval,'' in
  \emph{Information Theory (ISIT), 2016 IEEE International Symposium on}.\hskip
  1em plus 0.5em minus 0.4em\relax IEEE, 2016, pp. 2379--2383.

\bibitem{zhong17a}
K.~Zhong, Z.~Song, P.~Jain, P.~L. Bartlett, and I.~S. Dhillon, ``Recovery
  guarantees for one-hidden-layer neural networks,'' in \emph{Proceedings of
  the 34th International Conference on Machine Learning}, vol.~70, 2017, pp.
  4140--4149.

\bibitem{candes2011tight}
E.~J. Candes and Y.~Plan, ``Tight oracle inequalities for low-rank matrix
  recovery from a minimal number of noisy random measurements,'' \emph{IEEE
  Transactions on Information Theory}, vol.~57, no.~4, pp. 2342--2359, 2011.

\bibitem{bentkus2003inequality}
V.~Bentkus, ``An inequality for tail probabilities of martingales with
  differences bounded from one side,'' \emph{Journal of Theoretical
  Probability}, vol.~16, no.~1, pp. 161--173, 2003.

\bibitem{Vershynin2012}
R.~Vershynin, ``Introduction to the non-asymptotic analysis of random
  matrices,'' \emph{Compressed Sensing, Theory and Applications}, pp. 210 --
  268, 2012.

\bibitem{laurent2000adaptive}
B.~Laurent and P.~Massart, ``Adaptive estimation of a quadratic functional by
  model selection,'' \emph{Annals of Statistics}, pp. 1302--1338, 2000.

\bibitem{schudy2011concentration}
W.~Schudy and M.~Sviridenko, ``Concentration and moment inequalities for
  polynomials of independent random variables,'' \emph{arXiv preprint
  arXiv:1104.4997}, 2011.

\bibitem{ten1977orthogonal}
J.~M. Ten~Berge, ``Orthogonal procrustes rotation for two or more matrices,''
  \emph{Psychometrika}, vol.~42, no.~2, pp. 267--276, 1977.

\bibitem{lang1993real}
S.~Lang, ``Real and functional analysis, volume 142 of graduate texts in
  mathematics,'' \emph{Springer-Verlag, New York,}, vol.~10, pp. 11--13, 1993.

\bibitem{yu2014useful}
Y.~Yu, T.~Wang, and R.~J. Samworth, ``A useful variant of the davis--kahan
  theorem for statisticians,'' \emph{Biometrika}, vol. 102, no.~2, pp.
  315--323, 2014.

\bibitem{davis1970rotation}
C.~Davis and W.~M. Kahan, ``The rotation of eigenvectors by a perturbation.
  iii,'' \emph{SIAM Journal on Numerical Analysis}, vol.~7, no.~1, pp. 1--46,
  1970.

\end{thebibliography}

\end{document}